\definecolor{webgreen}{rgb}{0,.5,0}
\definecolor{webbrown}{rgb}{.8,0,0}
\definecolor{emphcolor}{rgb}{0.95,0.95,0.95}
\ifpdf \hypersetup{pdftex,
            pdfstartview=FitH, 
            bookmarksopen=true,
            bookmarksnumbered=true
} \else \hypersetup{dvips} \fi
\newcommand {\ME}{\mathbb{E}^{x}}
\renewcommand{\S}{\mathcal{S}}
\numberwithin{equation}{section}
\newtheorem{proposition}{Proposition}[section]
\newtheorem{remark}{Remark}[section]
\newtheorem{lemma}{Lemma}[section]
\numberwithin{remark}{section} \numberwithin{proposition}{section}
\numberwithin{corollary}{section}
\newcommand {\R}{\mathbb{R}}
\newcommand {\F}{\mathcal{F}}
\newcommand {\N}{\mathbb{N}}
\newcommand {\p}{\mathbb{P}}
\newcommand {\E}{\mathbb{E}}
\newcommand{\diff}{{\rm d}}
\newcommand{\lev}{L\'{e}vy }
\title[Precautionary Measures]{Precautionary Measures for Credit Risk Management in Jump Models}
\author[M. Egami]{Masahiko Egami}
\address[M. Egami]{Graduate School of Economics,
Kyoto University, Sakyo-Ku, Kyoto, 606-8501, Japan }
\email{egami@econ.kyoto-u.ac.jp}
\thanks{
M.\ Egami is in part supported
by Grant-in-Aid for Scientific Research (B) No.\ 22330098 and (C)
No.\ 20530340, Japan Society for the Promotion of Science.  K.
Yamazaki is in part supported by Grant-in-Aid for Young Scientists
(B) No.\ 22710143, the Ministry of Education, Culture, Sports,
Science and Technology, and by Grant-in-Aid for Scientific Research (B) No.\  2271014, Japan Society for the Promotion of Science.  The authors thank Ning Cai, Masaaki Kijima, Michael Ludkovski, Goran Peskir and the anonymous referee  for helpful suggestions and remarks.}
\author[K. Yamazaki]{Kazutoshi Yamazaki }
\address[K. Yamazaki]{Center for the Study of Finance and Insurance,
Osaka University, 1-3 Machikaneyama-cho, Toyonaka City, Osaka 560-8531, Japan}
\email{k-yamazaki@sigmath.es.osaka-u.ac.jp}
\date{\today}
\begin{document}

\begin{abstract}
\noindent Sustaining efficiency and stability by properly controlling the equity
to asset ratio is one of the most important and difficult challenges
in bank management.  Due to unexpected and abrupt decline of asset
values, a bank must closely monitor its net worth as well as market
conditions, and one of its important concerns is when to raise more
capital so as not to violate capital adequacy requirements.  In this
paper, we model the tradeoff between avoiding costs of delay and
premature capital raising, and solve the corresponding optimal
stopping problem.  In order to model defaults in a bank's loan/credit
business portfolios, we represent its net worth by \lev
processes, and solve explicitly for the double exponential jump
diffusion process and for a general spectrally negative \lev process.

\end{abstract}

\maketitle \noindent \small{\textbf{Key words:} Credit risk
management; Double exponential
jump diffusion; Spectrally negative \lev processes; Scale functions;
Optimal stopping\\
\noindent Mathematics Subject Classification (2000) : Primary: 60G40
Secondary: 60J75 }\\
%
%

\section{Introduction} \label{sec:introduction}

As an aftermath of the recent devastating financial crisis,  more
sophisticated risk management practices are now being required under
the Basel II accord.  In order to satisfy the capital adequacy
requirements, a bank needs to closely monitor how much of its
asset values has been damaged; it needs to examine whether
it maintains sufficient equity values or needs to start enhancing
its equity to asset ratio by raising more capital
and/or selling its assets.  Due to unexpected sharp declines in
asset values as experienced in the fall of 2008, optimally
determining when to undertake the action is an important and
difficult problem. In this paper, we give a new framework for this
problem and obtain its solutions explicitly.

We propose an alarm system that determines when a bank needs to start enhancing its own
capital ratio. We use \lev processes with jumps in order to
model defaults in its loan/credit assets and sharp declines in
their values under unstable market conditions. Because of their
negative jumps and the necessity to allow time for completing its
capital reinforcement plans, early practical action is needed to reduce the risk of violating
 the capital adequacy requirements. On
the other hand, there is also a cost of premature undertaking. If
the action is taken too quickly, it may run a risk of
incurring a large amount of opportunity costs including burgeoning
administrative and monitoring expenses. In other words, we need to
solve this tradeoff in order to implement this alarm system.

In this paper, we properly quantify the costs of delay and premature undertaking and set a well-defined objective function
that models this tradeoff. Our
problem is to obtain a stopping time that minimizes the
objective function.  We expect that this precautionary measure gives
a new framework in risk management.

\subsection{Problem}
Let $(\Omega, \F, \p)$ be a complete probability space on which a
L\'{e}vy process $X=\{X_t; t\ge 0\}$ is defined.  We represent, by $X$, a bank's \emph{net worth} or \emph{equity
capital} allocated to its loan/credit business and model the defaults in its credit portfolio in terms of the negative jumps.  For example, for a given standard Brownian motion $B=\{B_t; t\ge 0\}$ and a jump process $J=\{J_t; t\ge 0\}$ independent of $B$, if it admits a decomposition
\begin{equation*}
  X_t=x+ \mu t+\sigma B_t + J_t, \quad 0\le t <\infty \quad \textrm{and} \quad X_0=x
\end{equation*}
for some $\mu \in \R$ and $\sigma \geq 0$, then $J_t$ models the defaults as well as rapid increase in capital whereas the
non-jump terms $\mu t$ and $\sigma B_t$ represent, respectively, the
growth of the capital (through the cash flows from its credit
portfolio) and its fluctuations caused by non-default events (e.g.,
change in interest rates). 

Since $X$ is spatially homogeneous, we may assume, without loss of
generality, that the first time $X$ \emph{reaches or goes below zero}
signifies the event that the net capital requirement is
violated. We call this the \emph{violation event} and denote it by
\[
\theta:=\inf\{t\ge 0: X_t \leq 0\}
\]
where we assume $\inf \varnothing = \infty$.
Let $\mathbb{F}=(\F_t)_{t\ge 0}$ be the filtration generated by $X$.  Then $\theta$ is an $\mathbb{F}$-stopping time taking values on $[0,\infty]$. We denote by $\S$ the set of all stopping times \emph{smaller than or equal to} the violation event; namely,
\begin{align*}
\S : = \left\{ \textrm{stopping time }\tau : \tau \leq \theta \; a.s.\right\}.
\end{align*}

We only need to consider stopping times in $\S$ because the violation event is observable and the game is over once it happens.
By taking advantage of this, we see that the problem can be reduced to a well-defined optimal stopping problem; see Section \ref{sec:model}.  Our goal is to obtain among $\S$ the \emph{alarm time} that minimizes the two costs we describe below.

The first cost we consider is the risk that the alarm will be triggered
\emph{at or after} the violation event:
\begin{align*}
R^{(q)}_x(\tau) :=  \E^x \left[ e^{-q\theta} 1_{\left\{ \tau \geq \theta, \, \theta < \infty \right\}}\right], \quad \tau \in \S.
\end{align*}
Here  $q\in [0, \infty)$ is a discount rate and $\p^x$ is the
probability measure and $\E^x$ is the expectation under which the process starts at $X_0=x$.  We
call this the \emph{violation risk}. In particular, when $q=0$, it can be reduced  under a suitable condition to the probability of
the event $\{\tau \geq \theta\}$; see Section \ref{sec:model}.

The second cost relates to premature undertaking measured by
\begin{equation*}
\quad  H_x^{(q,
h)}(\tau):=\E^x\left[1_{\{\tau<\infty\}}\int_\tau^\theta
e^{-qt}h(X_t)\diff t \right], \quad \tau \in \S.
\end{equation*}
We shall call this the \emph{regret}, and here we assume $h:(0,\infty)\rightarrow [0,\infty)$ to be \emph{continuous}, \emph{non-decreasing} and 
\begin{align}
\E^x \left[ \int_0^\theta e^{-qt} h(X_t) \diff t \right] < \infty, \quad x > 0.  \label{cond_integrability}
\end{align}
The monotonicity
assumption reflects the fact that, if a bank has a higher capital
value $X$, then it naturally has better access to high quality
assets and hence the opportunity cost $h(\cdot)$ becomes higher
accordingly. In particular, when $h \equiv 1$ (i.e., $h(x) = 1$ for every $x > 0$), we have
\begin{align}
H^{(0,1)}_x (\tau) = \E^x \theta  - \E^x \tau \quad \textrm{and} \quad H^{(q,1)}_x (\tau) = \frac 1 q \left( \E^x [ e^{-q \tau} ] - \E^x [
e^{-q \theta} ] \right), \quad q > 0 \label{regret_h_1}
\end{align}
where the former is well-defined by \eqref{cond_integrability}.

Now, using some fixed weight $\gamma >0$, we consider a linear
combination of these two costs described above:
\begin{align}
U_x^{(q,h)}({\tau,\gamma}) := R^{(q)}_x(\tau) + \gamma H_x^{(q,h)} (\tau), \quad \tau \in \S.
\label{def_u}
\end{align}We solve the problem of minimizing \eqref{def_u} for the \emph{double exponential jump
diffusion process} and a general \emph{spectrally negative \lev process}.  The objective function is finite thanks to the integrability assumption \eqref{cond_integrability}, and hence the problem is well-defined.

The form of this objective function in \eqref{def_u} has an origin from the
\emph{Bayes risk} in mathematical statistics.  In the Bayesian formulation of \emph{change-point detection}, the Bayes risk is defined as a linear combination of the expected detection delay and  false alarm probability.  In \emph{sequential
hypothesis testing}, it is a linear combination of the expected sample size and misdiagnosis probability.  The optimal
solutions in these problems are those stopping times that minimize
the corresponding Bayes risks. Namely, the tradeoff between promptness and accuracy is modeled in terms of the
Bayes risk. Similarly, in our problem, we model the tradeoff
between the violation risk and regret by their linear
combination $U$.

We first consider the double exponential jump
diffusion process, a \lev process consisting of a Brownian motion and a compound
Poisson process with positive and negative exponentially-distributed
jumps.  We consider this classical model as an excellent starting point mainly due to a number of existing analytical properties and the fact that the results can potentially be extended to the hyper-exponential jump diffusion model (HEM) and more generally to the phase-type \lev model.
Due to the memoryless property of the exponential
distribution, the distributions of the first passage times and
overshoots by this process can be obtained explicitly (Kou and Wang
\cite{Kou_Wang_2003}).  It is this property that leads to analytical
solutions in various problems that would not be possible for other
jump processes.  Kou and Wang \cite{Kou_Wang_2004} used this process
as an underlying asset and obtained a closed form solution to the
perpetual American option and the Laplace transforms of lookback and
barrier options.  Sepp \cite{Sepp_2004} derived explicit pricing formulas for double-barrier and
double-touch options with time-dependent rebates. See also Lipton and Sepp \cite{Lipton_Sepp_2009} for applications of its multi-dimensional version in credit risk.  Some of the results for the double-exponential jump diffusion process have been extended to the HEM and phase-type models, for example, by Cai et al.\ \cite{Cai_2009,Cai_2009_2} and Asmussen et al.\ \cite{Asmussen_2004}. 

%
%

We then consider a general spectrally negative \lev process, or a \lev process with only negative jumps.  Because we are interested in defaults,
the restriction to negative jumps does not lose much reality in
modeling. We also see that positive jumps do
not have much influence on the solutions. We shall utilize the
\emph{scale function} to simplify the problem and obtain analytical solutions.   In order to identify the candidate optimal strategy, we shall apply \emph{continuous and smooth fit} for the cases $X$ has paths of bounded and unbounded variation, respectively.  The scale function is an important tool in most spectrally negative \lev models and can be calculated via algorithms such as Surya \cite{Surya_2008} and Egami and Yamazaki \cite{Egami_Yamazaki_2010_2}.

\subsection{Literature review}

Our model is original, and, to the best of our knowledge, the
objective function defined in (\ref{def_u}) cannot be found
elsewhere. It is, however, relevant to the problem, arising in the
optimal capital structure framework, of determining the endogenous
bankruptcy levels. The original diffusion model was first proposed
by Leland \cite{Leland_1994} and Leland and Toft
\cite{Leland_Toft_1996}, and it was extended, via the Wiener-Hopf
factorization, to the model with jumps by Hilberink and
Rogers \cite{Hilberink_Rogers_2002}.
 Kyprianou and Surya
\cite{Kyprianou_Surya_2007} studied the case with a general spectrally
negative \lev process. In their problems, the continuous and smooth fit principle is a main
tool in obtaining the optimal bankruptcy levels. Chen and Kou
\cite{Chen_Kou_2009} and Dao and Jeanblanc \cite{Dao_Jeanblanc_2006}, in particular,
focus on the double exponential jump diffusion case.

In the \emph{insurance} literature, as exemplified by the
\emph{Cramer-Lundberg} model, the compound Poisson process is
commonly used to model the surplus of an insurance firm. Recently,
more general forms of jump processes are also used (e.g., Huzak et
al.\ \cite{huzak2004} and Jang \cite{jang2007}).  For generalizations to the spectrally negative \lev model, see Avram et al.\ \cite{Avram_et_al_2007}, Kyprianou and Palmowski \cite{Kyprianou_2007}, and Loeffen \cite{Loeffen_2008}. The literature
also includes computations of ruin probabilities and extensions to
jumps with heavy-tailed distributions; see Embrechts et al.\
\cite{embrechts-book} and references therein.  See also Schmidli
\cite{schmid-book} for a survey on stochastic control problems in
insurance.

Mathematical statistics problems as exemplified by sequential
testing and change-point detection have a long history. It dates
back to 1948 when Wald and Wolfowitz \cite{Wald_Wolfowitz_1948,
Wald_Wolfowitz_1950} used the Bayesian approach and
proved the optimality of the sequential probability ratio test
(SPRT).  There are essentially two problems, the
\emph{Bayesian} and the \emph{variational} (or the
\emph{fixed-error}) problems; the former minimizes the Bayes risk
while the latter minimizes the expected detection delay (or the sample
size) subject to a constraint that the error probability is
smaller than some given threshold.  For comprehensive surveys and references, we refer
the reader to Peskir and Shiryaev \cite{Peskir_Shiryaev_2006} and
Shiryaev \cite{Shiryaev_2008}.  Our problem was originally
motivated by the Bayesian problem. However, it is also possible to consider its variational version where
the regret needs to be minimized on constraint that the violation
risk is bounded by some threshold.

Optimal stopping problems involving jumps (including the discrete-time model) are, in general, analytically intractable owing to the difficulty in obtaining the overshoot distribution.  This is true in our problem and in the literatures introduced above.  For example, in sequential testing and change-point detection, explicit solutions can be realized only in the Wiener case.  For this reason, recent research focuses on obtaining asymptotically optimal solutions by utilizing renewal theory; see, for example, Baron and Tartakovsky \cite{Baron_Tartakovsky_2006}, Baum and Veeravalli \cite{Baum_Veeravalli_1994}, Lai \cite{Lai_1975} and Yamazaki \cite{Yamazaki_2009}.  Although we do not address in this paper, asymptotically optimal solutions to our problem may be pursued for a more general class of \lev processes via renewal theory.  We refer the reader to Gut \cite{Gut_2009} for the overshoot distribution of random walks and Siegmund \cite{Siegmund_1985} and Woodroofe \cite{Woodroofe_1982} for more general cases in nonlinear renewal theory.

\subsection{Outline}
The rest of the paper is structured as in the following. We first
give an optimal stopping model for a general \lev process in the next section. Section \ref{sec:double_exponential} focuses on the double exponential jump
diffusion process and solve for the case when $h\equiv 1$. Section \ref{sec:spectrally_negative} considers the case when the process is a general spectrally negative \lev process; we obtain the solution explicitly in terms of the scale function for a general $h$.
We conclude with numerical results in Section
\ref{sec:numerical_results}.  Long proofs are deferred to the
appendix.

\section{Mathematical Model} \label{sec:model}

In this section, we first reduce our problem to an optimal stopping problem and illustrate the continuous and smooth fit approach to solve it.

\subsection{Reduction to optimal stopping}
Fix $\tau \in \S$, $q > 0$ and $x > 0$. The violation risk is
\begin{align*}
R^{(q)}_x(\tau) =  \E^x \left[ e^{-q \theta} 1_{\left\{ \tau \geq \theta, \, \theta < \infty \right\}}\right]  = \E^x \left[ e^{-q \theta} 1_{\left\{ \tau = \theta, \, \theta < \infty \right\}}\right] = \E^x \left[ e^{-q \tau} 1_{\left\{ \tau = \theta, \, \tau < \infty \right\}}\right]
\end{align*}
where the third equality follows because $\tau \leq \theta$ a.s.\ by definition. Moreover, we have
\begin{align*}
1_{\left\{ \tau = \theta, \, \tau < \infty \right\}} = 1_{\{ X_\tau \leq 0, \, \tau < \infty \}} \quad a.s.
\end{align*}
because
\begin{align*}
\left\{ \tau = \theta, \, \tau < \infty \right\} = \left\{ X_\tau \leq 0, \, \tau = \theta, \, \tau < \infty \right\}
\end{align*}
and
\begin{align*}
\left\{ X_\tau \leq 0, \, \tau < \infty \right\} = \left\{ X_\tau \leq 0, \, \tau \leq \theta, \, \tau < \infty \right\} = \left\{ X_\tau \leq 0, \, \tau = \theta, \, \tau < \infty \right\}
\end{align*}
where the first equality holds because $\tau \in \S$ and the second equality holds by the definition of $\theta$. Hence, we have
\begin{align}
R^{(q)}_x(\tau) = \E^x \left[ e^{-q \tau} 1_{\left\{ X_\tau \leq 0, \, \tau < \infty \right\}}\right], \quad \tau \in \S. \label{def_indicator}
\end{align}

For the regret, by the strong Markov property of
$X$ at time $\tau$, we have
\begin{align}
\quad  H_x^{(q,
h)}(\tau):=\E^x\left[1_{\left\{ \tau < \infty \right\}} \int_\tau^{\theta}
e^{-qt}h(X_t)\diff t \right]
 = \E^x \left[ e^{-q \tau} Q^{(q,h)}(X_\tau) 1_{\left\{ \tau < \infty \right\}} \right] \label{def_Q}
\end{align}
where $\left\{ Q^{(q,h)}(X_t); t \geq 0 \right\}$ is an $\mathbb{F}$-adapted Markov process such that
\begin{align*}
Q^{(q,h)}(x) := \E^x\left[ \int_0^{\theta} e^{-qt} h(X_t) \diff t \right], \quad x \in \R.
\end{align*}
 Therefore, by (\ref{def_indicator})-(\ref{def_Q}), if we let
\begin{align}
G(x) := 1_{\{ x \leq 0 \}} + \gamma Q^{(q,h)}(x) 1_{\{ x > 0 \}}, \quad x \in \R \label{def_stopping_value}
\end{align}
denote the cost of stopping, we can rewrite the objective function \eqref{def_u} as
\[
U_x^{(q,h)}({\tau,\gamma})
= \E^x \left[ e^{-q \tau} G(X_\tau) 1_{\left\{ \tau < \infty \right\}} \right], \quad \tau \in \S.
\]
Our problem is to obtain
\begin{align*}
\inf_{\tau \in \S} \E^x \left[ e^{-q \tau} G(X_\tau) 1_{\left\{ \tau < \infty \right\}} \right]
\end{align*}
and an optimal stopping time $\tau^* \in \S$ that attains it if such a stopping time exists.
It is easy to see that $G(x)$ is non-decreasing on $(0,\infty)$ because $h(x)$ is.  If $G(0+) \geq 1$, then clearly $\theta$ is optimal.  Therefore, we ignore the trivial case and assume throughout this paper that
\begin{align}
G(0+) < 1. \label{assumption_G}
\end{align}
As we will see later,  when $X$ has paths of unbounded variation, $G(0+)=0$ and the assumption above is automatically satisfied.

The problem can be naturally extended to the undiscounted case with
$q=0$.  The integrability assumption \eqref{cond_integrability} implies $\E^x \theta < \infty$ (without this assumption $G(x) = \infty$ and the problem becomes trivial). This also implies $\theta < \infty$ a.s.\ and the violation risk reduces to the probability
\begin{align*}
R^{(0)}_x(\tau) = \p^x \left\{ \tau \geq \theta \right\} = \p^x \left\{ \tau = \theta \right\} = \p^x \left\{ X_\tau \leq 0 \right\}, \quad \tau \in \S.
\end{align*}
We shall study the case $q=0$ for the double exponential jump diffusion process in Section \ref{sec:double_exponential}.

\subsection{Obtaining optimal strategy via continuous and smooth fit}
Similarly to obtaining the optimal bankruptcy levels in Leland \cite{Leland_1994}, Leland and Toft \cite{Leland_Toft_1996}, Hilberink and Rogers \cite{Hilberink_Rogers_2002} and Kyprianou and Surya
\cite{Kyprianou_Surya_2007}, the continuous and smooth fit principle will be a useful tool in our problem.
Focusing on the set of \emph{threshold strategies} defined by the first time the process reaches or goes below some fixed threshold, say $A$,
\[
\tau_A:=\inf\{t\ge 0: X_t \leq A\}, \quad A \geq 0,
\]
 we choose the optimal threshold level that satisfies the continuous or smooth fit condition and then verify the optimality of the corresponding strategy.

Let the expected value corresponding to the threshold strategy $\tau_{A}$ for fixed $A > 0$ be
\begin{align*}
\phi_A(x) := U_x^{(q,h)} (\tau_{A}, \gamma), \quad x \in \R,
\end{align*}
and the difference between the continuation and stopping values be
\begin{align} \label{diff_v_g}
\delta_A(x) := \phi_A(x)- G(x) = R_x^{(q)}(\tau_{A})  -
\gamma \E^x \left[ \int^{\tau_A}_0 e^{-qt} h(X_t) \diff t \right], \quad 0 < A < x.
\end{align}
We then have
\begin{align}
\phi_A (x) = \left\{ \begin{array}{ll} G(x) + \delta_A(x), & x > A, \\
G(x), & 0 < x \leq A, \\
1, & x \leq 0. \end{array} \right.  \label{phi_A}
\end{align}
The continuous and smooth fit conditions are $\delta_A(A+) = 0$ and $\delta_A'(A+) = 0$, respectively.
For a comprehensive account of continuous and smooth fit principle, see Peskir and Shiryaev \cite{Peskir_Shiryaev_2006,Peskir_Shiryaev_2000, Peskir_Shiryaev_2002}.

\subsection{Extension to the geometric model}  It should be noted that a version of this problem with an \emph{exponential \lev process} $Y = \left\{Y_t = \exp(X_t); t \geq 0 \right\}$ and a slightly modified violation time
\begin{align*}
\widetilde{\theta} = \inf \left\{ t \geq 0: Y_t \leq a\right\}
\end{align*}
for some $a > 0$ can be modeled in the same framework.  Indeed, defining a shifted \lev process 
\begin{align*}
\widetilde{X} := \left\{\widetilde{X}_t = X_t - \log a; t \geq 0 \right\},
\end{align*}
we have
\begin{align*}
\widetilde{\theta} = \inf \left\{ t \geq 0:\widetilde{X}_t  \leq 0 \right\}.
\end{align*}
Moreover, the regret function can be expressed in terms of $\widetilde{X}$ by replacing $h(x)$ with $\widetilde{h}(x) = h(\exp (x + \log a))$ for every $x > 0$. The continuity and non-decreasing properties remain valid because of the property of the exponential function.

\section{Double Exponential Jump Diffusion}  \label{sec:double_exponential}
In this section, we consider the double exponential jump diffusion
model that features exponential-type jumps in both positive and
negative directions.  We first summarize the results from Kou and
Wang \cite{Kou_Wang_2003} and obtain explicit representations of
our violation risk and regret.  We then find analytically the
optimal strategy both when $q > 0$ and when $q = 0$.  We assume
throughout this section that $h\equiv 1$, i.e., the regret function
reduces to (\ref{regret_h_1}).

\subsection{Double exponential jump diffusion}
The double exponential jump diffusion process is a \lev process of the form
\begin{align}
X_t := x  + \mu t + \sigma B_t + \sum_{i=1}^{N_t} Z_i, \quad t \geq 0 \label{double_exponential_x}
\end{align}
where $\mu \in \R$, $\sigma > 0$, $B = \left\{ B_t; t \geq 0 \right\}$ is a standard Brownian motion, $N = \left\{ N_t; t \geq 0 \right\}$
is a Poisson process with parameter $\lambda >
0$ and $Z = \left\{ Z_i; i \in \mathbb{N} \right\}$ is a sequence
of i.i.d.\ random variables having a \emph{double exponential distribution}
with its density
\begin{align}
f (z)  := p \eta_- e^{\eta_- z} 1_{\left\{z < 0 \right\}} + (1-p)
\eta_+ e^{- \eta_+ z} 1_{\left\{z > 0 \right\}}, \quad z \in \R, \label{density_f}
\end{align}
for some $\eta_-, \eta_+ \geq 0$ and $p \in [0,1]$. Here $B$, $N$ and $Z$ are assumed to be mutually independent.

The \emph{Laplace exponent} of this process is given by
\begin{align}\label{laplace_exponent_double_exponential}
\psi(\beta) := \E^0 \left[ e^{\beta X_1}\right] = \mu \beta + \frac 1 2 \sigma^2 \beta^2  + \lambda
\left( \frac {p \eta_-} {\eta_- + \beta} + \frac {(1-p) \eta_+}
{\eta_+ - \beta} -1\right), \quad \beta \in \R.
\end{align}
We later see that the Laplace exponent and its
inverse function are useful tools in simplifying the problem and
characterizing the structure of the optimal solution.

Fix $q > 0$. There are four roots of $\psi ( \beta ) = q$, and in particular we focus on $\xi_{1,q}$ and $\xi_{2,q}$ such that
\begin{align*}
0 < \xi_{1,q} < \eta_- < \xi_{2,q} < \infty \quad \textrm{and} \quad
\psi(-\xi_{1,q}) = \psi(-\xi_{2,q}) = q.
\end{align*}
Suppose that
the \emph{overall drift} is denoted by $\overline{u} := \E^0 [X_1]$, then it becomes
\begin{align*}
\overline{u} = \mu + \lambda \left( - \frac p {\eta_-} + \frac {1-p}
{\eta_+}\right),
\end{align*}
and
\begin{align}\label{limit_psi}
\xi_{1,q} \rightarrow \xi_{1,0} \left\{ \begin{array}{ll} = 0, & \overline{u}
\leq 0 \\ > 0, &\overline{u} > 0 \end{array} \right\} \quad
\textrm{and} \quad \xi_{2,q} \rightarrow \xi_{2,0} \quad \textrm{as} \; q \rightarrow 0
\end{align}
for some $\xi_{1,0}$ and $\xi_{2,0}$ satisfying
\begin{align*}
 0 \leq \xi_{1,0} < \eta_- < \xi_{2,0} < \infty \quad \textrm{and} \quad  \psi(-\xi_{1,0}) = \psi(-\xi_{2,0}) = 0;
\end{align*}
see Figure \ref{fig:xi} for an illustration. When $\overline{u} < 0$, by (\ref{limit_psi}), l'H\^{o}pital's rule and $\psi(-\xi_{1, q})=q$,
we have
\begin{align}
\frac {\xi_{1,q}} {q} \xrightarrow{q \downarrow 0} - \frac 1 {\psi'(0)} = - \frac 1 {\overline{u}} = \frac 1 {|\overline{u}|}. \label{convergence_xi}
\end{align}
We will see that these roots characterize the optimal strategies; the optimal threshold levels can
be expressed in terms of $\xi_{1,q}$ and $\xi_{2,q}$ when $q > 0$
and $\xi_{2,0}$ and $\overline{u}$ when $q=0$.

\begin{figure}[htbp]
\begin{center}
\begin{minipage}{1.0\textwidth}
\centering
\begin{tabular}{cc}
\includegraphics[scale=0.5]{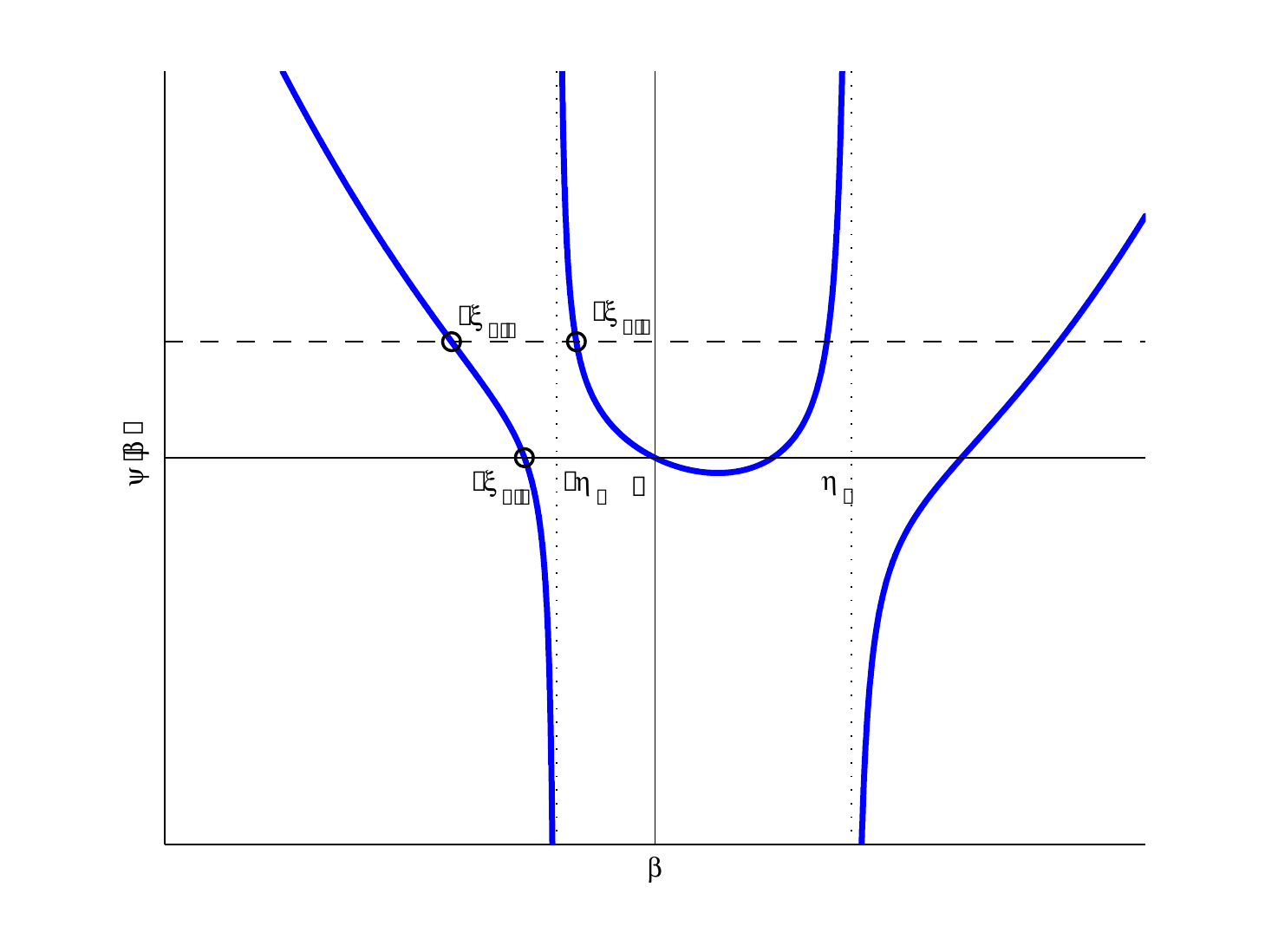}  & \includegraphics[scale=0.5]{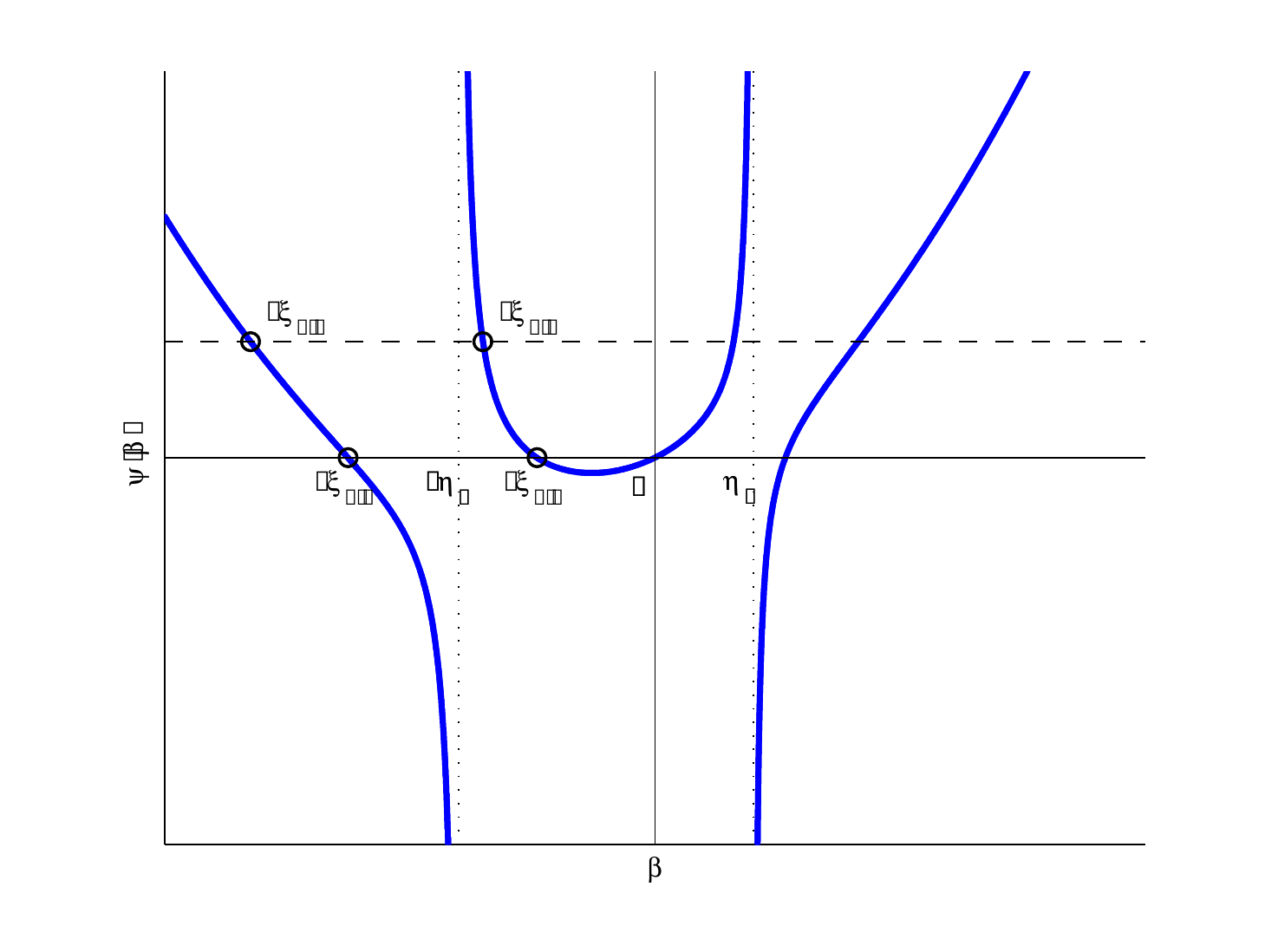} \\
(a) $\bar{u} < 0$ & (b) $\bar{u}>0$
\end{tabular}
\end{minipage}
\caption{Illustration of the Laplace exponent and its roots when the drift $\overline{u}$ is negative and positive.}

\label{fig:xi}
\end{center}
\end{figure}

Due to the memoryless
property of its jump-size distribution, the violation risk and regret can
be obtained explicitly.  The following two lemmas are due to Kou and Wang \cite{Kou_Wang_2003}, Theorem 3.1 and its corollary. Here we let
\begin{align*}
l_{1,q} := \frac {\eta_- - \xi_{1,q}} {\xi_{2,q} - \xi_{1,q}} > 0 \quad \textrm{and} \quad l_{2,q} := \frac {\xi_{2,q} - \eta_-} {\xi_{2,q} - \xi_{1,q}} > 0, \quad q \geq 0,
\end{align*}
where, in particular, when $q=0$ and $\overline{u} < 0$,
\begin{align*}
l_{1,0} = \frac {\eta_-} {\xi_{2,0}} > 0 \quad \textrm{and} \quad l_{2,0} = \frac {\xi_{2,0} - \eta_-} {\xi_{2,0}} > 0.
\end{align*}
Notice that $l_{1,q} + l_{2,q} = 1$ for every $q \geq 0$.
\begin{lemma}[violation risk] \label{proposition_undershoot}
For every $q \geq 0$ and $0 < A < x$, we have
\begin{align*}
R_x^{(q)}(\tau_A) = \frac {e^{-\eta_- A}} {\eta_-} \left[ (\xi_{2,q} - \eta_-)  l_{1,q} e^{-\xi_{1,q} (x-A)} -  (\eta_- - \xi_{1,q})  l_{2,q} e^{-\xi_{2,q} (x-A)} \right].
\end{align*}
In particular, when $q = 0$ and $\overline{u} < 0$, this reduces to
\begin{align*}
R_x^{(0)}(\tau_A) =l_{2,0}  e^{-\eta_- A}  \left[ 1 - e^{- \xi_{2,0}(x-A)}\right].
\end{align*}
\end{lemma}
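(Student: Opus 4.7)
The plan is to reduce the computation to the joint Laplace transform of the first passage time and its undershoot, and then invoke Theorem~3.1 of Kou and Wang \cite{Kou_Wang_2003}. First I would translate by $A$: setting $\widetilde{x} := x - A > 0$ and $\sigma_0 := \inf\{t \geq 0 : X_t \leq 0\}$, spatial homogeneity gives that $(\sigma_0, A + X_{\sigma_0})$ under $\p^{\widetilde{x}}$ has the same joint law as $(\tau_A, X_{\tau_A})$ under $\p^x$, so
\[
R_x^{(q)}(\tau_A) = \E^{\widetilde{x}}\bigl[e^{-q\sigma_0}\,1_{\{X_{\sigma_0} \leq -A,\, \sigma_0 < \infty\}}\bigr].
\]

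Next I would use the memoryless decoupling of the undershoot. The first passage below $0$ either happens via the Brownian component (in which case $X_{\sigma_0} = 0$) or via a single downward jump from a positive pre-jump value; since the downward jumps are $\textup{Exp}(\eta_-)$-distributed, the memoryless property identifies the conditional law of $-X_{\sigma_0}$, given $\{X_{\sigma_0} < 0\}$ and $\F_{\sigma_0-}$, as $\textup{Exp}(\eta_-)$. Since $A > 0$ excludes the continuous-passage event $\{X_{\sigma_0} = 0\}$, one obtains
\[
R_x^{(q)}(\tau_A) = e^{-\eta_- A}\,\E^{\widetilde{x}}\bigl[e^{-q\sigma_0}\,1_{\{X_{\sigma_0} < 0\}}\bigr].
\]

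The remaining expectation is the jump-type component of the first-passage Laplace transform that is explicitly computed in \cite{Kou_Wang_2003}, Theorem~3.1:
\[
\E^{\widetilde{x}}\bigl[e^{-q\sigma_0}\,1_{\{X_{\sigma_0} < 0\}}\bigr] = \frac{(\xi_{2,q} - \eta_-)(\eta_- - \xi_{1,q})}{\eta_-(\xi_{2,q} - \xi_{1,q})}\bigl(e^{-\xi_{1,q}\widetilde{x}} - e^{-\xi_{2,q}\widetilde{x}}\bigr).
\]
Rewriting the right-hand side using the definitions of $l_{1,q}$ and $l_{2,q}$ and substituting $\widetilde{x} = x - A$ yields the general formula in the lemma.

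For the special case $q = 0$ with $\overline{u} < 0$, I would substitute $\xi_{1,0} = 0$ from \eqref{limit_psi}, together with $l_{1,0} = \eta_-/\xi_{2,0}$ and $l_{2,0} = (\xi_{2,0} - \eta_-)/\xi_{2,0}$, into the general formula; the coefficients multiplying $e^{-\xi_{1,0}\widetilde{x}} = 1$ and $e^{-\xi_{2,0}\widetilde{x}}$ both simplify to $\eta_-(\xi_{2,0}-\eta_-)/\xi_{2,0}$, giving the stated expression (the passage to the limit $q \downarrow 0$ is justified because $\overline{u} < 0$ guarantees $\sigma_0 < \infty$ almost surely). The only genuine obstacle is the Kou--Wang first-passage identity itself, which we do not reprove but simply cite.
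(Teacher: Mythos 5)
Your proposal is correct and matches the paper's treatment: the paper gives no independent proof of this lemma, attributing it directly to Theorem 3.1 of Kou and Wang and its corollary, which is exactly the result you invoke (your translation by $A$ and the memoryless factorization $e^{-\eta_- A}$ are just the explicit bookkeeping behind that citation, and your $q=0$ specialization via $\xi_{1,0}=0$, $l_{2,0}=(\xi_{2,0}-\eta_-)/\xi_{2,0}$ agrees with the stated formula).
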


\begin{lemma}[functional associated with the regret when $h\equiv1$] \label{proposition_expectation_tau_A}
For every $q > 0$, we have
\begin{align*}
\E^x \left[ \int_0^{\tau_A} e^{-q t} \diff t\right] =  \frac 1 {q \eta_-} \left[ l_{1,q} \xi_{2,q} \left( 1- e^{- \xi_{1,q}(x-A)} \right) + l_{2,q} \xi_{1,q} \left( 1 - e^{- \xi_{2,q}(x-A)} \right) \right], \quad 0 \leq A < x.
\end{align*}
Furthermore, it can be extended to the case $q = 0$, by taking $q \downarrow 0$ via (\ref{convergence_xi}) and the monotone convergence theorem;
\begin{align*}
\E^x \left[ \tau_A \right] = \left\{ \begin{array}{ll} \frac 1
{|\overline{u}|} \left[ (x-A) + \frac {\xi_{2,0}-\eta_-} {\eta_-
\xi_{2,0}} (1- e^{- \xi_{2,0}(x-A)}) \right], & \textrm{if } \;
\overline{u} < 0, \\ \infty, & \textrm{if } \; \overline{u} \geq
0.\end{array}\right.
\end{align*}
\end{lemma}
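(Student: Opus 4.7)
My plan is to reduce the expectation to the Laplace transform of $\tau_A$ and then invoke the Kou--Wang results that underlie Lemma \ref{proposition_undershoot}. Using the deterministic identity $\int_0^{\tau_A} e^{-qt} \diff t = (1 - e^{-q \tau_A})/q$, with the convention $e^{-q\tau_A} := 0$ on $\{\tau_A = \infty\}$, and taking expectation I obtain
\[
\E^x\!\left[\int_0^{\tau_A} e^{-qt} \diff t\right] = \frac{1}{q}\bigl(1 - \E^x[e^{-q\tau_A}; \tau_A < \infty]\bigr),
\]
so the content of the lemma for $q > 0$ reduces to computing the Laplace transform of $\tau_A$.

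By spatial homogeneity of $X$, $\E^x[e^{-q\tau_A}; \tau_A < \infty]$ equals the analogous quantity for the process started at $0$ with target $A-x < 0$. The Kou--Wang first passage identity (Theorem 3.1 in \cite{Kou_Wang_2003}, the same one invoked for Lemma \ref{proposition_undershoot}) then yields
\[
\E^x[e^{-q\tau_A}; \tau_A < \infty] = \frac{1}{\eta_-}\bigl[l_{1,q}\xi_{2,q}\, e^{-\xi_{1,q}(x-A)} + l_{2,q}\xi_{1,q}\, e^{-\xi_{2,q}(x-A)}\bigr].
\]
Substituting this back and using the algebraic identity $l_{1,q}\xi_{2,q} + l_{2,q}\xi_{1,q} = \eta_-$ (which also confirms consistency at $x=A$, where both sides equal $1$) produces the displayed formula.

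For the extension to $q = 0$, the integrand $e^{-qt} \uparrow 1$ as $q \downarrow 0$, so monotone convergence gives $\E^x[\tau_A] = \lim_{q \downarrow 0} \E^x\bigl[\int_0^{\tau_A} e^{-qt}\diff t\bigr]$. In the case $\overline{u} < 0$, the limits $\xi_{1,q} \to 0$, $\xi_{2,q} \to \xi_{2,0}$, $l_{1,q} \to \eta_-/\xi_{2,0}$, $l_{2,q} \to l_{2,0}$ from \eqref{limit_psi}, combined with $\xi_{1,q}/q \to 1/|\overline{u}|$ from \eqref{convergence_xi} and the Taylor expansion $1 - e^{-\xi_{1,q}(x-A)} = \xi_{1,q}(x-A) + O(\xi_{1,q}^2)$, convert the first summand to the linear term $(x-A)/|\overline{u}|$, while the second summand limits to $l_{2,0}(1 - e^{-\xi_{2,0}(x-A)})/|\overline{u}|$; dividing by $\eta_-$ and simplifying yields the stated expression. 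For $\overline{u} > 0$, $\xi_{1,0} > 0$ so $\E^x[e^{-q\tau_A}] \to \p^x\{\tau_A < \infty\} < 1$ and the quotient diverges; for $\overline{u} = 0$, $\xi_{1,q}/q \to -1/\psi'(0) = \infty$ forces the same conclusion. In both cases $\E^x[\tau_A] = \infty$.

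The main obstacle is the $q \downarrow 0$ bookkeeping: the apparent $0/0$ indeterminacy in $l_{1,q}\xi_{2,q}(1 - e^{-\xi_{1,q}(x-A)})/q$ must be resolved using that $l_{1,q}$ remains bounded, $\xi_{2,q}$ converges to a positive limit, and $\xi_{1,q} \to 0$ at exactly the rate $q/|\overline{u}|$ under $\overline{u} < 0$, so the quotient converges to the finite value $\eta_-(x-A)/|\overline{u}|$. Beyond this the argument is a direct consequence of Kou and Wang's first-passage identity and the memoryless property of the negative exponential jumps.
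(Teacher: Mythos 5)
Your proposal is correct and follows essentially the route the paper itself takes: the paper gives no independent proof but attributes the formula to Kou and Wang's Theorem 3.1 and its corollary, and your derivation simply recovers it from that first-passage Laplace transform via $\int_0^{\tau_A}e^{-qt}\,\diff t=(1-e^{-q\tau_A})/q$ together with the identity $l_{1,q}\xi_{2,q}+l_{2,q}\xi_{1,q}=\eta_-$, then performs the $q\downarrow 0$ limit by monotone convergence and \eqref{convergence_xi} exactly as the lemma's statement indicates. Your bookkeeping in the limit (including the divergence for $\overline{u}\geq 0$) is accurate, so no gaps remain.
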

For a general \lev process, Lemma  \ref{proposition_expectation_tau_A} can be alternatively achieved by obtaining $\p^x \left\{ \underline{X}_{\textbf{e}_q} < A\right\}$ where $\textbf{e}_q$ is an independent exponential random variable with parameter $q \geq 0$ and $\underline{X}_t := \inf_{0 \leq u \leq t} X_u$ is the running infimum of $X$; see Bertoin \cite{Bertoin_1996}, Kyprianou \cite{Kyprianou_2006} or Chapter 2 of Surya \cite{Surya_2007}.  In particular, $\p^x \left\{ \underline{X}_{\textbf{e}_q} < A\right\}$ admits an analytical form when jumps are of phase-type (Asmussen et al.\ \cite{Asmussen_2004}); the above results can be seen as its special case.

\begin{remark}
As in Kou and Wang \cite{Kou_Wang_2003}, we assume throughout this section that $X$ contains a diffusion component $(\sigma > 0)$.  Here we do not consider the case $\sigma = 0$ because its spectrally negative case is covered in the next section.  The results for $\sigma = 0$ can be obtained similarly using the results by Asmussen et al.\ \cite{Asmussen_2004}.
\end{remark}

\subsection{Optimal strategy when $h \equiv 1$} We shall obtain the optimal solution for $q \geq 0$. When $q = 0$, we focus on the case when $\overline{u} < 0$ because
$\E^x \theta = \infty$ otherwise by Lemma \ref{proposition_expectation_tau_A}
and the problem becomes trivial as we discussed in Section \ref{sec:model}.

Suppose $q >0$.  By Lemma \ref{proposition_expectation_tau_A}, the stopping value (\ref{def_stopping_value}) becomes
\begin{align}\label{eq:G-expression}
G(x)= \gamma\ME\left[\int_0^\theta e^{-qt}\diff
t\right] = \sum_{i=1,2} C_{i,q} \left( 1 - e^{-\xi_{i,q}x}\right) = \frac \gamma q - \sum_{i=1,2} C_{i,q} e^{-\xi_{i,q}x}, \quad x > 0,
\end{align}
where
\begin{align}
C_{1,q} := \frac \gamma q \frac { l_{1,q} \xi_{2,q}} {\eta_-} \quad \textrm{and} \quad C_{2,q} := \frac \gamma q \frac { l_{2,q} \xi_{1,q}} {\eta_-}, \label{about_C}
\end{align}
which satisfy
\begin{align}
\sum_{i=1,2} C_{i,q} = \frac \gamma q \quad \textrm{and} \quad \sum_{i = 1,2}{C_{i,q}} \frac {\eta_-} {\eta_- - \xi_{i,q}} = \frac \gamma q. \label{about_C_2}
\end{align}
The difference between the continuation and stopping values (defined
in (\ref{diff_v_g})) becomes, by Lemmas \ref{proposition_undershoot}-\ref{proposition_expectation_tau_A},
\begin{multline}
\delta_A(x) = \frac {l_{1,q}} {\eta_-}  \left[  (\xi_{2,q}-\eta_-) e^{-\xi_{1,q} (x-A)-\eta_- A}
- \frac \gamma q \xi_{2,q} \left( 1 - e^{- \xi_{1,q}(x-A)}\right)\right] \\ + \frac {l_{2,q}} {\eta_-}  \left[  -(\eta_- - \xi_{1,q})  e^{-\xi_{2,q} (x-A) -\eta_- A}
- \frac \gamma q \xi_{1,q} \left( 1 - e^{- \xi_{2,q}(x-A)}\right)\right], \quad 0 < A <
x. \label{delta_A}
\end{multline}

Suppose $q=0$ and $\overline{u} < 0$. By Lemma \ref{proposition_expectation_tau_A}, we have
\begin{align*}
G (x) &= \frac \gamma {|\overline{u}|} \left[ x + \frac {\xi_{2,0} - \eta_-} {\eta_- \xi_{2,0}} \left( 1 - e^{- \xi_{2,0}x}\right)\right], \quad x > 0,
\end{align*}
and  taking $q \rightarrow
0$ in \eqref{delta_A} via the monotone convergence theorem (or by Lemmas \ref{proposition_undershoot}-\ref{proposition_expectation_tau_A})
\begin{multline}
\delta_A(x) = \frac {l_{1,0}} {\eta_-} \left[ (\xi_{2,0} - \eta_-) e^{-\eta_- A} - \frac \gamma {|\overline{u}|} \xi_{2,0}(x-A) \right]  \\ + \frac {l_{2,0}} {\eta_-} \left[ - \eta_- e^{-\xi_{2,0}(x-A) - \eta_- A} - \frac \gamma {|\overline{u}|} \left( 1 - e^{-\xi_{2,0}(x-A)}\right)\right], \quad 0 < A < x. \label{delta_A_zero}
\end{multline}

\begin{remark} \label{remark_cont_fit}\normalfont
We have  $\delta_A(A+) = 0$ for every $A > 0$, i.e., continuous fit
holds whatever the choice of $A$ is. This is due to the fact that $X$ has paths of unbounded variation ($\sigma > 0$).  As we see in the next section, continuous fit is applied to identify the optimal threshold level for the bounded variation case. \qed
\end{remark}

We shall obtain the threshold level $A^*$ such that the smooth fit condition holds, i.e., $\delta'_{A^*}(A^*+) = 0$ if such a threshold exists. By \eqref{delta_A}-\eqref{delta_A_zero}, we have 
\begin{align*}
\delta_A'(A+)= \left\{ \begin{array}{ll} \frac 1 {\eta_-}\left[  (\eta_- - \xi_{1,q})  (\xi_{2,q}-\eta_-)  e^{-\eta_- A}  - \frac \gamma q \xi_{1,q} \xi_{2,q} \right], & q > 0 \\   (\xi_{2,0}-\eta_-)  e^{-\eta_- A}  - \frac \gamma {|\overline{u}| \eta_-} \xi_{2,0}, & q=0 \; \textrm{and} \; \overline{u} < 0 \end{array} \right\}, \quad A > 0.
\end{align*}
Therefore, on condition that
\begin{align}
\left\{ \begin{array}{ll}  (\eta_- - \xi_{1,q})  (\xi_{2,q}-\eta_-) > \frac \gamma q \xi_{1,q} \xi_{2,q}, & q > 0 \\   (\xi_{2,0}-\eta_-)  > \frac \gamma {|\overline{u}| \eta_-} \xi_{2,0}, & q=0 \; \textrm{and} \; \overline{u} < 0 \end{array} \right\}, \label{condition_smooth_fit_exists}
\end{align}
the smooth fit condition $\delta'_{A}(A+) = 0$ is satisfied if and only if $A=A^* > 0$ where
\begin{align}
A^* := \left\{ \begin{array}{ll} - \frac 1 {\eta_-} \log \left(  \frac \gamma q \frac {\xi_{1,q} \xi_{2,q}} {  (\eta_- - \xi_{1,q})(\xi_{2,q}-\eta_-)} \right), & q > 0, \\ -
\frac 1 {\eta_-} \log \left( \frac {\gamma \xi_{2,0}} {|\overline{u}|
\eta_- (\xi_{2,0}-\eta_-) }\right), & q = 0 \; \textrm{ and } \; \overline{u} < 0. \end{array} \right. \label{def_a_tilde_q}
\end{align}
If (\ref{condition_smooth_fit_exists}) does not hold, we have $\delta_A'(A+) < 0$ for every $A > 0$; in this case, we set $A^* = 0$.

We now show that the optimal value function is $\phi := \phi_{A^*}$ (see \eqref{phi_A}).  Suppose $q > 0$ and $A^* > 0$. Simple algebra shows that
\begin{align}
\delta(x) :=  \delta_{A^*}(x) =  \frac \gamma q \frac 1 {\xi_{2,q} - \xi_{1,q}} \left[ {\xi_{2,q}} \left( e^{-\xi_{1,q} (x - {A^*}) }- 1 \right) - { \xi_{1,q}} \left( e^{-\xi_{2,q} (x - {A^*}) } - 1 \right) \right], \quad x > A^*. \label{delta}
\end{align}
This together with (\ref{eq:G-expression}) shows that
\begin{align*}
\phi(x) = \left\{ \begin{array}{ll} \sum_{i = 1,2} (L_{i,q} - C_{i,q}) e^{-\xi_{i,q}x}, & x > A^*, \\ \frac \gamma q - \sum_{i=1,2} C_{i,q} e^{-\xi_{i,q}x}, & 0 < x \leq A^*, \\ 1, & x \leq 0, \end{array} \right.
\end{align*}
where $C_{1,q}$ and  $C_{2,q}$ are defined in (\ref{about_C}) and
\begin{align}
L_{1,q} := \frac \gamma q \frac {\xi_{2,q}} {\xi_{2,q} - \xi_{1,q}}  e^{ \xi_{1,q}{A^*}} \quad \textrm{and} \quad L_{2,q} := - \frac \gamma q \frac {\xi_{1,q}} {\xi_{2,q} - \xi_{1,q}}  e^{ \xi_{2,q}{A^*}}. \label{def:L}
\end{align}
When $q = 0$ and $A^* > 0$, we have
\begin{align*}
\delta(x) &= -\frac \gamma {|\overline{u}|} \left[ (x-A^*) -  \frac {1-e^{- \xi_{2,0}(x-A^*)}} {\xi_{2,0}} \right], \quad x > A^*,
\end{align*}
and consequently, 
\begin{align*}
\phi(x) &= \left\{ \begin{array}{ll} \frac \gamma {|\overline{u}|} \left( A^* + \frac {\xi_{2,0} - \eta_-} {\eta_- \xi_{2,0}} \left( 1 - e^{- \xi_{2,0}x}\right)  +  \frac 1 {\xi_{2,0}} (1-e^{- \xi_{2,0}(x-A^*)})   \right),  & x > A^*, \\
\frac \gamma {|\overline{u}|} \left( x + \frac {\xi_{2,0} - \eta_-} {\eta_- \xi_{2,0}} \left( 1 - e^{- \xi_{2,0}x}\right)\right), & 0 < x \leq A^*, \\ 1, & x \leq 0. \end{array} \right.
\end{align*}
%
%
%

Finally, it is understood for the case $A^*=0$ (for both $q > 0$ and $q=0$) that
\begin{align}
\phi(x) = \lim_{\varepsilon \downarrow 0}\phi_\varepsilon (x) 
 \label{v_negative_case}
\end{align}
where, by Lemma \ref{proposition_undershoot},
\begin{align*}
\lim_{\varepsilon \downarrow 0} R_x^{(q)} (\tau_\varepsilon)= \E^x \left[ e^{- q \theta}  1_{\{ X_{\theta} < 0, \, \theta < \infty \}} \right] =  \frac {(\eta_- - \xi_{1,q})(\xi_{2,q}-\eta_-)} {\eta_- (\xi_{2,q} - \xi_{1,q})} \left( e^{-\xi_{1,q} x} - e^{-\xi_{2,q} x} \right).
\end{align*}
This is the expectation of the cost incurred only when it jumps over the level zero.

When $A^* > 0$ the value function $\phi$ can be attained by $\tau_{A^*}$, and when $A^* = 0$ it can be approximated arbitrarily closely by $\phi_\varepsilon(x)$ (which is attained by $\tau_\varepsilon$) for sufficiently small $\varepsilon > 0$.  We therefore only need to verify that $\phi(x) \leq \E^x \left[ e^{- q\tau} G(X_\tau) 1_{\{\tau < \infty\}}\right]$ for any $\tau \in \S$.

We first show that  $\phi(\cdot)$
is dominated from above by the stopping value $G(\cdot)$.
\begin{lemma} \label{lemma_difference}
We have $\phi(x) \leq G(x)$ for every $x \in \R$.
\end{lemma}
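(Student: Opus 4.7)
The plan is to split the proof by region. By the piecewise form in \eqref{phi_A}, $\phi$ agrees with $G$ on $\{x\le 0\}$ (both equal $1$) and on $\{0<x\le A^*\}$ when $A^*>0$ (both equal $G(x)$ there), so the inequality holds with equality. The entire content of the lemma therefore lies in showing $\delta(x) = \phi(x) - G(x) \le 0$ on the continuation region $\{x > A^*\}$, and, when $A^*=0$, in showing $\lim_{\varepsilon\downarrow 0} \delta_\varepsilon(x)\le 0$ for $x>0$.

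For the case $A^*>0$ and $q>0$, I would work directly from the compact expression \eqref{delta}. Writing $y:=x-A^*>0$, we have $\delta(x)=\frac{\gamma}{q(\xi_{2,q}-\xi_{1,q})}\,F(y)$ with positive prefactor, where
\[
F(y) := \xi_{2,q}\bigl(e^{-\xi_{1,q} y} - 1\bigr) - \xi_{1,q}\bigl(e^{-\xi_{2,q} y} - 1\bigr).
\]
Then $F(0)=0$ and $F'(y)=\xi_{1,q}\xi_{2,q}(e^{-\xi_{2,q} y}-e^{-\xi_{1,q} y})<0$ for $y>0$ since $0<\xi_{1,q}<\xi_{2,q}$, so $F<0$ on $(0,\infty)$ and hence $\delta<0$. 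The $q=0$ case (with $\overline{u}<0$) reduces similarly to showing $g(y):=y-(1-e^{-\xi_{2,0} y})/\xi_{2,0}\ge 0$, which follows from $g(0)=0$ and $g'(y)=1-e^{-\xi_{2,0} y}>0$.

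For the case $A^*=0$, I would instead show $\phi_\varepsilon(x)\le G(x)$ for every $\varepsilon>0$ and then pass to the limit in \eqref{v_negative_case}. Continuous fit gives $\delta_\varepsilon(\varepsilon+)=0$ (Remark \ref{remark_cont_fit}), so it suffices to prove that $\delta_\varepsilon$ is non-increasing on $[\varepsilon,\infty)$. Differentiating \eqref{delta_A} in $x$ yields the form $\delta_\varepsilon'(\varepsilon+y)=a\,e^{-\xi_{1,q} y}+b\,e^{-\xi_{2,q} y}$, with $a<0$ by inspection and $a+b=\delta_\varepsilon'(\varepsilon+)\le 0$ precisely because \eqref{condition_smooth_fit_exists} fails (which is what characterises $A^*=0$). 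Factoring out $e^{-\xi_{1,q} y}$ and using $e^{-(\xi_{2,q}-\xi_{1,q}) y}\in(0,1]$ for $y\ge 0$, the remaining bracket is bounded above by $\max(a,a+b)\le 0$ (splitting on the sign of $b$); hence $\delta_\varepsilon'\le 0$, as required. The $q=0$, $A^*=0$ subcase proceeds identically from \eqref{delta_A_zero}.

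The main obstacle is purely computational and arises in the $A^*=0$ step: the coefficients $a,b$ emerge from \eqref{delta_A} as messy combinations of the $l_{i,q}$ and $\xi_{i,q}$, and the algebraic identity
\[
l_{1,q}\xi_{1,q}(\xi_{2,q}-\eta_-) - l_{2,q}\xi_{2,q}(\eta_- - \xi_{1,q}) = -(\eta_- - \xi_{1,q})(\xi_{2,q} - \eta_-),
\]
combined with $l_{1,q}+l_{2,q}=1$, is needed in order to recognise $a+b$ as exactly the sign-determining quantity in \eqref{condition_smooth_fit_exists} and thereby link monotonicity of $\delta_\varepsilon$ to the failure of the smooth-fit inequality.
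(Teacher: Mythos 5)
Your argument is correct, but it takes a different route from the paper's. The paper proves the lemma with a single computation in the \emph{threshold} variable: it differentiates $\delta_A(x)$ (equivalently $\phi_A(x)$) with respect to $A$, factors the derivative as a strictly positive term times a term that is increasing in $A$ and vanishes exactly at $A^*$ (when $A^*>0$), concludes that $\partial \phi_A(x)/\partial A\ge 0$ iff $A\ge A^*$, and then lets $A\uparrow x$, using continuous fit $\delta_A(A+)=0$ to identify the limit with $G(x)$; the same monotonicity handles $A^*>0$ and $A^*=0$ uniformly, and is reused later (Remark \ref{remark:phi_bounded}-(3)). You instead fix the threshold and work in $x$: for $A^*>0$ you read the sign of $\delta=\delta_{A^*}$ directly off the simplified expressions \eqref{delta} (via $F(0)=0$, $F'<0$) and its $q=0$ analogue, which is shorter and quite transparent given that \eqref{delta} is available; for $A^*=0$ you show $\delta_\varepsilon'\le 0$ on $(\varepsilon,\infty)$ using continuous fit at $\varepsilon$, the failure of \eqref{condition_smooth_fit_exists}, and the coefficient identity $l_{1,q}\xi_{1,q}(\xi_{2,q}-\eta_-)-l_{2,q}\xi_{2,q}(\eta_--\xi_{1,q})=-(\eta_--\xi_{1,q})(\xi_{2,q}-\eta_-)$ (which checks out, and is essentially the same algebra the paper performs inside its $\partial/\partial A$ computation), together with the case split on the sign of $b$ and the bound $a+be^{-(\xi_{2,q}-\xi_{1,q})y}\le\max(a,a+b)\le 0$. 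Both proofs are elementary manipulations of the same explicit exponential formulas; the paper's buys uniformity across the two regimes and a monotonicity-in-$A$ fact it exploits again, while yours buys a quicker disposal of the main case $A^*>0$ at the cost of a separate, slightly more delicate argument when $A^*=0$.
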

\begin{proof}
For every $0 < A < x$,
${\partial} \phi_A(x) / \partial A = {\partial} \delta_A(x) / \partial A$ equals
\begin{align*}
 \left\{ \begin{array}{ll} \frac 1 {\eta_-}\left[l_{1,q} e^{- \xi_{1,q}(x-A)} +
{l_{2,q}} e^{- \xi_{2,q}(x-A)} \right] \left[ -(\eta_- -\xi_{1,q}) (\xi_{2,q} - \eta_-) e^{-\eta_- A} + {\frac \gamma q}\xi_{1,q} \xi_{2,q} \right], & q > 0 \\ \frac 1 {\eta_-} \left[ l_{1,0} + l_{2,0} e^{- \xi_{2,0}(x-A)} \right] \left[ -\eta_- (\xi_{2,0} - \eta_-) e^{-\eta_- A} + \frac \gamma {|\overline{u}|} \xi_{2,0} \right], & q=0 \; \textrm{and} \; \overline{u} < 0 \end{array} \right\}.
\end{align*}
Here, in both cases, the term in the first bracket is strictly positive while that in the second bracket is increasing in $A$. Therefore,  when $A^* > 0$, $A^*$ is the unique value that makes it vanish, and consequently  ${\partial} \phi_A(x) / \partial A \geq 0$ if and only if $A \geq A^*$.  On the other hand, if $A^* = 0$,  ${\partial} \phi_A(x) / \partial A \geq 0$ for every $0 < A < x$. These imply when $x > A^*$ that by Remark \ref{remark_cont_fit}
\begin{align*}
\phi(x)  \leq \lim_{A \uparrow x}\phi_{A} (x) = G(x) \quad \textrm{and} \quad
\phi(x) = \lim_{\varepsilon \downarrow 0}\phi_\varepsilon(x) \leq \lim_{A \uparrow x}\phi_{A} (x) = G(x)
\end{align*}
when $A^* > 0$ and when $A^* = 0$, respectively.  On the other hand, when $-\infty < x \leq A^*$, we have $\phi(x) = G(x)$ by definition and hence the proof is complete.
\end{proof}

\begin{remark} \label{remark:phi_bounded}\normalfont
\begin{enumerate}
\item Suppose $q > 0$.  In view of (\ref{eq:G-expression}), $G (\cdot)$ is bounded  from above by $\gamma /q$ uniformly on $x \in (0,\infty)$.
\item Suppose $\overline{u} < 0$ and fix $x > 0$.  We have $G(x) \leq \E^x \theta < \infty$ uniformly on $q \geq 0$; namely, $G(x)$ is bounded by $\E^x \theta < \infty$ uniformly on $q \in [0,\infty)$.
\item Suppose $A^* > 0$. Using the same argument as in the proof of Lemma \ref{lemma_difference}, we have $\phi(x)  \leq \lim_{\varepsilon \downarrow 0}\phi_\varepsilon(x) = \E^x \left[ e^{- q \theta}  1_{\{ X_{\theta} < 0, \, \theta < \infty \}} \right] \leq 1$ for every $x > A^*$. When $0 < x \leq A^*$, using the monotonicity of $G(\cdot)$ and continuous fit (see Remark \ref{remark_cont_fit}), we have
$\phi(x) = G(x) \leq G(A^*) = \phi(A^*) \leq \lim_{\varepsilon \downarrow 0}\phi_\varepsilon(A^*)  \leq 1$.  When $A^*=0$, we have $\phi(x) \leq 1$ in view of \eqref{v_negative_case}.
Therefore, $\phi(\cdot)$ is uniformly bounded and it only takes values on
$[0,1]$. \qed
\end{enumerate}
\end{remark}

Now we show that $\mathcal{L} \phi(x) \geq q \phi(x)$ on $(0,\infty) \backslash \{A^*\}$ where $\mathcal{L}$ is the \emph{infinitesimal generator} of $X$ such that
\begin{align}
\mathcal{L} w(x) = \mu w' (x) + \frac 1 2 \sigma^2 w''(x) +  \lambda \int_{-\infty}^\infty \left[ w(x+z) - w(x) \right] f (z) \diff z, \quad x \in \mathbb{R}, \label{generator_double_exponential}
\end{align}
for any $C^2$-function $w: \R \rightarrow \R$. The proof for the following lemma is lengthy and technical, and therefore is relegated to the appendix.

\begin{lemma} \label{lemma_verification}
\begin{enumerate}
\item If $A^* > 0$, then we have
\begin{align}
\mathcal{L} \phi(x) - q \phi(x) &= 0, \quad x > A^*, \label{equality_continuation}\\
\mathcal{L} \phi(x) - q \phi(x) &> 0, \quad 0 < x < A^*. \label{inequality_stopping}
\end{align}
\item If $A^* = 0$, then (\ref{equality_continuation}) holds for every $x > 0$.
\end{enumerate}
\end{lemma}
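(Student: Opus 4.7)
The plan is to handle the equalities in (1)(a) and (2) by a martingale/It\^o argument and the strict inequality in (1)(b) by a direct computation capped off with a short monotonicity-plus-boundary step.

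For (1)(a), I would start from the probabilistic identity $\phi(x)=\E^x[e^{-q\tau_{A^*}}G(X_{\tau_{A^*}})1_{\{\tau_{A^*}<\infty\}}]$ for $x>A^*$. Strong Markov makes $(e^{-q(t\wedge\tau_{A^*})}\phi(X_{t\wedge\tau_{A^*}}))_{t\ge 0}$ a uniformly integrable martingale (integrability via Remark \ref{remark:phi_bounded}), and since $\phi$ is smooth on $(A^*,\infty)$ as a finite combination of exponentials, It\^o's formula forces the drift to vanish, giving $\mathcal{L}\phi-q\phi=0$ on $(A^*,\infty)$. For part (2), the same argument applied to $\phi_\varepsilon$ (whose continuation region is $(\varepsilon,\infty)$) yields $\mathcal{L}\phi_\varepsilon - q\phi_\varepsilon = 0$ on $(\varepsilon,\infty)$, and passing $\varepsilon\downarrow 0$ using the uniform convergence of $\phi_\varepsilon$ and its derivatives on compact subsets of $(0,\infty)$ (which follows from the explicit formulas as combinations of exponentials) gives the claim.

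For (1)(b), on $(0,A^*)$ we have $\phi=G$ while on $(A^*,\infty)$ we have $\phi=G+\delta$ with $\delta$ as in \eqref{delta}. Splitting the jump integral in the generator according to where $x+z$ lies yields
\begin{equation*}
\mathcal{L}\phi(x)-q\phi(x) = \bigl[\mathcal{L}G(x) - qG(x)\bigr] + \lambda(1-p)\eta_+\int_{A^*-x}^{\infty}\delta(x+z)\,e^{-\eta_+ z}\diff z.
\end{equation*}
I would compute the bracket by writing $G=\gamma u + 1_{(-\infty,0]}$ with $u(x)=\E^x[\int_0^\theta e^{-qt}\diff t]$ (extended by $u\equiv 0$ on $(-\infty,0]$); Feynman-Kac gives $\mathcal{L}u-qu=-1$ on $(0,\infty)$, and the fact that $G$ equals $1$ rather than $0$ on $(-\infty,0]$ contributes an extra $\lambda p e^{-\eta_- x}$ through the negative-jump integral, so $\mathcal{L}G(x)-qG(x)=\lambda p e^{-\eta_- x}-\gamma$ on $(0,\infty)$. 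The remaining integral is evaluated by plugging in \eqref{delta} and doing elementary exponential integrals, yielding $-C e^{-\eta_+(A^*-x)}$ with $C=\gamma\lambda(1-p)\xi_{1,q}\xi_{2,q}/[q(\xi_{1,q}+\eta_+)(\xi_{2,q}+\eta_+)]>0$.

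Hence $F(x):=\mathcal{L}\phi(x)-q\phi(x)=\lambda p e^{-\eta_- x}-\gamma - C e^{-\eta_+(A^*-x)}$ on $(0,A^*)$, and a direct computation shows $F'(x)<0$, so $F$ is strictly decreasing; it therefore suffices to show $F(A^*-)>0$. Continuous and smooth fit make $\phi$ and $\phi'$ continuous at $A^*$, but $\phi''$ jumps: differentiating \eqref{delta} twice gives $\delta''(A^*+) = -\gamma\xi_{1,q}\xi_{2,q}/q$, hence $\phi''(A^*-)-\phi''(A^*+) = \gamma\xi_{1,q}\xi_{2,q}/q$. Since the $\mu\phi'$, $q\phi$ and jump-integral pieces of $\mathcal{L}\phi-q\phi$ are continuous in $x$ across $A^*$, we obtain
\begin{equation*}
F(A^*-)-F(A^*+) = \frac{\sigma^2}{2}\bigl[\phi''(A^*-)-\phi''(A^*+)\bigr] = \frac{\gamma\sigma^2\xi_{1,q}\xi_{2,q}}{2q} > 0,
\end{equation*}
and combined with $F(A^*+)=0$ from (1)(a) this yields $F(A^*-)>0$, hence $F(x)>0$ throughout $(0,A^*)$ by monotonicity. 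The main obstacle I foresee is the algebraic bookkeeping in deriving $\mathcal{L}G-qG$---in particular, tracking the extra $\lambda p e^{-\eta_- x}$ produced by the jump of $G$ from $0$ to $1$ at the origin, which is easy to overlook---together with the exponential integration of $\delta$; once these are in hand, the structural monotonicity-plus-jump step is short.
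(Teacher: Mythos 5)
Your proposal is correct in substance for $q>0$, and for the stopping region it is essentially the paper's argument: after the decomposition you arrive at exactly the paper's expression $(\mathcal{L}-q)\phi(x)=-\gamma+\lambda p e^{-\eta_- x}-C e^{-\eta_+(A^*-x)}$ on $(0,A^*)$, and you close with the same monotonicity-in-$x$ plus second-derivative-jump step at $A^*$ (the paper likewise uses $\delta''(A^*+)=-\gamma\xi_{1,q}\xi_{2,q}/q<0$, continuous and smooth fit, and $\sigma>0$ to get $\lim_{x\uparrow A^*}(\mathcal{L}-q)\phi(x)>\lim_{x\downarrow A^*}(\mathcal{L}-q)\phi(x)=0$). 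Where you genuinely diverge is the continuation region and the case $A^*=0$: the paper proves \eqref{equality_continuation} by a purely algebraic verification --- Lemma \ref{lemma_inequality_generator} (based on $\psi(-\xi_{i,q})=q$) plus an explicit four-part evaluation of $\int\phi(x+z)f(z)\,\diff z$, with the residual term killed by the defining equation \eqref{def_a_tilde_q} for $A^*$ --- whereas you invoke the strong Markov property to make $e^{-q(t\wedge\tau_{A^*})}\phi(X_{t\wedge\tau_{A^*}})$ a bounded martingale and let It\^o's formula force the drift to vanish, and for $A^*=0$ you pass to the limit over $\varepsilon$-thresholds. Your route avoids the bookkeeping and makes transparent that harmonicity holds for \emph{any} threshold level (which the paper's presentation obscures, since the choice of $A^*$ is absorbed into the coefficients $L_{i,q}$), but it asserts rather than verifies the standard technicalities (localization, applying It\^o to a function that is $C^2$ only on $(A^*,\infty)$ and discontinuous at $0$, passing from ``drift vanishes a.e.\ along paths'' to pointwise, and dominated convergence for the nonlocal term in the $\varepsilon\downarrow 0$ limit); the paper's algebraic check sidesteps all of this and, similarly, your Feynman--Kac identity $(\mathcal{L}-q)u=-1$ on $(0,\infty)$ is a clean shortcut but would need to be backed by the explicit exponential form of $G$ (Lemma \ref{lemma_inequality_generator}-type algebra) to be self-contained.

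One caveat: the lemma is also used in Section \ref{sec:double_exponential} for $q=0$ (with $\overline{u}<0$), and the paper proves that case separately by letting $q\downarrow 0$ (showing $\phi_q,\phi_q',\phi_q''$ and the jump integral converge). Your argument implicitly assumes $q>0$ --- the constants $\gamma/q$, $C$ and $\delta''(A^*+)$ all divide by $q$, and the uniform integrability of the martingale is argued via discounting --- so you would either need to rerun your computation with the $q=0$ formulas \eqref{delta_A_zero} or append the paper's limiting argument to cover that case.
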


Lemmas \ref{lemma_difference} and \ref{lemma_verification} show the optimality. The proof is very similar to that of Proposition \ref{proposition_optimality_spectrally_negative} given in the next section (see Appendix \ref{proof_proposition_optimality_spectrally_negative}) and hence we omit it.
\begin{proposition} \label{proposition_optimality_case_1}
We have
\begin{align*}
\phi(x) = \inf_{\tau \in \S} \E^x \left[ e^{-q \tau} G(X_\tau) 1_{\left\{ \tau < \infty \right\}} \right], \quad x > 0.
\end{align*}
\end{proposition}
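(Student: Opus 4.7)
The plan is a standard verification argument building on Lemmas \ref{lemma_difference} and \ref{lemma_verification}. Since $\phi = \phi_{A^*}$ is attained by $\tau_{A^*} \in \S$ when $A^* > 0$, and equals $\lim_{\varepsilon \downarrow 0}\phi_\varepsilon$ (approximated arbitrarily closely by $\tau_\varepsilon \in \S$) when $A^* = 0$, the inequality $\phi(x) \geq \inf_{\tau \in \S} \E^x[e^{-q\tau}G(X_\tau) 1_{\{\tau < \infty\}}]$ is immediate. The substantive task is to establish the reverse inequality: $\phi(x) \leq \E^x[e^{-q\tau} G(X_\tau) 1_{\{\tau < \infty\}}]$ for every $\tau \in \S$.

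I would first apply the generalized It\^{o} (Meyer--It\^{o}) formula to $e^{-qt}\phi(X_t)$ on $[0, \tau \wedge T_n]$, where $(T_n)_{n \in \N}$ is a localizing sequence for the Brownian and compensated-Poisson local-martingale parts. The closed-form expression for $\phi$ shows it is $C^2$ on $(0,A^*) \cup (A^*,\infty)$, and the defining smooth-fit equation yields $\phi \in C^1$ at $A^*$; hence the local-time term at $A^*$ in the Meyer--It\^{o} expansion carries the factor $\phi'(A^*+) - \phi'(A^*-) = 0$ and vanishes. The resulting identity reads
\begin{align*}
e^{-q(\tau \wedge T_n)}\phi(X_{\tau \wedge T_n}) = \phi(x) + \int_0^{\tau \wedge T_n} e^{-qs}(\mathcal{L}\phi - q\phi)(X_{s-})\, \diff s + M_{\tau \wedge T_n},
\end{align*}
where $M$ is a local martingale starting at zero. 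Since $\tau \leq \theta$ forces $X_{s-} \in (0,\infty)$ for $s < \tau$, and $\{s : X_{s-} = A^*\}$ is Lebesgue-null almost surely, Lemma \ref{lemma_verification} gives $(\mathcal{L}\phi - q\phi)(X_{s-}) \geq 0$ a.e.\ on $[0, \tau)$. Taking expectations yields $\phi(x) \leq \E^x[e^{-q(\tau \wedge T_n)}\phi(X_{\tau \wedge T_n})]$.

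The next step is to let $n \to \infty$. By Remark \ref{remark:phi_bounded}(3), $\phi$ is uniformly bounded by $1$. On $\{\tau < \infty\}$, eventually $\tau \wedge T_n = \tau$, so there is pointwise convergence to $e^{-q\tau}\phi(X_\tau)$. On $\{\tau = \infty\}$, if $q > 0$ then $e^{-qT_n}\phi(X_{T_n}) \leq e^{-qT_n} \to 0$; if $q = 0$ we are in the regime $\overline{u} < 0$, so Lemma \ref{proposition_expectation_tau_A} forces $\theta < \infty$ a.s., and $\tau \leq \theta$ gives $\p^x\{\tau = \infty\} = 0$. Dominated convergence then produces $\phi(x) \leq \E^x[e^{-q\tau}\phi(X_\tau) 1_{\{\tau < \infty\}}]$, and a final application of Lemma \ref{lemma_difference} ($\phi \leq G$) yields the desired bound.

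The main delicate point is the application of the generalized It\^{o} formula at the potentially non-$C^2$ point $A^*$, which is neutralized cleanly by smooth fit so that no local-time correction survives. All remaining steps --- localization, nonnegativity of the drift from Lemma \ref{lemma_verification}, and the dominated-convergence passage using the uniform bound $\phi \leq 1$ together with the alternative handling of $\{\tau = \infty\}$ in the two regimes $q > 0$ and $q = 0$ --- are routine.
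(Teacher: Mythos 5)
Your overall verification scheme (nonnegativity of the drift from Lemma \ref{lemma_verification}, domination $\phi \leq G$ from Lemma \ref{lemma_difference}, localization and bounded convergence via Remark \ref{remark:phi_bounded}) is the same as the paper's, but there is a genuine gap at the very first step, where you apply the Meyer--It\^{o} formula directly to $\phi$. The function $\phi$ is not merely non-$C^2$ at $A^*$: it has a \emph{jump discontinuity at $0$}, since $\phi(x)=1$ for $x \leq 0$ while $\phi(0+) = G(0+) = 0$ in this model ($\sigma>0$ implies unbounded variation). A function with a jump discontinuity is not a difference of convex functions, so Meyer--It\^{o} does not apply to it. Concretely, because $\sigma>0$ the process creeps to the level $0$ with positive probability; on the event $\{X_\theta = 0\}$ (which is where $X_\tau=0$ can occur, with $\tau=\theta$), one has $\lim_{t \uparrow \theta}\phi(X_t) = \phi(0+) = 0$ but $\phi(X_\theta) = 1$, so $e^{-qt}\phi(X_t)$ jumps at $t=\theta$ without any jump of $X$, and the pathwise identity you write down fails there. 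This is precisely the ``delicate point'' the paper's (omitted, but referenced) proof is organized around: it approximates $\phi$ from below by continuous functions $\phi_n$ agreeing with $\phi$ on $(0,\infty)$, applies It\^{o} to $\phi_n$, controls the error term $(\mathcal{L}-q)(\phi_n-\phi)$ via the compensation formula, and only in the limit obtains $\E^x \left[ e^{-q\tau}\left(\phi(X_\tau)1_{\{X_\tau \neq 0\}} + \phi(0+)1_{\{X_\tau = 0\}}\right)1_{\{\tau<\infty\}}\right] \leq \E^x\left[e^{-q\tau}\phi(X_\tau)1_{\{\tau<\infty\}}\right]$, with equality only in the creeping-free case $\sigma=0$.

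Your conclusion is salvageable, because the discrepancy you ignore goes in the favorable direction ($\phi(0)=1 \geq \phi(0+)$, so replacing the correct terminal contribution on the creeping event by $\phi(X_\tau)$ only strengthens the inequality $\phi(x) \leq \E^x[\cdots]$); but as written the key semimartingale identity is unjustified and must be replaced either by the paper's approximation argument or by an explicit one-sided correction on $\{X_\tau = 0\}$. By contrast, the point you single out as the main difficulty --- the kink at $A^*$ --- is indeed harmless for the reason you give (smooth fit kills the local-time term), and the localization and integrability issues are genuinely routine here since $\phi$ is bounded by $1$ and the L\'{e}vy measure is finite; your handling of $\{\tau=\infty\}$ for $q>0$ and for $q=0$ with $\overline{u}<0$ is correct.
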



\section{Spectrally Negative Case}  \label{sec:spectrally_negative}
In this section, we analyze the case for a general spectrally negative \lev process. We shall obtain the optimal strategy and the value function in terms of the \emph{scale function} for a general $h$.   We assume throughout this section that $q > 0$.  The results obtained here can in principle be extended to the case $q=0$ on condition that $\E^x \theta < \infty$ along the same line as in the discussion in the previous section.  The proofs of all lemmas and propositions are given in the appendix.

\subsection{Scale functions} \label{subsec:scale_functions}

Let $X$ be a spectrally negative \lev process with  its Laplace exponent
\begin{align*}
\psi(\beta) := \E^0 \left[ e^{\beta X_1}\right] = c \beta +\frac{1}{2}\sigma^2 \beta^2 +\int_{(
0,\infty)}(e^{-\beta x}-1+\beta x 1_{\{0 <x<1\}})\,\Pi(\diff x), \quad  {\beta \in \mathbb{R}},
\end{align*}
where $c \in\R$, $\sigma\geq 0$ and $\Pi$ is a measure  on $(0,\infty)$ such that \mbox{$\int_{(0,\infty)} (1  \wedge x^2) \Pi( \diff x)<\infty$}. See Kyprianou \cite{Kyprianou_2006}, p.212.  In particular, when
\begin{align}
\int_{(0,\infty)} (1 \wedge x)\, \Pi(\diff x) < \infty,  \label{cond_bounded_variation}
\end{align}
we can rewrite
\begin{align*}
\psi(\beta) =\mu  \beta+  \frac{1}{2}\sigma^2 \beta^2 + \int_{(
0,\infty)}(e^{-\beta x}-1)\,\Pi(\diff x), \quad \beta \in \mathbb{R}
\end{align*}
where
\begin{align}
\mu := c + \int_{(
0,1)}x\, \Pi(\diff x). \label{def_mu}
\end{align}
The process has paths of bounded variation if and only if $\sigma = 0$ and \eqref{cond_bounded_variation} holds. It is also assumed that $X$ is not a negative subordinator (decreasing a.s.). Namely, we require $\mu$ to be strictly positive if $\sigma = 0$.

It is well-known that $\psi$ is zero at the origin, convex on $\R_+$ and has a right-continuous
inverse:
\[
\zeta_q :=\sup\{\lambda \geq 0: \psi(\lambda)=q\}, \quad
q\ge 0.
\]
Associated with every spectrally negative \lev process, there exists a \emph{(q-)scale
function}
\begin{align*}
W^{(q)}: \R \rightarrow \R; \quad q\ge 0,
\end{align*}
that is continuous and strictly increasing on $[0,\infty)$ and satisfies
\begin{align*}
\int_0^\infty e^{-\beta x} W^{(q)}(x) \diff x = \frac 1
{\psi(\beta)-q}, \qquad \beta > \zeta_q.
\end{align*}
If $\tau_a^+$ is the first time the process goes above $a > x > 0$, we have
\begin{align*}
\E^x \left[ e^{-q \tau_a^+} 1_{\left\{ \tau_a^+ < \theta, \, \tau_a^+ < \infty \right\}}\right] = \frac {W^{(q)}(x)}  {W^{(q)}(a)} \quad \textrm{and}  \quad \E^x \left[ e^{-q \theta} 1_{\left\{ \tau_a^+ > \theta, \, \theta < \infty \right\}}\right] = Z^{(q)}(x) - Z^{(q)}(a) \frac {W^{(q)}(x)} {W^{(q)}(a)},
\end{align*}
where
\begin{align}\label{eq:Z-q-function}
Z^{(q)}(x) := 1 + q \int_0^x W^{(q)}(y) \diff y, \quad x \in \R.
\end{align}
Here we have
\begin{equation}\label{eq:at-zero}
W^{(q)}(x)=0 \quad\text{on} \quad  (-\infty,0)\quad\text{ and}\quad
Z^{(q)}(x)=1 \quad\text{on}\quad (-\infty,0].
\end{equation}
We assume that $\Pi$ does not have atoms; this implies that $W^{(q)}$ is continuously differentiable on $(0,\infty)$.  See Chan et al.\ \cite{Chan_2009} for the smoothness properties of the scale function.

The scale function increases exponentially; indeed, we have
\begin{align}
W^{(q)} (x) \sim \frac {e^{\zeta_q x}} {\psi'(\zeta_q)} \quad \textrm{as } \; x \rightarrow \infty. \label{scale_function_asymptotic}
\end{align}
There exists a (scaled) version of the scale function
$ W_{\zeta_q} = \{ W_{\zeta_q} (x); x \in \R \}$ that satisfies, for every fixed $q
\geq 0$,
\begin{align*}
W_{\zeta_q} (x) = e^{-\zeta_q x} W^{(q)} (x), \quad x \in \R
\end{align*}
and 
\begin{align*}
\int_0^\infty e^{-\beta x} W_{\zeta_q} (x) \diff x &= \frac 1 {\psi(\beta+\zeta_q)-q}, \quad \beta > 0.
\end{align*}
Moreover $W_{\zeta_q} (x)$ is increasing and as is clear from \eqref{scale_function_asymptotic}
\begin{align}
W_{\zeta_q} (x) \nearrow \frac 1 {\psi'(\zeta_q)} \quad \textrm{as } \; x \rightarrow \infty. \label{scale_function_asymptotic_version}
\end{align}

From Lemmas 4.3 and 4.4 of Kyprianou and Surya \cite{Kyprianou_Surya_2007}, we also have the following results about the behavior in the neighborhood of zero:
\begin{align}
W^{(q)} (0) = \left\{ \begin{array}{ll} 0, & \textrm{unbounded
variation} \\ \frac 1 {\mu}, & \textrm{bounded variation}
\end{array} \right\} \quad \textrm{and} \quad W^{(q)'} (0+) =
\left\{ \begin{array}{ll}  \frac 2 {\sigma^2}, & \sigma > 0 \\
\infty, & \sigma = 0 \; \textrm{and} \; \Pi(0,\infty) = \infty \\
\frac {q + \Pi(0,\infty)} {\mu^2}, & \textrm{compound Poisson}
\end{array} \right\}. \label{at_zero}
\end{align}

For a comprehensive account of the scale function, see Bertoin \cite{Bertoin_1996,Bertoin_1997}, Kyprianou \cite{Kyprianou_2006} and
Kyprianou and Surya \cite{Kyprianou_Surya_2007}. See Surya \cite{Surya_2008} and Egami and Yamazaki  \cite{Egami_Yamazaki_2010_2} for numerical methods for computing the scale function.

\subsection{Rewriting the problem in terms of the scale function for a general $h$} \label{subsec:penalty_function}
We now rewrite the problem in terms of the scale function. For fixed $A \geq 0$, define a random measure
\begin{align*}
M^{(A,q)}(\omega, B) := \int_0^{\tau_{A}(\omega)} e^{-qt} 1_{\left\{
X_t(\omega) \in B \right\}} \diff t, \quad \omega \in \Omega, \; B
\in \mathcal{B}(\R).
\end{align*}
\begin{lemma} \label{lemma_measure}
For any $\omega \in \Omega$, we have
\begin{align}\label{eq:M-integral}
\int_0^{\tau_A(\omega)} e^{-qt} h(X_t(\omega))\diff t = \int_{\mathbb{R}} M^{(A,q)} (\omega,\diff y) h(y).
\end{align}
\end{lemma}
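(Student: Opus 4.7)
The plan is to verify the identity by the standard machine: indicators, then simple functions, then general nonnegative measurable functions via monotone convergence. Fix $\omega \in \Omega$ throughout, so that both sides are deterministic integrals in $t$ and $y$ respectively, and write $\tau := \tau_A(\omega)$ and $X_t := X_t(\omega)$ to lighten notation.

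First I would treat the indicator case. If $h = \mathbbm{1}_B$ for a Borel set $B \in \mathcal{B}(\mathbb{R})$, then by the very definition of $M^{(A,q)}(\omega,\cdot)$,
\begin{align*}
\int_\mathbb{R} M^{(A,q)}(\omega, \diff y)\, \mathbbm{1}_B(y) = M^{(A,q)}(\omega, B) = \int_0^{\tau} e^{-qt} \mathbbm{1}_{\{ X_t \in B \}} \diff t = \int_0^{\tau} e^{-qt} h(X_t) \diff t,
\end{align*}
so \eqref{eq:M-integral} holds for all Borel indicators. By linearity of both the Lebesgue integral in $t$ and the integral against the measure $M^{(A,q)}(\omega,\cdot)$, the identity immediately extends to nonnegative simple Borel functions.

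For general $h$, the hypothesis that $h$ is continuous and nonnegative on $(0,\infty)$ (extended by $0$ elsewhere, or treated only on the relevant range) makes $h$ Borel measurable, so there exists a sequence of nonnegative simple Borel functions $h_n \nearrow h$ pointwise. Applying the already-proven identity to each $h_n$ gives
\begin{align*}
\int_0^{\tau} e^{-qt} h_n(X_t) \diff t = \int_\mathbb{R} M^{(A,q)}(\omega, \diff y)\, h_n(y).
\end{align*}
On the left-hand side, $e^{-qt} h_n(X_t) \nearrow e^{-qt} h(X_t)$ on $[0,\tau]$, and on the right-hand side $h_n \nearrow h$ pointwise; applying the monotone convergence theorem to each side yields the identity for $h$.

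I do not expect a real obstacle: the identity is essentially a Fubini/occupation-measure tautology for fixed $\omega$, and the only point to be mildly careful about is ensuring $\omega$-by-$\omega$ measurability of $t \mapsto h(X_t(\omega))$ (immediate from measurability of $t \mapsto X_t(\omega)$ and Borel measurability of $h$) and the fact that $M^{(A,q)}(\omega,\cdot)$ really is a (finite, when $\tau < \infty$, or $\sigma$-finite in general because of the $e^{-qt}$ factor when $q > 0$) Borel measure on $\mathbb{R}$, which follows from Tonelli applied to $(t,B) \mapsto e^{-qt} \mathbbm{1}_{\{X_t(\omega) \in B\}}$.
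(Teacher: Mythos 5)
Your proposal is correct and follows essentially the same route as the paper's proof: approximate $h$ by an increasing sequence of nonnegative simple Borel functions, verify the identity at the simple-function level via the definition of $M^{(A,q)}(\omega,\cdot)$, and pass to the limit with the monotone convergence theorem on both sides. The only cosmetic difference is that you spell out the indicator case and linearity explicitly, whereas the paper writes both sides directly as the common limit $\lim_n \sum_i b_i^{(n)} M^{(A,q)}(\omega, B_{n,i})$.
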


With this lemma and the property of the random measure, we have
\begin{align}\label{eq:mean-measure}
\E^x \left[ \int_0^{\tau_{A}} e^{-qt}  h(X_t) \diff t \right] = \E^x
\left[ \int_{\mathbb{R}} M^{(A,q)} (\omega,\diff y) h(y) \right]
=\int_{\mathbb{R}} \mu_x^{(A,q)}(\diff y) h(y)
\end{align}
where
\begin{align*}
\mu_x^{(A,q)} (B) := \E^x \left[ M^{(A,q)}(B) \right], \quad B \in \mathcal{B} (\mathbb{R})
\end{align*}
is a version of the \emph{q-resolvent kernel} that has a density owing to the Radon-Nikodym theorem; see Bertoin \cite{Bertoin_1997}.  By using Theorem 1 of Bertoin \cite{Bertoin_1997} (see also Emery \cite{Emery_1973} and Suprun \cite{Suprun_1976}),  we have for every $B \in \mathcal{B}(\mathbb{R})$ and $a > x$
\begin{align*}
\E^x \left[ \int_0^{\tau_{A} \wedge \tau^+_a} e^{-qt} 1_{\left\{ X_t \in B \right\}} \diff t\right] &= \int_{B \cap [A,\infty)} \Big[ \frac {W^{(q)}(x-A) W^{(q)} (a-y)} {W^{(q)}(a-A)} - 1_{\{ x \geq y \}} W^{(q)} (x-y) \Big] \diff y \\
&= \int_{B \cap [A,\infty)} \Big[ e^{-\zeta_q(y-A)} \frac {W^{(q)}(x-A) W_{\zeta_q} (a-y)} {W_{\zeta_q}(a-A)} - 1_{\{ x \geq y \}} W^{(q)} (x-y) \Big] \diff y.
\end{align*}
Moreover, by taking $a \uparrow \infty$ via the dominated convergence theorem in view of \eqref{scale_function_asymptotic_version}, we have
\begin{align*}
\mu_x^{(A,q)} (B) &= \int_{B \cap [A,\infty)} \lim_{a \rightarrow \infty}\Big[ e^{-\zeta_q(y-A)} \frac {W^{(q)}(x-A) W_{\zeta_q} (a-y)} {W_{\zeta_q}(a-A)} - 1_{\{ x \geq y \}} W^{(q)} (x-y) \Big] \diff y \\
&= \int_{B \cap [A,\infty)} \left[ e^{- \zeta_q(y-A)} W^{(q)} (x-A) - 1_{\{x \geq y\}}W^{(q)} (x-y) \right] \diff y
\end{align*}
where the second equality holds by (\ref{scale_function_asymptotic_version}). Hence, we have the following result.
\begin{lemma}\label{lem:3} Fix $q > 0$ and $0 < A < x$. We have
\begin{align*}
\mu_x^{(A,q)}(\diff y) = \left\{ \begin{array}{ll} \Big(e^{- \zeta_q(y-A)} W^{(q)} (x-A) - 1_{\{x \geq y\}} W^{(q)} (x-y) \Big) \diff y, & y \geq A, \\
0, & y < A.
\end{array} \right.
\end{align*}
\end{lemma}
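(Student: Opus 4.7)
The plan is to derive the resolvent kernel density by starting from the definition of $\mu_x^{(A,q)}$, importing the known two-sided exit identity for spectrally negative L\'evy processes, and then pushing the upper boundary to infinity.

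First, I would recall from Lemma \ref{lemma_measure} and Tonelli's theorem that $\mu_x^{(A,q)}(B) = \E^x \int_0^{\tau_A} e^{-qt} \mathbf{1}_{\{X_t \in B\}} \, \diff t$ for every Borel $B$. The formula displayed just above the statement (from Bertoin \cite{Bertoin_1997}, with antecedents in Emery \cite{Emery_1973} and Suprun \cite{Suprun_1976}) gives the $q$-resolvent density of $X$ killed on exiting $[A,a]$, namely
\[
\E^x \Big[\int_0^{\tau_A \wedge \tau_a^+} e^{-qt} \mathbf{1}_{\{X_t \in B\}} \diff t\Big] = \int_{B \cap [A,a]} \Big[\tfrac{W^{(q)}(x-A)\, W^{(q)}(a-y)}{W^{(q)}(a-A)} - \mathbf{1}_{\{x\geq y\}} W^{(q)}(x-y)\Big] \diff y.
\]
The core observation is that $\tau_A \wedge \tau_a^+ \uparrow \tau_A$ almost surely as $a \uparrow \infty$, because $\p^x\{\tau_a^+ < \infty\} \downarrow 0$ (the unstopped resolvent of $X$ is finite). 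Monotone convergence on the left-hand side thus gives $\mu_x^{(A,q)}(B)$.

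Second, to take the limit inside the integral on the right-hand side, I would rewrite the ratio via the scaled scale function: $W^{(q)}(a-y)/W^{(q)}(a-A) = e^{-\zeta_q(y-A)} W_{\zeta_q}(a-y)/W_{\zeta_q}(a-A)$. By \eqref{scale_function_asymptotic_version}, both $W_{\zeta_q}(a-y)$ and $W_{\zeta_q}(a-A)$ increase to $1/\psi'(\zeta_q) \in (0,\infty)$, so the pointwise limit is $e^{-\zeta_q(y-A)} W^{(q)}(x-A)$. For $y \geq A$ and $a \geq y$, monotonicity of $W_{\zeta_q}$ gives the uniform bound $W_{\zeta_q}(a-y)/W_{\zeta_q}(a-A) \leq 1$, so the integrand is dominated by the $y$-integrable function $e^{-\zeta_q(y-A)} W^{(q)}(x-A) + \mathbf{1}_{\{x\geq y\}} W^{(q)}(x-y)$ on $[A,\infty)$. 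Dominated convergence then yields the claimed density on $\{y \geq A\}$.

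Third, I would handle $y < A$. Since $X$ is spectrally negative and $\tau_A$ is the \emph{first} time $X$ enters $(-\infty,A]$, paths satisfy $X_t > A > y$ for every $t \in [0,\tau_A)$; hence $\mathbf{1}_{\{X_t \in B\}} = 0$ for all such $t$ when $B \subset (-\infty, A)$, and the instant $t = \tau_A$ contributes Lebesgue-null time. Therefore $\mu_x^{(A,q)}(B) = 0$ for $B \subset (-\infty,A)$, which gives the second branch of the piecewise density.

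The main obstacle is really just the justification of passing to the limit $a \to \infty$ inside the Bertoin formula; once the monotonicity and finite limit of $W_{\zeta_q}$ are invoked, the rest is bookkeeping. The key structural input is the existence of the exit identity and the scale-function asymptotic \eqref{scale_function_asymptotic_version}, both of which are available for any spectrally negative L\'evy process with $q>0$.
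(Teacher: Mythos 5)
Your argument is correct and is essentially the paper's own proof, which appears in the text immediately preceding the lemma: invoke the Bertoin--Emery--Suprun two-sided exit resolvent on $[A,a]$, rewrite the ratio of scale functions via $W_{\zeta_q}$, and let $a \uparrow \infty$ by dominated convergence using the monotone limit \eqref{scale_function_asymptotic_version} (your extra observations --- the bound $W_{\zeta_q}(a-y)/W_{\zeta_q}(a-A)\leq 1$ and the vanishing of the kernel below $A$ --- are just the details the paper leaves implicit). One small correction: the convergence $\tau_A \wedge \tau_a^+ \uparrow \tau_A$ is not justified by $\p^x\{\tau_a^+ < \infty\} \downarrow 0$ (that probability equals one for every $a$ when $X$ drifts to $+\infty$); what you need, and what holds, is simply that $\tau_a^+ \uparrow \infty$ a.s.\ pathwise, because the running supremum of $X$ is finite on compact time intervals, after which monotone convergence applies as you say.
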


By  (\ref{eq:mean-measure}) and Lemma \ref{lem:3}, we have, for any arbitrary $0 < A < x$,
\begin{align}
\E^{x}\left[ \int_0^{\tau_{A}} e^{-qt} h(X_t) \diff t \right] =
W^{(q)} (x-A) \int_0^\infty e^{- \zeta_q y}   h(y+A) \diff y -
\int_A^x W^{(q)} (x-y) h(y) \diff y, \label{eq:integral_to_tau_A}
\end{align}
and this can be used to express the regret function and the stopping value.  This also implies that the integrability condition \eqref{cond_integrability} is equivalent to $\int_0^\infty e^{- \zeta_q y}   h(y) \diff y < \infty$.

Using the $q$-resolvent kernel, we can also rewrite the violation risk.
\begin{lemma} \label{violation_risk_spectrally_negative}
For every $0 < A < x$, we have
\begin{align*}
R^{(q)}_x (\tau_A) = \frac 1 {\zeta_q} W^{(q)}(x-A) \int_A^\infty \Pi(\diff u) \left( 1 - e^{-\zeta_q (u-A)}\right) - \frac 1 q \int_A^\infty \Pi(\diff u) \left( Z^{(q)}(x-A)-Z^{(q)}(x-u)\right).
\end{align*}
\end{lemma}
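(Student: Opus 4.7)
The plan is to convert $R^{(q)}_x(\tau_A)$ into a resolvent-type integral via the compensation formula (Lévy system) and then invoke Lemma \ref{lem:3}. Since $X$ is spectrally negative and $A>0$, on $\{\tau_A<\infty\}$ the violation event $\{X_{\tau_A}\leq 0\}$ can occur only through a single downward jump at time $\tau_A$ of magnitude strictly greater than $X_{\tau_A-}>A$ (any continuous creeping would leave $X_{\tau_A}=A>0$). Moreover for every $s<\tau_A$ we have $X_s>A$, so the event $\{X_{s-}>A,\,X_s\leq 0\}$ is realized by at most one $s\leq\tau_A$, and precisely when $s=\tau_A$ and the violation occurs. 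Hence
\begin{align*}
R^{(q)}_x(\tau_A) \;=\; \E^x\!\left[\sum_{0<s\leq\tau_A} e^{-qs}\,1_{\{X_{s-}>A,\,X_s\leq 0\}}\right].
\end{align*}

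Next, since the summand is a nonnegative predictable functional of $(X_{s-},\Delta X_s)$, I would apply the compensation formula for the Lévy measure $\Pi$ on $(0,\infty)$. Observing that a jump of size $u>0$ from height $y>A$ lands at $y-u\leq 0$ iff $u\geq y$, this gives
\begin{align*}
R^{(q)}_x(\tau_A) \;=\; \E^x\!\left[\int_0^{\tau_A} e^{-qs}\,1_{\{X_s>A\}}\,\Pi\bigl([X_s,\infty)\bigr)\,\diff s\right].
\end{align*}
Now I invoke Lemma \ref{lem:3} with the test function $g(y)=\Pi([y,\infty))$ on $[A,\infty)$:
\begin{align*}
R^{(q)}_x(\tau_A) \;=\; W^{(q)}(x-A)\int_A^\infty e^{-\zeta_q(y-A)}\Pi([y,\infty))\,\diff y \;-\; \int_A^x W^{(q)}(x-y)\,\Pi([y,\infty))\,\diff y.
\end{align*}

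Finally, Fubini (legitimate by nonnegativity) reshapes the two pieces into the target form. In the first term, swapping orders gives $\int_A^\infty \Pi(\diff u)\int_A^u e^{-\zeta_q(y-A)}\diff y = \zeta_q^{-1}\int_A^\infty (1-e^{-\zeta_q(u-A)})\,\Pi(\diff u)$, producing the first summand. In the second term, swapping yields $\int_A^\infty \Pi(\diff u)\int_A^{u\wedge x} W^{(q)}(x-y)\,\diff y$; the substitution $w=x-y$ together with the definition $Z^{(q)}(z)=1+q\int_0^z W^{(q)}$ for $z>0$ and the convention $Z^{(q)}\equiv 1$ on $(-\infty,0]$ identifies the inner integral with $q^{-1}(Z^{(q)}(x-A)-Z^{(q)}(x-u))$, giving the second summand.

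The main obstacle is the careful application of the compensation formula up to the stopping time $\tau_A$ and the bookkeeping of the boundary case $X_{\tau_A-}=A$ (a creeping event contributing nothing to the violation risk); once these are handled, the remainder is routine Fubini and the definitional identity for $Z^{(q)}$.
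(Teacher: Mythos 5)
Your proof is correct and follows essentially the same route as the paper: both reduce the violation risk via the compensation formula to a resolvent-type integral against $\Pi$, then invoke Lemma \ref{lem:3} and the identity $Z^{(q)}(z)=1+q\int_0^z W^{(q)}(w)\,\diff w$ (with $Z^{(q)}\equiv 1$ on $(-\infty,0]$). The only difference is the order of Fubini — you integrate $\Pi$ first and apply the resolvent kernel to $y\mapsto\Pi([y,\infty))$, whereas the paper applies it to $1_{\{y\le u\}}$ for each fixed $u$ and then integrates $\Pi(\diff u)$ — which is immaterial.
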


By \eqref{eq:integral_to_tau_A} and Lemma \ref{violation_risk_spectrally_negative}, the difference function \eqref{diff_v_g} becomes, for all $0 < A < x$,
\begin{multline}
\begin{split}
\delta_A(x) &= R_x^{(q)}(\tau_{A})  -
\gamma \E^x \left[ \int^{\tau_A}_0 e^{-qt} h(X_t) \diff t \right] \\ &=\frac 1 {\zeta_q} W^{(q)}(x-A) \int_A^\infty \Pi(\diff
u) \left( 1 - e^{-\zeta_q (u-A)}\right) - \frac 1 q \int_A^\infty
\Pi(\diff u) \left( Z^{(q)}(x-A)-Z^{(q)}(x-u)\right) \\ & \qquad - \gamma
\left[ W^{(q)} (x-A) \int_0^\infty e^{- \zeta_q y}   h(y+A) \diff y
- \int_A^x W^{(q)} (x-y) h(y) \diff y \right].
\end{split}
\label{delta_spec_negative}
\end{multline}

\subsection{Continuous and smooth fit}
We now apply continuous and smooth fit to obtain the candidate threshold levels for the bounded and unbounded variation cases, respectively. Firstly, the continuous fit condition $\delta_A(A+) = 0$ requires in view of \eqref{delta_spec_negative} that
\begin{align}
W^{(q)}(0)  \Phi(A) = 0 \label{continuous_fit_cond}
\end{align}
where
\begin{align*}
\Phi(A) := \frac 1 {\zeta_q} \int_A^\infty \Pi(\diff u) \left( 1 - e^{-\zeta_q (u-A)}\right) - \gamma  \int_0^\infty e^{- \zeta_q y}   h(y+A) \diff y, \quad A > 0.
\end{align*}
Note in this calculation we used the fact that the second term in
(\ref{delta_spec_negative}) vanishes as $x \downarrow A$ by (\ref{eq:at-zero}). The condition \eqref{continuous_fit_cond}
automatically holds for the unbounded variation case by
\eqref{at_zero}, but for the bounded variation case it requires
\begin{align}
\Phi(A)  = 0.
\label{condition_spectrally_negative} \end{align}

For the unbounded variation case, we apply smooth fit.
For the violation risk, by using \eqref{eq:Z-q-function}-\eqref{eq:at-zero}, we have
\begin{align*}
\left. \frac \partial {\partial x} R^{(q)}_x (\tau_A) \right|_{x=A+}
&= \frac 1 {\zeta_q} W^{(q)'}(0+) \int_A^\infty \Pi(\diff u) \left(
1 - e^{-\zeta_q (u-A)}\right).
\end{align*}
For the regret function, by using \eqref{at_zero} in
particular $W^{(q)}(0)=0$, we obtain
\begin{align*}
 \left. \frac {\partial} {\partial x} \E^{x}\left[
\int_0^{\tau_{A}} e^{-qt} h(X_t) \diff t \right]  \right|_{x = A+}
&= \lim_{x \downarrow A} \left[ W^{(q)'} (x-A) \int_0^\infty e^{-
\zeta_q y}
h(y+A) \diff y  - \int_A^x  W^{(q)'} (x-y) h(y) \diff y \right] \\
&= W^{(q)'} (0+) \int_0^\infty e^{- \zeta_q y}  h(y+A) \diff y .
\end{align*}
Therefore, for the unbounded variation case, smooth fit requires
\begin{align*}
W^{(q)'}(0+) \Phi(A) = 0.
\end{align*}

Consequently, once we get \eqref{condition_spectrally_negative},
continuous fit holds for the bounded variation case and both
continuous and smooth fit holds for the unbounded variation case
(see Figure \ref{fig:smooth_continuous_fit} in Section
\ref{sec:numerical_results} for an illustration).  Because $h$ is
non-decreasing by assumption, we have
\begin{align*}
\Phi'(A) = - \int_A^\infty \Pi(\diff u)e^{-\zeta_q (u-A)} - \gamma
\int_0^\infty e^{-\zeta_q y} h'(y+A) \diff y < 0, \quad A > 0.
\end{align*}
Hence there exists \emph{at most} one root that satisfies
\eqref{condition_spectrally_negative}. We let $A^*$ be the root if
it exists and zero otherwise.  Because $\lim_{A \uparrow \infty}\Phi(A) < 0$, $A^*=0$ means that $\Phi(A) < 0$ for all $A > 0$.

\subsection{Verification of optimality}
We now show as in the last section that the optimal value function is $\phi := \phi_{A^*}$ (see \eqref{phi_A}) where in particular the case $A^*=0$ is defined by \eqref{v_negative_case}.  When $A^* > 0$, it can be attained by the strategy $\tau_{A^*}$ while when $A^*=0$ it can be approximated arbitrarily closely by $\tau_\varepsilon$ with sufficiently small $\varepsilon > 0$. Recall that $\phi(x)=\delta(x)+G(x)$ (with $\delta := \delta_{A^*}$) for $x>A^*$ and that
$G(x)$ in \eqref{def_stopping_value} for $x>0$ can be expressed in
terms of the scale function by taking the limit $A\downarrow 0$ in
\eqref{eq:integral_to_tau_A}.  The corresponding candidate value function becomes both for $A^* = 0$ and for $A^* > 0$
\begin{align}
\label{phi_spectrally_negative_positive}
\begin{split}
\phi(x) &= \frac 1 {\zeta_q} W^{(q)}(x-A^*) \int_{A^*}^\infty \Pi(\diff u) \left( 1 - e^{-\zeta_q (u-A^*)}\right) - \frac 1 q \int_{A^*}^\infty \Pi(\diff u) \left( Z^{(q)}(x-A^*) - Z^{(q)} (x-u)\right)  \\ &+ \gamma \left[  W^{(q)}(x) \int_0^\infty e^{-\zeta_q y} h(y) \diff y  - W^{(q)}(x-A^*) \int_0^\infty e^{-\zeta_q y} h(y+A^*) \diff y - \int_0^{A^*} W^{(q)}(x-y) h(y) \diff y \right]
\end{split}
\end{align}
 for every $x > 0$.  By definition, $\phi(x)=1$ for every $x \leq 0$.
In particular, when $A^* > 0$, we can simplify by using
\eqref{condition_spectrally_negative},
\begin{multline*}
\phi(x) = - \frac 1 q \int_{A^*}^\infty \Pi(\diff u) \left( Z^{(q)}(x-A^*) - Z^{(q)} (x-u)\right) \\  + \gamma \left[  W^{(q)}(x) \int_0^\infty e^{-\zeta_q y} h(y) \diff y   - \int_0^{A^*} W^{(q)}(x-y) h(y) \diff y \right], \quad x > 0. 
\end{multline*}
These expressions for the candidate value function $\phi$ are
valid not only on $(A^*,\infty)$ but also on $(0,A^*]$ thanks to
\eqref{eq:at-zero} and continuous fit.\\

In order to verify that $\phi$ is indeed optimal, we
only need to show that (1) $\phi$ is dominated by $G$ and (2) $\mathcal{L} \phi(x) \geq q \phi(x)$ on $(0,\infty) \backslash \{A^*\}$ where 
\begin{align*}
\mathcal{L} f(x) &= c f'(x) + \frac 1 2 \sigma^2 f''(x) + \int_0^\infty \left[ f(x-z) - f(x) +  f'(x) z 1_{\{0 < z < 1\}} \right] \Pi(\diff z), \\
\mathcal{L} f(x) &= \mu f'(x) + \int_0^\infty \left[ f(x-z) - f(x) \right] \Pi(\diff z),
\end{align*}
for the unbounded and bounded variation cases, respectively; see \eqref{def_mu} for the definition of $\mu$.  The former is proved in the
following lemma:

\begin{lemma} \label{lemma_difference_spectrally_negative}
We have $\phi(x) \leq G(x)$ for every $x \in \R$.
\end{lemma}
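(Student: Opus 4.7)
The plan is to follow the same route as in Lemma 4.5 for the double-exponential case: show that $A \mapsto \phi_A(x)$ is non-decreasing on $[A^*, x)$, and then pass to the limit $A \uparrow x$. Since $\phi(x) = G(x) = 1$ for $x \leq 0$ and $\phi(x) = G(x)$ for $0 < x \leq A^*$ by \eqref{phi_A}, the only nontrivial range is $x > A^*$, where $\phi(x) - G(x) = \delta_{A^*}(x)$ and the goal reduces to $\delta_{A^*}(x) \leq 0$.

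The heart of the argument will be a direct computation of $\partial \delta_A(x) / \partial A$ from \eqref{delta_spec_negative}. Differentiating term by term, I would use $Z^{(q)\prime}(y) = q W^{(q)}(y)$ together with the identity
\[
\int_0^\infty e^{-\zeta_q y} h(y+A) \diff y = e^{\zeta_q A}\int_A^\infty e^{-\zeta_q z} h(z) \diff z,
\]
whose $A$-derivative is $\zeta_q \int_0^\infty e^{-\zeta_q y} h(y+A) \diff y - h(A)$; the resulting $h(A)$ term will cancel exactly against the one from $\gamma \partial_A \int_A^x W^{(q)}(x-y) h(y) \diff y$. For the two $\Pi$-integrals, the integrands $1 - e^{-\zeta_q(u-A)}$ and $Z^{(q)}(x-A) - Z^{(q)}(x-u)$ both vanish at $u = A$, so no Stieltjes boundary contribution arises (using that $\Pi$ has no atoms). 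After combining, I expect everything to collapse to the clean factorization
\[
\frac{\partial \delta_A(x)}{\partial A} \;=\; -\,\Phi(A)\,\bigl[\,W^{(q)\prime}(x-A) - \zeta_q W^{(q)}(x-A)\,\bigr] \;=\; -\,\Phi(A)\,e^{\zeta_q(x-A)}\, W_{\zeta_q}'(x-A),
\]
where the last equality uses $W_{\zeta_q}(y) = e^{-\zeta_q y} W^{(q)}(y)$. Since $W_{\zeta_q}$ is non-decreasing, the bracketed factor is non-negative for $x > A$.

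With the factorization in hand, the conclusion is immediate. When $A^* > 0$, strict monotonicity of $\Phi$ together with $\Phi(A^*) = 0$ gives $\Phi(A) \leq 0$ on $[A^*, \infty)$, hence $\partial_A \phi_A(x) = \partial_A \delta_A(x) \geq 0$ on $[A^*, x)$; when $A^* = 0$, $\Phi < 0$ on $(0, \infty)$ by construction and the same inequality holds on $(0, x)$. Letting $A \uparrow x$ in \eqref{delta_spec_negative} and using \eqref{eq:at-zero} (so that $W^{(q)}(x-A) \to W^{(q)}(0)$, $Z^{(q)}(x-A) \to 1$, and $Z^{(q)}(x-u) = 1$ for $u > x$), the limit of $\delta_A(x)$ equals $W^{(q)}(0)\,\Phi(x)$. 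In the unbounded variation case this is $0$; in the bounded variation case $x > A^* \geq 0$ forces $\Phi(x) \leq 0$ by strict monotonicity of $\Phi$. Either way $\phi_{A^*}(x) \leq G(x)$, and the $A^* = 0$ subcase is closed by applying the same inequality to $\phi_\varepsilon(x)$ and sending $\varepsilon \downarrow 0$ via \eqref{v_negative_case}.

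The hard part will be the bookkeeping in the differentiation of the $\Pi$-integrals: one must confirm that the no-atom assumption on $\Pi$ combined with the vanishing of the integrands at $u = A$ really does eliminate all boundary corrections, and that \eqref{cond_integrability} (equivalent to $\int_0^\infty e^{-\zeta_q y} h(y) \diff y < \infty$) together with the exponential control \eqref{scale_function_asymptotic} of $W^{(q)}$ justifies the requisite dominated convergence. Once that bookkeeping is in place, the cancellation of the $h(A)$ terms and the appearance of $\Phi(A)$ as a common factor are essentially dictated by the definitions of $\Phi$ and $A^*$.
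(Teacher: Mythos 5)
Your proposal is correct and follows essentially the same route as the paper's proof: differentiating $\delta_A(x)$ in $A$, obtaining the factorization $\partial_A \delta_A(x) = -W_{\zeta_q}'(x-A)\,e^{\zeta_q(x-A)}\,\Phi(A)$, using monotonicity of $\Phi$ and the definition of $A^*$ to get $\partial_A\delta_A(x)\geq 0$ on $[A^*,x)$, and identifying $\lim_{A\uparrow x}\delta_A(x)=W^{(q)}(0)\Phi(x)\leq 0$, with the $A^*=0$ case handled via $\phi_\varepsilon$ and $\varepsilon\downarrow 0$. The anticipated cancellations (the $h(A)$ boundary terms and the absence of Stieltjes boundary contributions since the $\Pi$-integrands vanish at $u=A$) are exactly what occurs in the paper's computation.
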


Now we recall that the processes $\left\{ e^{-q (t \wedge \theta
\wedge \tau_a^+)} W^{(q)}(X_{t \wedge \theta \wedge \tau_a^+}); t \geq 0
\right\}$ and $\left\{ e^{-q (t \wedge \theta \wedge \tau_a^+)}
Z^{(q)}(X_{t \wedge \theta \wedge \tau_a^+}); t \geq 0 \right\}$ are
$\p^x$-martingales for any $0 < x < a$; see page 229 in Kyprianou
\cite{Kyprianou_2006}. Therefore, we have
\begin{align}
\mathcal{L} W^{(q)} (x) = q W^{(q)} (x) \quad \textrm{and} \quad \mathcal{L} Z^{(q)} (x) = q Z^{(q)} (x), \quad x > 0. \label{scale_generator_zero}
\end{align}
  We take advantage of \eqref{scale_generator_zero} to show the following lemma.


\begin{lemma} \label{generator_smaller_than_zero_spectrally_negative}
\begin{enumerate}
\item We have
\begin{align*}
\mathcal{L} \phi(x) - q \phi(x) = 0, \quad x > A^*.
\end{align*}
\item If $A^* > 0$, we have
\begin{align*}
\mathcal{L} \phi(x) - q \phi(x) > 0, \quad 0 < x < A^*.
\end{align*}
\end{enumerate}
\end{lemma}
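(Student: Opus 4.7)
The plan is to treat the two regions $(A^*,\infty)$ and $(0,A^*)$ separately, in each case using the explicit representation of $\phi$ in terms of the scale functions $W^{(q)}$ and $Z^{(q)}$.

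\emph{Part (1).} For $x>A^*$ one may argue probabilistically: by the strong Markov property applied at $\tau_{A^*}$, the process $\{e^{-q(t\wedge\tau_{A^*})}\phi(X_{t\wedge\tau_{A^*}});\, t\ge 0\}$ is a $\p^x$-martingale. Given sufficient regularity of $\phi$ (which, away from $A^*$, follows from the $C^1$-property of $W^{(q)}$ on $(0,\infty)$ under atomlessness of $\Pi$, together with the extra smoothness from \cite{Chan_2009} when $\sigma>0$), It\^o's formula identifies the finite-variation part and forces $\mathcal{L}\phi-q\phi=0$ on $(A^*,\infty)$. Equivalently, one can verify this algebraically: by \eqref{phi_spectrally_negative_positive}, $\phi$ on $(A^*,\infty)$ is a linear combination of $W^{(q)}(x-c)$, $Z^{(q)}(x-c)$ and weighted integrals thereof in $c$, so by translation-invariance of $\mathcal{L}$ and the identities \eqref{scale_generator_zero}, each piece is annihilated by $\mathcal{L}-q$ on $(A^*,\infty)$.

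\emph{Part (2).} Note that $\phi=G$ on the whole of $(-\infty,A^*]$; in particular $\phi(x)=\gamma Q^{(q,h)}(x)$ for $0<x<A^*$ and $\phi(x)=1$ for $x\le 0$, so $\mathcal{L}\phi-q\phi=\mathcal{L}G-qG$ on $(0,A^*)$. Standard Dynkin arguments (applied to the martingale $e^{-q(t\wedge\theta)}Q^{(q,h)}(X_{t\wedge\theta})+\int_0^{t\wedge\theta}e^{-qs}h(X_s)\diff s$) give
\begin{align*}
\mathcal{L} Q^{(q,h)}(x)-q Q^{(q,h)}(x)+h(x)=0,\qquad x>0,
\end{align*}
when $Q^{(q,h)}$ is extended by $0$ on $(-\infty,0]$. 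Since $G(y)-\gamma Q^{(q,h)}(y)=1_{\{y\le 0\}}$ on $\mathbb{R}$, a direct computation of $\mathcal{L}\,1_{\{\cdot\le 0\}}$ at $x>0$ (where only the nonlocal part survives) yields $\overline{\Pi}(x):=\Pi((x,\infty))$, so that
\begin{align*}
\mathcal{L}G(x)-qG(x)=\overline{\Pi}(x)-\gamma h(x),\qquad 0<x<A^*.
\end{align*}
To conclude positivity, use Fubini to rewrite the defining quantity of $A^*$ as
\begin{align*}
\Phi(A)=\int_0^\infty e^{-\zeta_q y}\bigl[\overline{\Pi}(A+y)-\gamma h(A+y)\bigr]\diff y.
\end{align*}
By monotonicity of $\overline{\Pi}$ (non-increasing) and $h$ (non-decreasing), the bracketed integrand is non-increasing in $y$; hence if $\overline{\Pi}(x)-\gamma h(x)\le 0$ at some $0<x<A^*$, it would remain so for all larger arguments, forcing $\Phi(x)\le 0$ and contradicting the strict monotonicity of $\Phi$ together with $\Phi(A^*)=0$. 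Therefore $\overline{\Pi}(x)>\gamma h(x)$ throughout $(0,A^*)$, as required.

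The main technical hurdle is the regularity needed to run the It\^o/generator identity in Part (1) when $\sigma>0$, where $C^2$-smoothness of $\phi$ (or of $W^{(q)}$) must be invoked via \cite{Chan_2009}, or alternatively bypassed by the purely algebraic route. The delicate step in Part (2) is the upgrade from the weighted-average positivity $\Phi(x)>0$ to the pointwise strict inequality $\overline{\Pi}(x)>\gamma h(x)$; this crucially leverages the non-decreasing assumption on $h$, without which the monotonicity argument would fail.
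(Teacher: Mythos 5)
Your Part (1) has a concrete gap. The ``algebraic route'' is not correct as stated: it is false that each piece of \eqref{phi_spectrally_negative_positive} is annihilated by $\mathcal{L}-q$. The term $\tfrac1q\int_{A^*}^\infty\Pi(\diff u)\,Z^{(q)}(x-u)$ is not killed: for $u>x$ the function $y\mapsto Z^{(q)}(y-u)$ equals $1$ at and below $x$, so $(\mathcal{L}-q)Z^{(q)}(\cdot-u)(x)=-q$, and these $u$ contribute $-\Pi(x,\infty)$. Moreover, since $\mathcal{L}$ is nonlocal, the computation must use the actual values of $\phi$ on $(-\infty,0]$, which are $1$, not the value $0$ that the scale-function formula produces there; this mismatch contributes $+\Pi(x,\infty)$. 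The paper's proof is precisely the bookkeeping of this cancellation: it introduces $\widetilde{\phi}$ ($=\phi$ on $(0,\infty)$, $=0$ on $(-\infty,0]$), shows $(\mathcal{L}-q)\widetilde{\phi}(x)=-\Pi(x,\infty)$ using \eqref{scale_generator_zero} and \eqref{eq:at-zero}, and then adds back $\mathcal{L}1_{\{\cdot\le 0\}}(x)=\Pi(x,\infty)$. Your primary (probabilistic) route --- martingality of $e^{-q(t\wedge\tau_{A^*})}\phi(X_{t\wedge\tau_{A^*}})$ plus It\^o --- is sound in spirit but waves at exactly the points that require work: $\phi$ is discontinuous at $0$, so It\^o cannot be applied to $\phi$ on $\R$ directly (this is why the paper's proof of Proposition \ref{proposition_optimality_spectrally_negative} introduces approximations $\phi_n$); the needed $C^2$/$C^1$ regularity of $\phi$ must be extracted from the scale-function representation; passing from ``the drift vanishes along paths a.e.''\ to the pointwise identity requires continuity of $(\mathcal{L}-q)\phi$; and the case $A^*=0$, where $\phi$ is defined by the limit \eqref{v_negative_case} rather than as the cost of $\tau_{A^*}$, is not covered by a martingale argument at $\tau_{A^*}$.

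Your Part (2) arrives at the same identity $(\mathcal{L}-q)\phi(x)=\Pi(x,\infty)-\gamma h(x)$ on $(0,A^*)$ as the paper, though the paper derives it from the explicit representation \eqref{eq:integral_to_tau_A} by integration by parts, which also supplies the smoothness that your ``standard Dynkin argument'' for $Q^{(q,h)}$ silently assumes; to be rigorous you should route this step through the scale-function formula as well. Your positivity step, however, is genuinely different and arguably cleaner: the Fubini rewriting $\Phi(A)=\int_0^\infty e^{-\zeta_q y}\bigl[\Pi(A+y,\infty)-\gamma h(A+y)\bigr]\diff y$, combined with monotonicity of $\Pi(\cdot,\infty)$ and $h$ and strict monotonicity of $\Phi$ with $\Phi(A^*)=0$, yields the strict pointwise inequality $\Pi(x,\infty)>\gamma h(x)$ on $(0,A^*)$ directly, whereas the paper bounds $\Pi(x,\infty)-\gamma h(x)$ below by $\zeta_q\Phi(A^*)=0$ through two inequalities and, literally read, delivers only $\ge 0$. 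So keep your Part (2) conclusion (with the derivation of the generator identity patched), and repair Part (1) either by carrying out the paper's explicit $\widetilde{\phi}$ computation or by supplying the regularity and discontinuity arguments your martingale route needs.
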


Finally, Lemmas \ref{lemma_difference_spectrally_negative}-\ref{generator_smaller_than_zero_spectrally_negative} are used to show the optimality of  $\phi$.
\begin{proposition} \label{proposition_optimality_spectrally_negative}
We have
\begin{align*}
\phi(x) = \inf_{\tau \in \S} \E^x \left[ e^{-q \tau} G(X_\tau) 1_{\left\{ \tau < \infty \right\}} \right], \quad x > 0.
\end{align*}
\end{proposition}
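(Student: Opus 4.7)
The plan is a standard verification argument. The direction $\phi(x) \geq \inf_{\tau \in \S} \E^x[e^{-q\tau}G(X_\tau)1_{\{\tau<\infty\}}]$ is immediate from the construction of $\phi$: when $A^* > 0$, the stopping time $\tau_{A^*} \in \S$ attains $\phi(x)$, and when $A^* = 0$, $\phi(x) = \lim_{\varepsilon \downarrow 0}\phi_\varepsilon(x)$ can be approximated from above by $\phi_\varepsilon(x)$, which is itself attained by $\tau_\varepsilon \in \S$. The substantive content is therefore the reverse inequality $\phi(x) \leq \E^x[e^{-q\tau}G(X_\tau)1_{\{\tau<\infty\}}]$ for every $\tau \in \S$.

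To this end, I would apply It\^o's formula to the process $e^{-q(t \wedge \tau \wedge T_n)}\phi(X_{t \wedge \tau \wedge T_n})$, where $T_n := \inf\{t \geq 0 : X_t \geq n\}$ localizes the local-martingale part. Since $\tau \in \S$ means $\tau \leq \theta$, the stopped process takes values in $(0,\infty)$ (away from the violation event), and the key input is Lemma \ref{generator_smaller_than_zero_spectrally_negative}, which gives $(\mathcal{L}-q)\phi \geq 0$ on $(0,\infty)\setminus\{A^*\}$. Combined with enough smoothness of $\phi$, namely smooth fit at $A^*$ together with the $C^1$ regularity of $W^{(q)}$ on $(0,\infty)$ in the unbounded variation case, and continuous fit in the bounded variation case where the generator contains no diffusion term, these produce the submartingale estimate
\begin{align*}
\phi(x) \leq \E^x \bracks{e^{-q(t \wedge \tau \wedge T_n)}\phi(X_{t \wedge \tau \wedge T_n})}.
\end{align*}

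The final step is to pass to the limits $n \to \infty$ and $t \to \infty$. By construction $\phi \equiv 1$ on $(-\infty,0]$, and $\phi \leq G$ on $(0,\infty)$ by Lemma \ref{lemma_difference_spectrally_negative}; combined with $e^{-q(t \wedge \tau)} \leq 1$ and the integrability hypothesis \eqref{cond_integrability}, dominated convergence applies. On $\{\tau < \infty\}$ Lemma \ref{lemma_difference_spectrally_negative} gives $\phi(X_\tau) \leq G(X_\tau)$, and on $\{\tau = \infty\}$ the factor $e^{-qt}$ decays to zero as $t \to \infty$ because $q > 0$ and $\phi$ is bounded. Combining these yields $\phi(x) \leq \E^x[e^{-q\tau}G(X_\tau)1_{\{\tau<\infty\}}]$, as desired. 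The main obstacle I expect is justifying It\^o's formula at the kink $A^*$ in the bounded variation case, where only continuous fit holds; this should be handled via the Meyer--It\^o formula or a smooth approximation of $\phi$ across $A^*$, exploiting the absence of a diffusion term. A secondary subtlety is the case $A^* = 0$, where $\phi$ is defined as a limit of $\phi_\varepsilon$; the verification estimate is obtained for each $\phi_\varepsilon$ and then passed to the limit $\varepsilon \downarrow 0$ using uniform boundedness.
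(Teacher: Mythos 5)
Your overall architecture --- verification via It\^o's formula, the generator inequality of Lemma \ref{generator_smaller_than_zero_spectrally_negative}, the domination $\phi \leq G$ of Lemma \ref{lemma_difference_spectrally_negative}, uniform boundedness of $\phi$ and $q>0$ to kill the contribution of $\{\tau = \infty\}$ --- is the same as the paper's. But there is a genuine gap: you never address the discontinuity of $\phi$ at zero, which is the difficulty the paper's proof is organized around. Your claim that the stopped process $X_{t \wedge \tau \wedge T_n}$ takes values in $(0,\infty)$ because $\tau \leq \theta$ is not correct: $\tau$ may equal $\theta$, at which moment $X_\tau \leq 0$ (the process jumps strictly below zero, or, when $\sigma > 0$, creeps to exactly $0$ with positive probability); moreover the jump part of $\mathcal{L}\phi$ evaluates $\phi$ on $(-\infty,0]$. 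Since $\phi \equiv 1$ on $(-\infty,0]$ while $\phi(0+) < 1$ by \eqref{assumption_G} and Lemma \ref{lemma_difference_spectrally_negative}, $\phi$ is discontinuous at $0$, so neither the classical It\^o formula nor the Meyer--It\^o formula (which requires a difference of convex, hence continuous, functions) can be applied to $e^{-q(t \wedge \tau \wedge T_n)}\phi(X_{t \wedge \tau \wedge T_n})$ as written. The smoothing you propose at the kink $A^*$ addresses a lesser issue (the paper dispatches it with a measure-zero remark) and does not touch the problem at $0$.

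The paper resolves this by approximating $\phi$ from below by uniformly bounded functions $\phi_n$ that are smooth across $0$, agree with $\phi$ on $(0,\infty)$ and increase to $\phi$ on $(-\infty,0)$ with $\phi_n(0) \to \phi(0+)$; It\^o's formula is applied to $\phi_n$, the error term $(\mathcal{L}-q)(\phi_n - \phi)$ is shown to have finite expectation via the compensation formula (through the bound $\E^x\bigl[\int_0^\theta e^{-qs}\Pi(X_{s-},\infty)\,\diff s\bigr] \leq 1$), and in the limit $n \to \infty$ one obtains $\E^x\bigl[e^{-q\tau}\bigl(\phi(X_\tau)1_{\{X_\tau \neq 0\}} + \phi(0+)1_{\{X_\tau = 0\}}\bigr)1_{\{\tau<\infty\}}\bigr] \leq \E^x\bigl[e^{-q\tau}\phi(X_\tau)1_{\{\tau<\infty\}}\bigr]$, where the inequality (rather than equality) is needed precisely because of creeping when $\sigma > 0$. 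Some device of this kind --- an approximation across $0$ together with control of the resulting generator error --- is missing from your argument, and without it the submartingale estimate you assert is not justified. The remaining ingredients of your outline (localization, bounded convergence using the uniform bound on $\phi$, handling $A^* = 0$ by letting $\varepsilon \downarrow 0$ in $\phi_\varepsilon$) do match the paper and are fine.
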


\section{Numerical Results} \label{sec:numerical_results}

We conclude this paper by providing numerical results on the models studied in Sections \ref{sec:double_exponential}-\ref{sec:spectrally_negative}.  We obtain optimal threshold levels $A^*$ for (1) the double exponential jump diffusion case with $h \equiv 1$, and for (2) the spectrally negative case with $h$ in the form of the \emph{exponential utility function}. We study how the solution depends on the process $X$.  We then verify continuous and smooth fit conditions for the bounded and unbounded variation cases, respectively.

\subsection{The double exponential jump diffusion case with $h \equiv 1$}
\label{subsection_double_exponential} We evaluate the results
obtained in Section \ref{sec:double_exponential} focusing on the
case $h \equiv 1$.  Here we plot the optimal threshold level $A^*$
defined in (\ref{def_a_tilde_q}) as a function of $\gamma$. The
values of $\xi_{1,q}$ and $\xi_{2,q}$ are obtained via the bisection
method with error bound $10^{-4}$.
\begin{figure}[htbp]
\begin{center}
\begin{minipage}{1.0\textwidth}
\centering
\begin{tabular}{cc}
\includegraphics[scale=0.5]{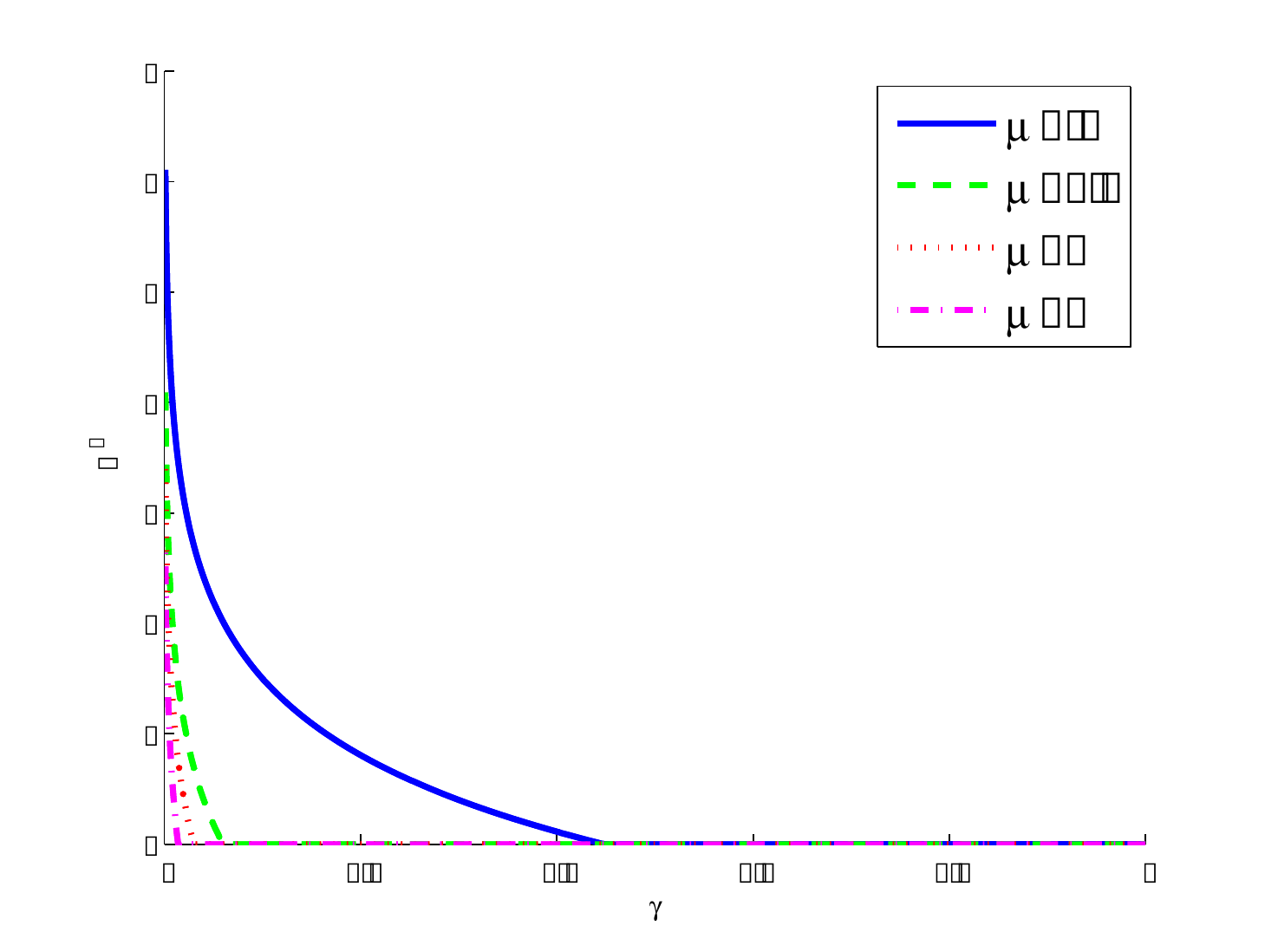}  & \includegraphics[scale=0.5]{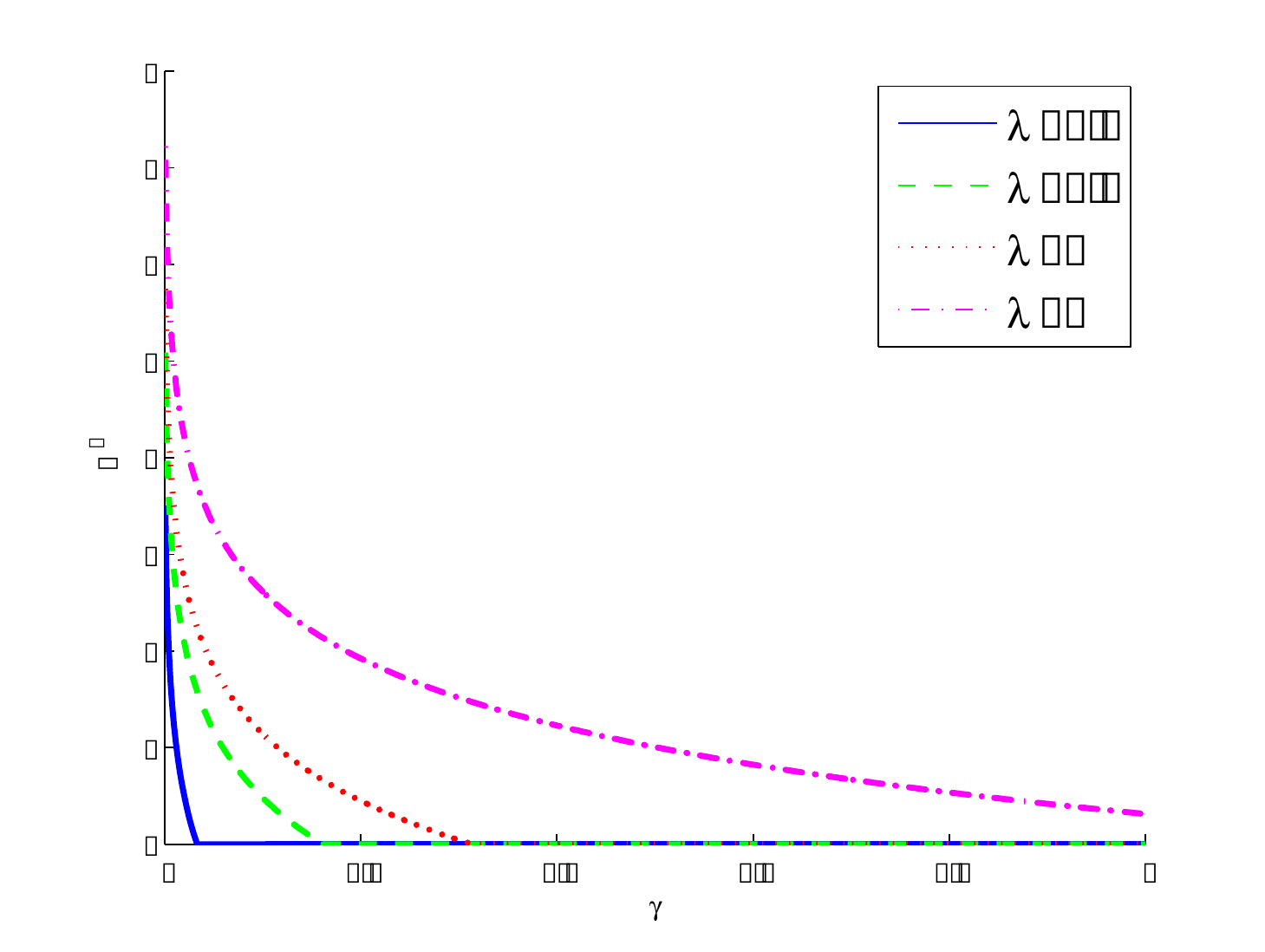}  \\
(i) $\mu$  & (ii) $\lambda$
 \vspace{0.7cm}
\\
\includegraphics[scale=0.5]{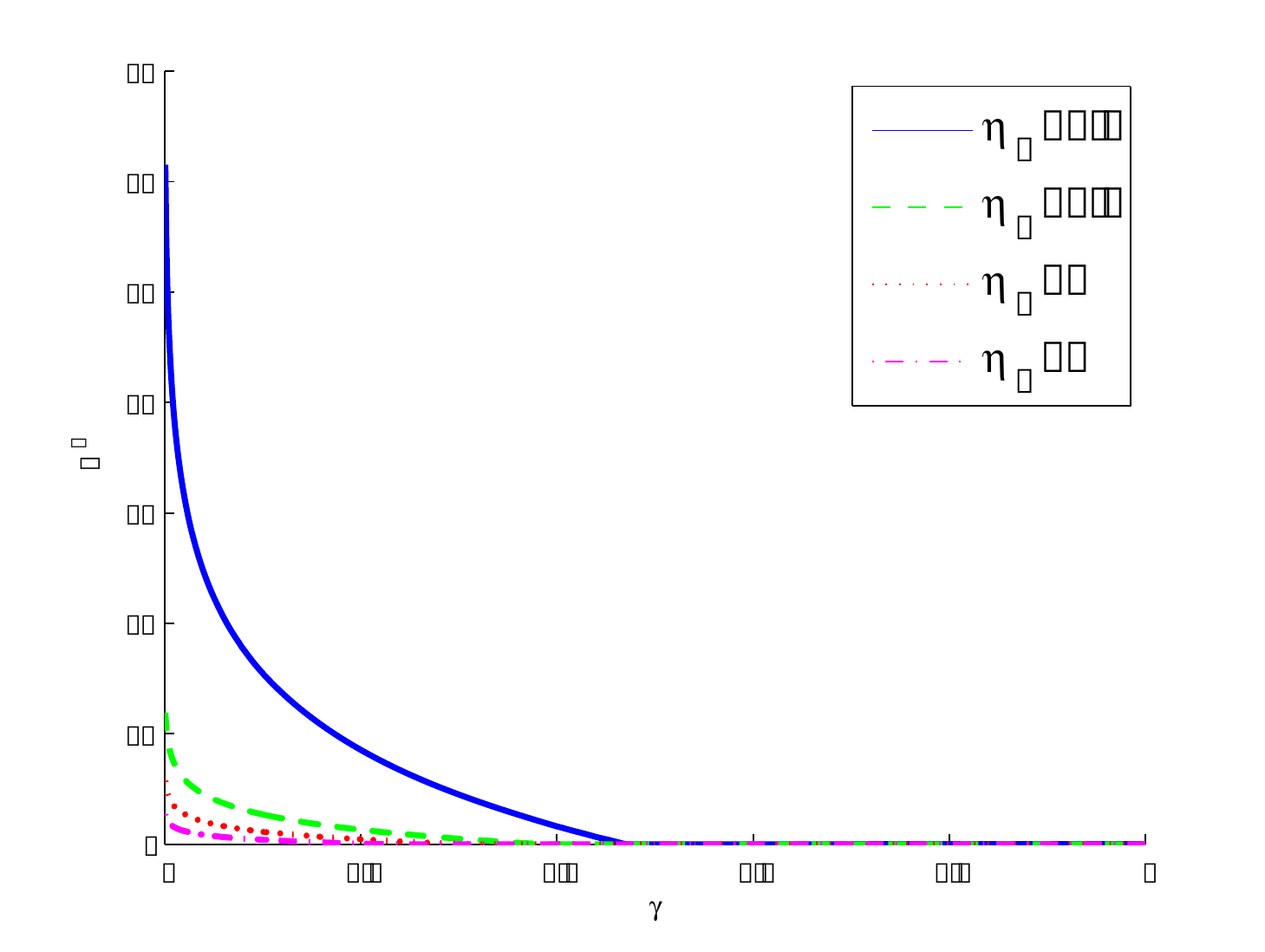}  & \includegraphics[scale=0.5]{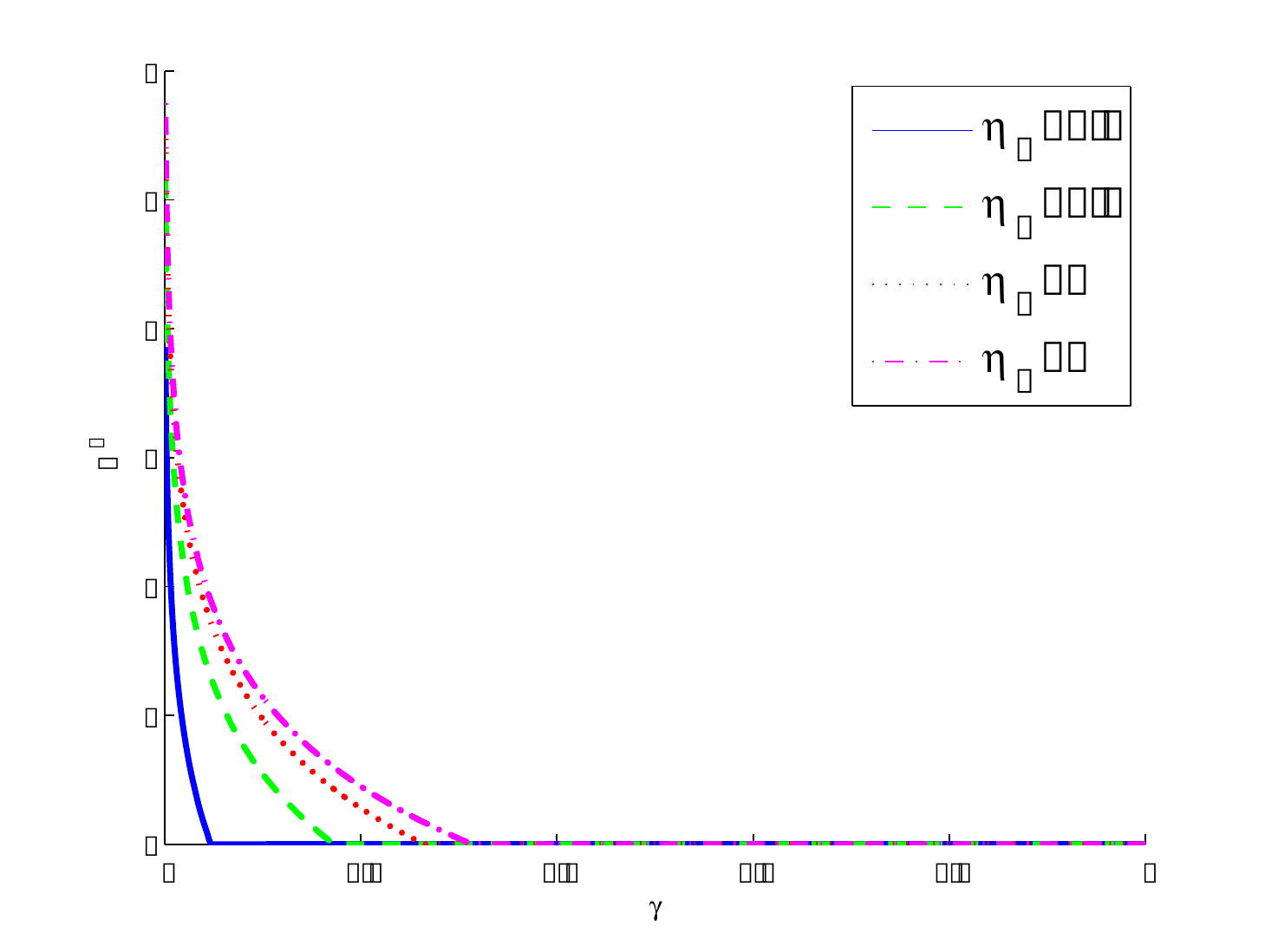} \\
(iii) $\eta_-$ & (iv) $\eta_+$
 \vspace{0.7cm}
\\
\includegraphics[scale=0.5]{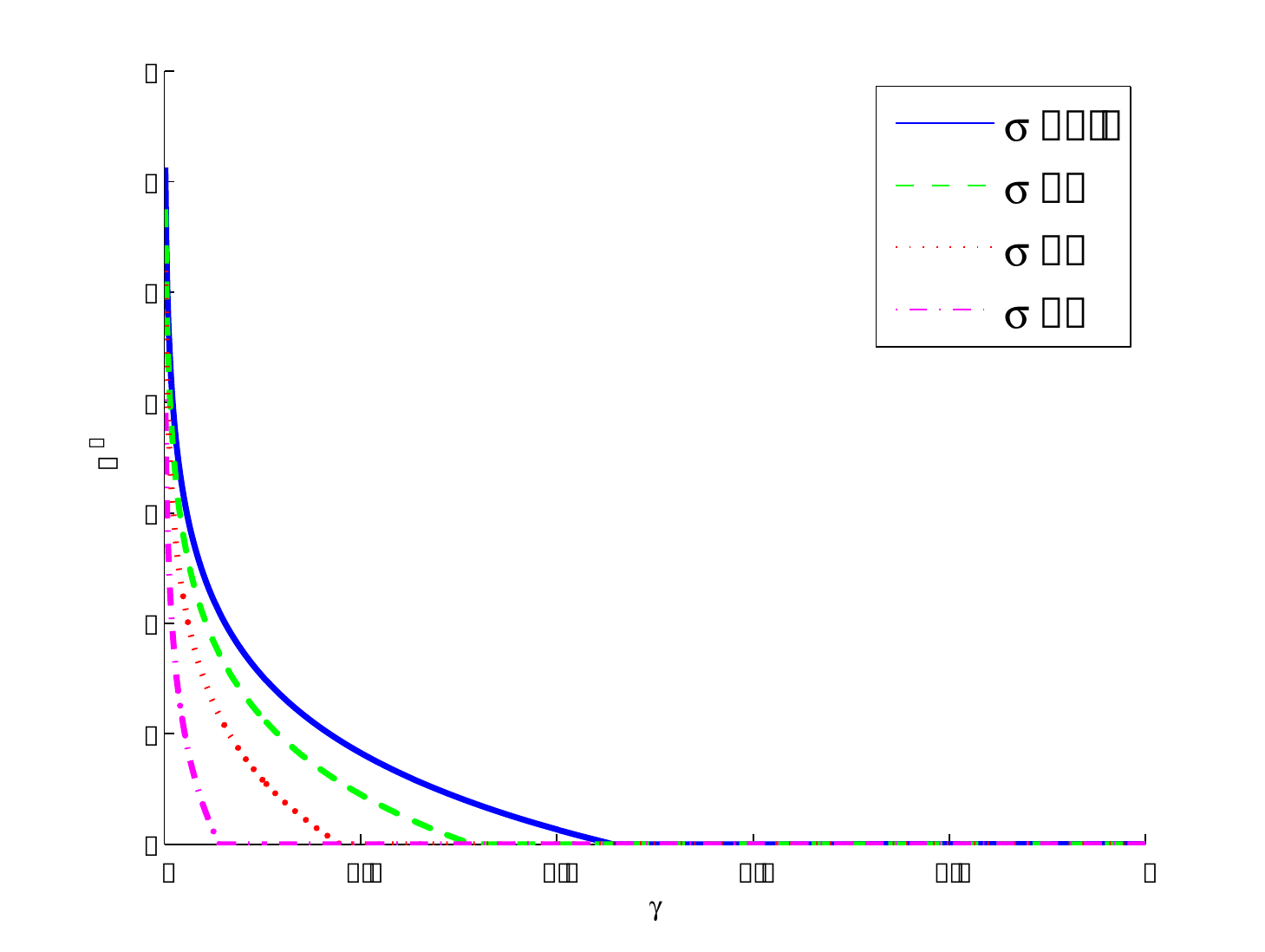}  & \includegraphics[scale=0.5]{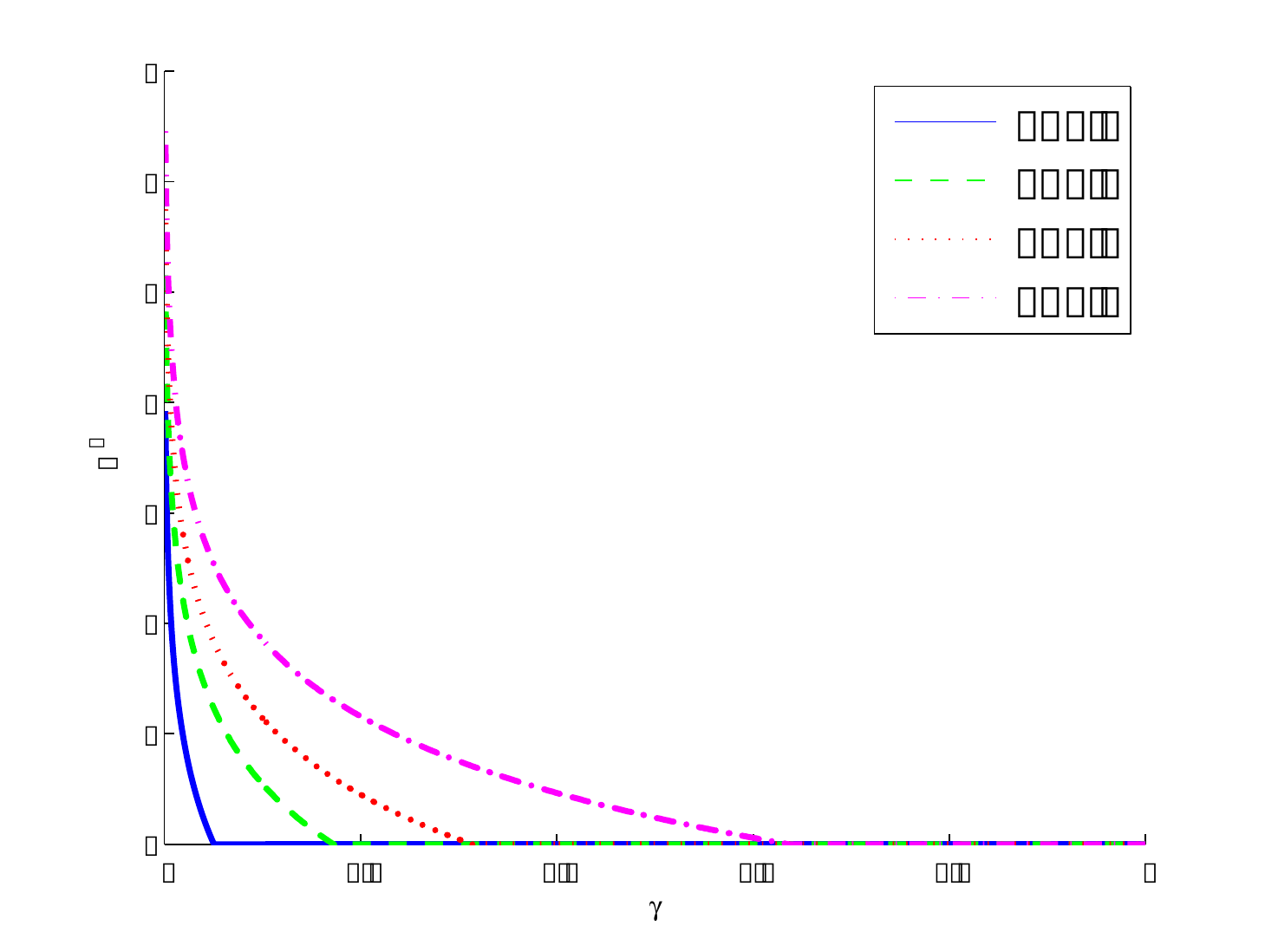}  \\
(v) $\sigma$ & (vi) $p$
\end{tabular}
\end{minipage}
\caption{The optimal threshold level $A^*$ with respect to various parameters: the parameters are
$\mu=-1$, $\sigma = 1$, $\eta_- = 1.0$, $\eta_+ = 2.0$, $p=0.5$,
$\lambda = 1$ unless they are specified.}
\label{fig:sensitivity_analysis}
\end{center}
\end{figure}

Figure \ref{fig:sensitivity_analysis} shows how the optimal threshold level
 changes with respect to each parameter when $q=0.05$. The results obtained in (i)-(iv) and (vi) are consistent with our intuition because these parameters determine the overall drift $\overline{u}$, and  $A^*$ is expected to decrease in $\overline{u}$.
  We show in (v) how it changes with respect to the diffusion coefficient $\sigma$; although it does not play a part in determining $\overline{u}$,
 we see that $A^*$ is in fact decreasing in $\sigma$.
 This is related to the fact that, as $\sigma$ increases, the probability of jumping over the level zero decreases.

\subsection{The spectrally negative \lev case with a general $h$}
We now consider the spectrally negative case and verify the results obtained in Section \ref{sec:spectrally_negative}. For the function $h$, we use the exponential utility function
\begin{align*}
h(x) = 1 - e^{-\rho x}, \quad x > 0.
\end{align*}
Here $\rho > 0$ is called the \emph{coefficient of absolute risk aversion}. It is well-known that $\rho = - h''(x)/h(x)$ for every $x > 0$, and, in particular, $h \equiv 1$ when
$\rho = \infty$.  We consider the \emph{tempered stable (CGMY) process} and the \emph{variance gamma process} with only downward jumps. The former has
a Laplace exponent
\begin{align*}
\psi(\beta) = c \beta + C \lambda^\alpha \Gamma (- \alpha) \left\{ \left( 1 + \frac \beta \lambda \right)^\alpha - 1 - \frac {\beta \alpha} \lambda \right\}
\end{align*}
and a \lev density given by
\begin{align*}
\Pi(\diff x) = C \frac {e^{-\lambda x}} {x^{1+\alpha}} 1_{\{x > 0\}} \diff x
\end{align*}
for some $\lambda > 0$ and $\alpha < 2$; see Surya \cite{Surya_2007} for the calculations. It has paths of bounded variation if and only if $\alpha < 1$.
 When $\alpha=0$, it reduces to the \emph{variance gamma} process; see Proposition 5.7.1 of Surya \cite{Surya_2007} for the form of its Laplace exponent.
We consider the case when $\sigma = 0$. The optimal threshold levels are computed by the bisection method using  \eqref{condition_spectrally_negative} with error bound $10^{-6}$.

Figure \ref{fig:results_spectrally_negative} shows the optimal threshold level $A^*$ as a function of $\gamma$ with various values of $\rho$.  We see that it is indeed monotonically decreasing in $\rho$. This can be also analytically verified  in view of the definition of $\Phi(A)$ because $h$ is decreasing in $\rho$ for every fixed $A$, and consequently the root $A^*$ must be decreasing in $\rho$. This is also clear because the regret function monotonically decreases in $\rho$.

\begin{figure}[htbp]
\begin{center}
\begin{minipage}{1.0\textwidth}
\centering
\begin{tabular}{cc}
\includegraphics[scale=0.5]{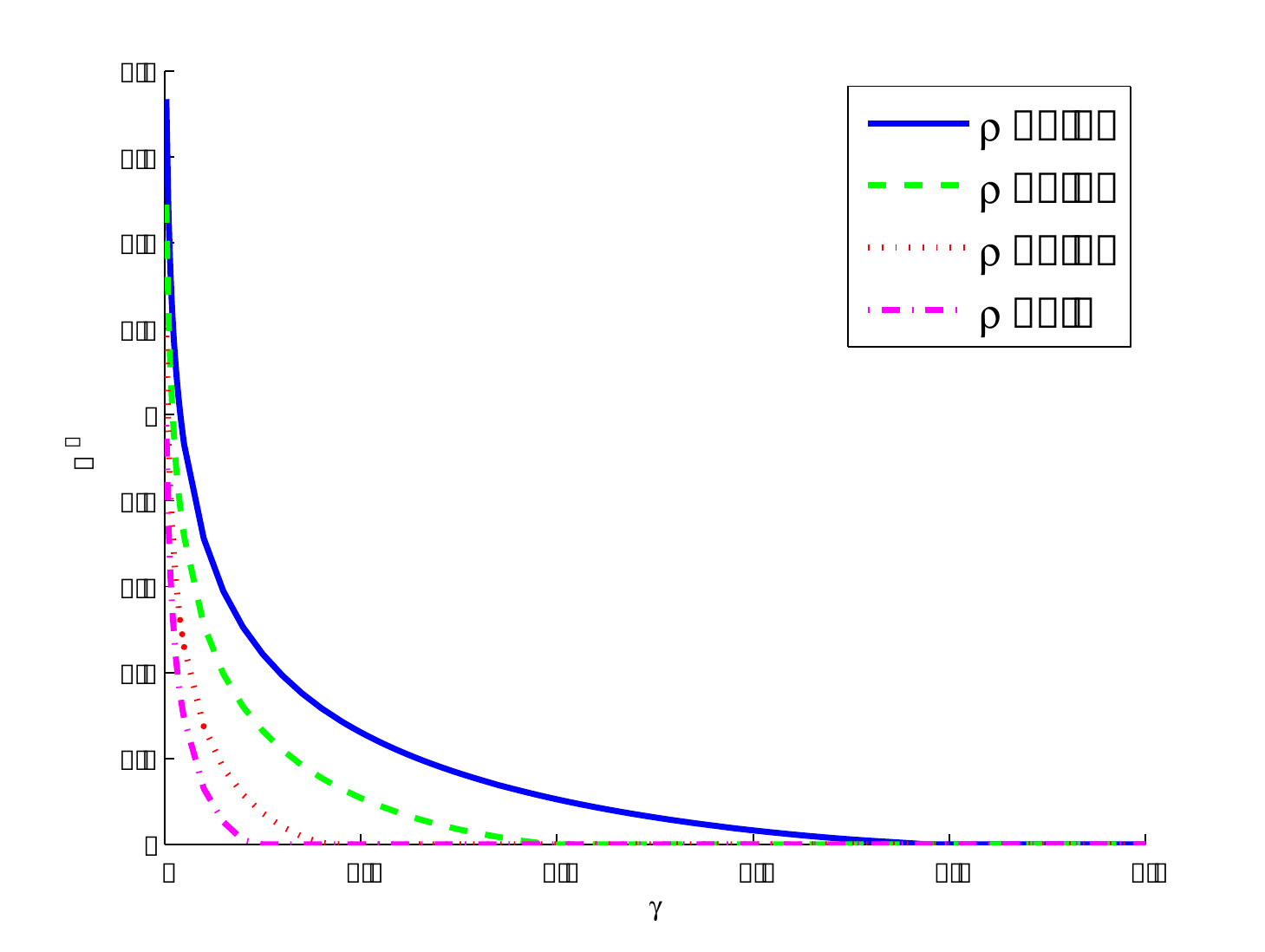}  & \includegraphics[scale=0.5]{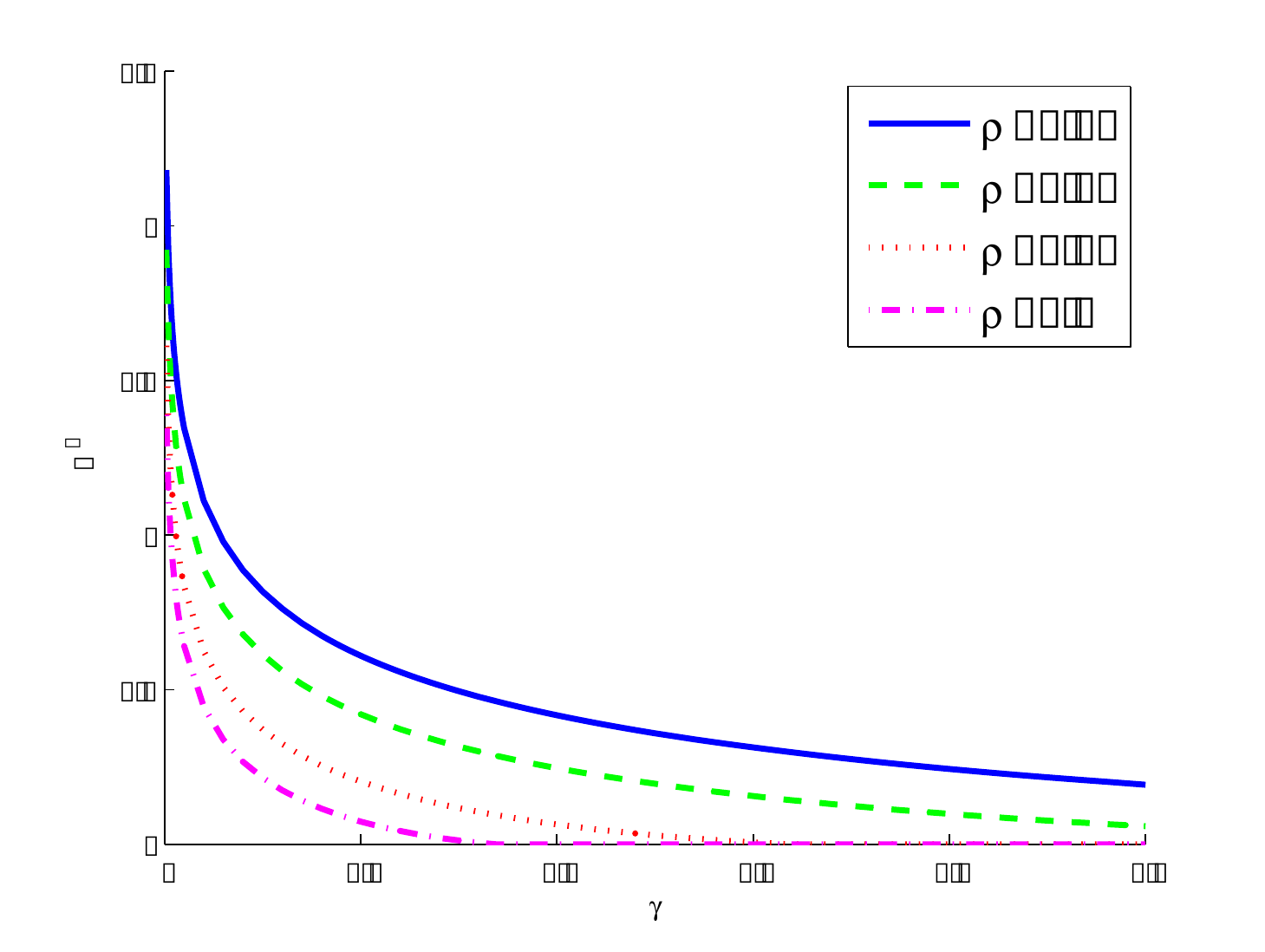} \\
(a) tempered stable (unbounded variation) & (b) tempered stable (bounded variation) 
\end{tabular}
\begin{tabular}{c}
\includegraphics[scale=0.5]{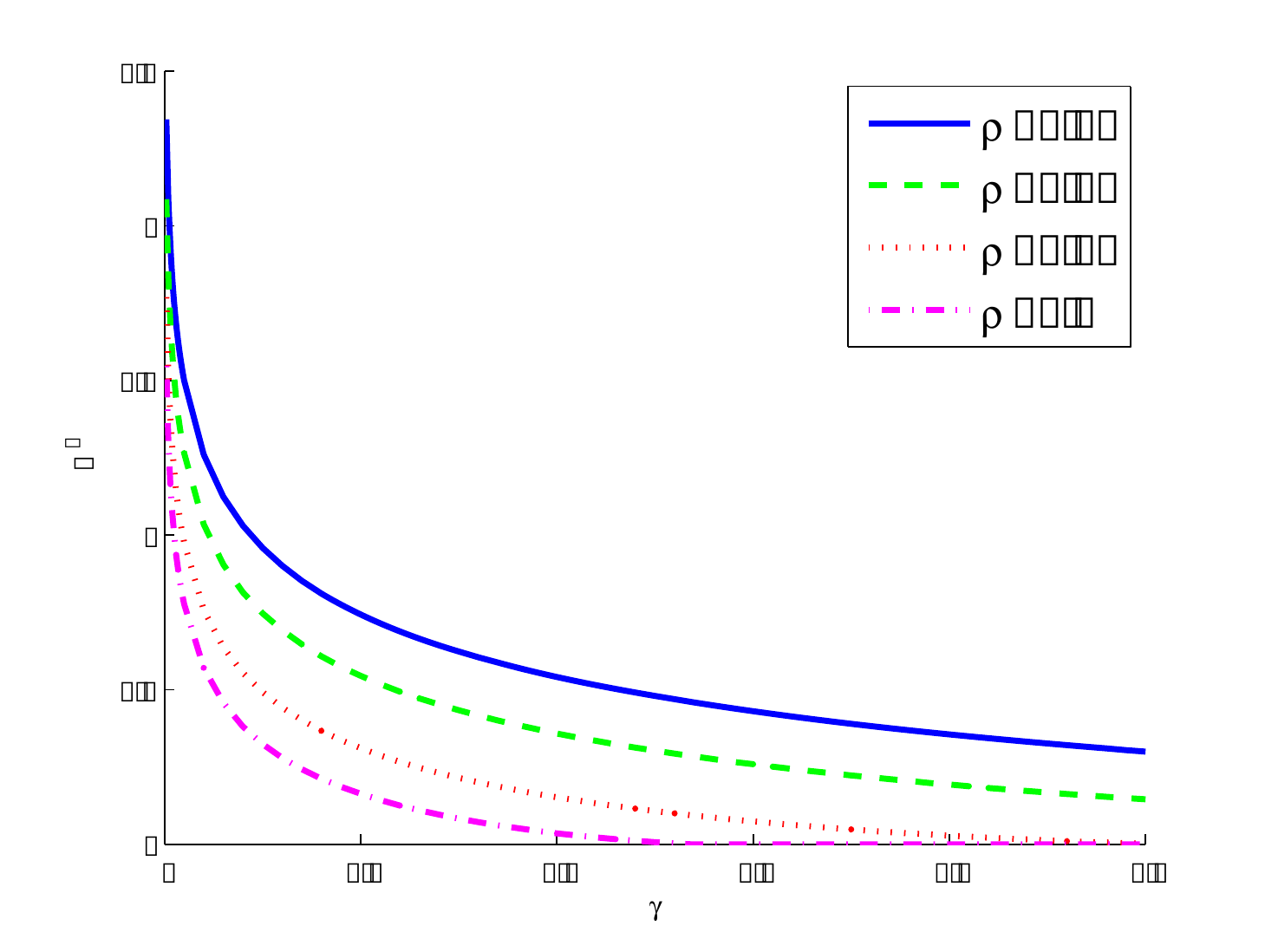}    \\
(c) variance gamma 
\end{tabular}
\end{minipage}
\caption{The optimal threshold level $A^*$ with various values of coefficients of absolute risk aversion $\rho$: (a) tempered stable (unbounded variation) with $\alpha = 1.5$, $\lambda = 2$, $C=0.05$ and $c = 0.05$, (b) tempered stable (bounded variation) with $\alpha = 0.8$, $\lambda = 2$, $C=0.075$ and $c = 0.05$,  and (c) variance gamma with with $\lambda = 2$, $C=0.075$ and $c = 0.05$.}
\label{fig:results_spectrally_negative}
\end{center}
\end{figure}

\subsection{Continuous and smooth fit}

We conclude this section by numerically verifying the continuous and smooth fit conditions.  Unlike the optimal threshold level $A^*$, the computation of the value function $\phi$ involves that of the scale function. Here we consider the spectrally negative \lev process with exponential jumps in the form \eqref{double_exponential_x} with $p=1$ and $\sigma \geq 0$.  We consider the bounded variation case ($\sigma=0$) and the unbounded variation case ($\sigma>0$).  We also set $h \equiv 1$. This is a special case of the spectrally negative \lev process with phase-type jumps, and its scale function can be obtained analytically as in Egami and Yamazaki \cite{Egami_Yamazaki_2010_2}.   In general, scale functions can be approximated by Laplace inversion algorithm by Surya \cite{Surya_2008} or the phase-type fitting approach by Egami and Yamazaki \cite{Egami_Yamazaki_2010_2}.  One drawback of the approximation methods of the scale function is that the error tends to explode as $x$ gets large (see \eqref{scale_function_asymptotic}).  Because our objective here is to accurately verify the continuous and smooth fit conditions, we use an example where an explicit form is known.  Notice that the threshold level $A^*$ can be computed independently of the scale function, and hence one can alternatively approximate the value function $\phi$ by simulation.   

Figure \ref{fig:smooth_continuous_fit} draws the stopping value $G$ as well as the value function $\phi$ for both the bounded and unbounded variation cases.   The value function $\phi$ is indeed continuous at $A^*$ for the bounded variation case and differentiable for the unbounded variation case.  It can be seen that $G$ indeed dominates $\phi$.  While $G$ is monotonically increasing on $(0,\infty)$, $\phi$ is decreasing for large $x$ and is expected to converge to zero as $x \rightarrow \infty$.

\begin{figure}[htbp]
\begin{center}
\begin{minipage}{1.0\textwidth}
\centering
\begin{tabular}{cc}
\includegraphics[scale=0.5]{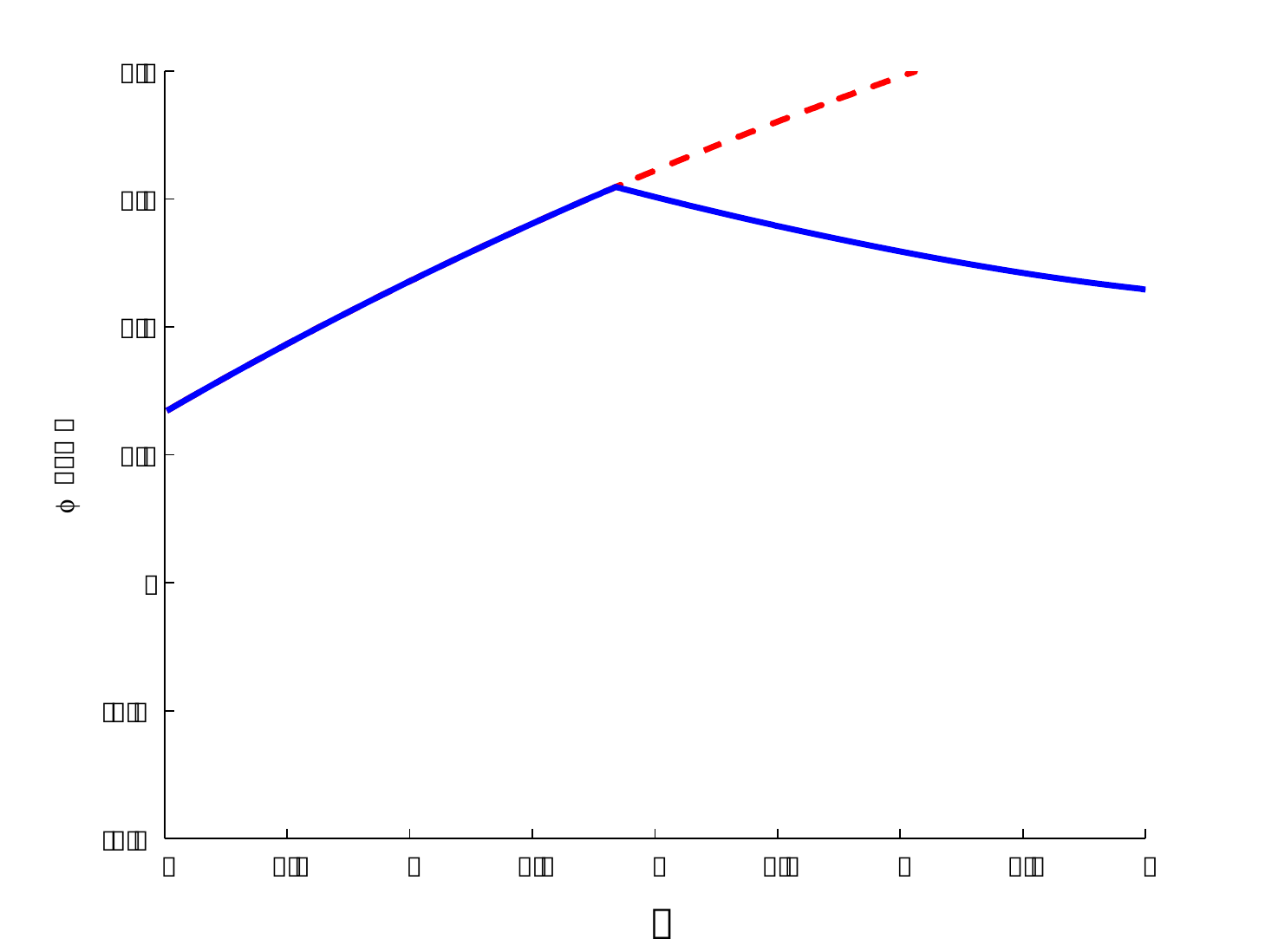}  & \includegraphics[scale=0.5]{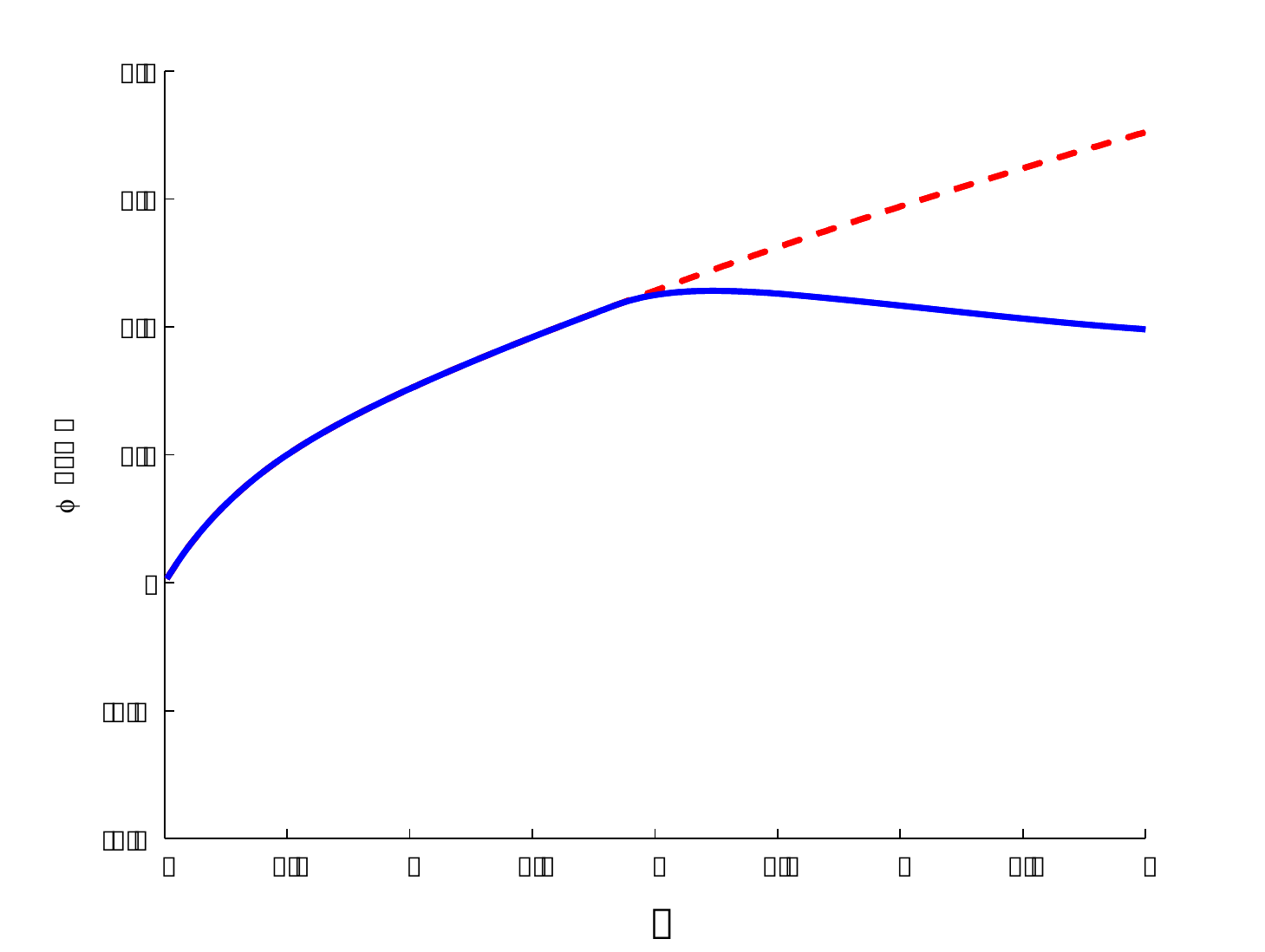} \\
(a) continuous fit & (b) smooth fit
\end{tabular}
\end{minipage}
\caption{Illustration of $\phi$ (solid) and $G$ (dotted) for (a) the bounded variation case with $\sigma = 0$ and $\mu = 0.3$ and (b) the unbounded variation case with $\sigma =0.5$ and $\mu = 0.175$.  Other parameters are $q = 0.05$, $h \equiv 1$, $\lambda = 0.5$, $\eta_- = 1$ and $\gamma = 0.04$.}
\label{fig:smooth_continuous_fit}
\end{center}
\end{figure}

\appendix

\section{Proofs}

\subsection{Proof of Lemma \ref{lemma_verification}}
We shall prove for the case $q > 0$ and then extend it to the case $q = 0$ and $\overline{u} < 0$.  
We first prove the following for the proof of Lemma \ref{lemma_verification}.

\begin{lemma} \label{lemma_inequality_generator}
Fix $q > 0$ and $x \in
\mathbb{R}$. Suppose that a function $w: \mathbb{R} \rightarrow \mathbb{R}$ in a neighborhood of $x > 0$ is given by
\begin{align*}
w(x) =  k + \sum_{i=1,2} k_i e^{-\xi_{i,q}x}
\end{align*}
for some $k$, $k_1$ and $k_2$ in $\R$. Then we have
\begin{align*}
q (w(x)-k) = \mathcal{L} w(x)- \lambda \left[ \int_{-\infty}^\infty  w(x+z)  f(z) \diff z  - \left( k + \sum_{i=1,2} k_i  \left( \frac {p \eta_-} {\eta_- - \xi_{i,q}} + \frac {(1-p)\eta_+} {\eta_+ + \xi_{i,q}} \right) e^{-\xi_{i,q}x} \right) \right].
\end{align*}
\end{lemma}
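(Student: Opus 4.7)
The plan is a direct calculation exploiting the characterization of $\xi_{1,q}$ and $\xi_{2,q}$ as roots of $\psi(\cdot) = q$. The key observation is that the generator $\mathcal{L}$ in \eqref{generator_double_exponential} can be split as
\begin{align*}
\mathcal{L}w(x) \;=\; \mu w'(x) + \tfrac{1}{2}\sigma^2 w''(x) + \lambda \int_{-\infty}^{\infty} w(x+z) f(z)\diff z - \lambda w(x),
\end{align*}
so that $\mathcal{L}w(x) - \lambda \int w(x+z) f(z)\diff z$ depends only on the \emph{local} values of $w$ near $x$ (through the derivatives) and on $w(x)$ itself. This is crucial since $w$ is only assumed to be of the exponential form in a neighborhood of $x$.

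Next, using the local representation, I would compute
\begin{align*}
w'(x) = -\sum_{i=1,2} k_i \xi_{i,q} e^{-\xi_{i,q}x}, \qquad w''(x) = \sum_{i=1,2} k_i \xi_{i,q}^2 e^{-\xi_{i,q}x},
\end{align*}
so that
\begin{align*}
\mu w'(x) + \tfrac{1}{2}\sigma^2 w''(x) = \sum_{i=1,2} k_i \Bigl(-\mu\xi_{i,q} + \tfrac{1}{2}\sigma^2 \xi_{i,q}^2\Bigr) e^{-\xi_{i,q}x}.
\end{align*}
Then the identity $\psi(-\xi_{i,q}) = q$ from \eqref{laplace_exponent_double_exponential} yields
\begin{align*}
-\mu\xi_{i,q} + \tfrac{1}{2}\sigma^2 \xi_{i,q}^2 = q - \lambda\Bigl(\tfrac{p\eta_-}{\eta_- - \xi_{i,q}} + \tfrac{(1-p)\eta_+}{\eta_+ + \xi_{i,q}} - 1\Bigr),
\end{align*}
for $i=1,2$, which is well-defined since $\eta_- - \xi_{1,q}>0$ and $\eta_- - \xi_{2,q}<0$ but $\xi_{2,q}<\eta_-$ is not required for this expression --- only for the integrals, which we deliberately avoid.

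Substituting and observing that $\sum_{i=1,2} k_i e^{-\xi_{i,q}x} = w(x) - k$ in the neighborhood of $x$, the $q$-part contributes $q(w(x)-k)$, the $-1$ inside the parenthesis contributes $+\lambda(w(x)-k)$, and these combine with the $-\lambda w(x)$ from the generator decomposition to produce exactly $-\lambda k$ plus the correction term
\begin{align*}
-\lambda \sum_{i=1,2} k_i \Bigl(\tfrac{p\eta_-}{\eta_- - \xi_{i,q}} + \tfrac{(1-p)\eta_+}{\eta_+ + \xi_{i,q}}\Bigr) e^{-\xi_{i,q}x}.
\end{align*}
Rearranging yields precisely the stated identity. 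There is no genuine obstacle; the only subtlety worth flagging in the writeup is that $\int w(x+z)f(z)\diff z$ refers to the actual (possibly non-exponential) $w$, while the bracketed expression inside the final formula represents the value this integral \emph{would} take if $w$ were globally of the exponential form --- which is why the roots $\xi_{i,q}$ (and their position relative to $\eta_-$) never need to be invoked for convergence of any integral in the argument.
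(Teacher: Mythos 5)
Your proposal is correct and follows essentially the same route as the paper: differentiate the local exponential form, invoke $\psi(-\xi_{i,q})=q$ as an algebraic identity for the rational function \eqref{laplace_exponent_double_exponential}, and use the decomposition $\mathcal{L}w(x)=\mu w'(x)+\tfrac12\sigma^2 w''(x)+\lambda\int w(x+z)f(z)\diff z-\lambda w(x)$ to isolate the nonlocal term. Your closing remark—that the bracketed expression is a purely formal substitute for the integral, so no convergence condition (in particular none involving $\xi_{2,q}$ versus $\eta_-$) is ever needed—is a correct and worthwhile clarification, though the paper leaves it implicit.
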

\begin{proof}
Because $\psi(-\xi_{1,q})  = \psi(-\xi_{2,q}) = q$, we have
\begin{align*}
q (w(x) - k) =  \sum_{i=1,2} k_i  \psi(-\xi_{i,q}) e^{-\xi_{i,q}x}.
\end{align*}
Moreover, the right-hand side equals by
(\ref{laplace_exponent_double_exponential})
\begin{align*}
 &\sum_{i=1,2}k_i  \left[ -\mu \xi_{i,q} + \frac 1 2 \sigma^2 (\xi_{i,q})^2 + \lambda \left( \frac {p \eta_-} {\eta_- - \xi_{i,q}} + \frac {(1-p)\eta_+} {\eta_+ + \xi_{i,q}} - 1\right) \right] e^{-\xi_{i,q}x} \\
&=- \lambda (w(x)-k) + \sum_{i=1,2}  k_i  \left[ - \mu \xi_{i,q} + \frac 1 2 \sigma^2 (\xi_{i,q})^2 + \lambda \left( \frac {p \eta_-} {\eta_- - \xi_{i,q}} + \frac {(1-p)\eta_+} {\eta_+ + \xi_{i,q}} \right) \right] e^{-\xi_{i,q}x} \\
&= \frac 1 2 \sigma^2 w''(x) + \mu w'(x) - \lambda w(x) + \lambda k  + \lambda \sum_{i=1,2} k_i  \left( \frac {p \eta_-} {\eta_- - \xi_{i,q}} + \frac {(1-p)\eta_+} {\eta_+ + \xi_{i,q}} \right) e^{-\xi_{i,q}x} \\
&= \mathcal{L} w(x)- \lambda \left[ \int_{-\infty}^\infty  w(x+z)  f(z) \diff z  - \left( k + \sum_{i=1,2} k_i  \left( \frac {p \eta_-} {\eta_- - \xi_{i,q}} + \frac {(1-p)\eta_+} {\eta_+ + \xi_{i,q}} \right) e^{-\xi_{i,q}x} \right) \right],
\end{align*}
as desired.
\end{proof}

\begin{proof}[Proof of Lemma \ref{lemma_verification} when $q> 0$]

(i) Suppose $A^* > 0$.
By  Lemma \ref{lemma_inequality_generator} above, we have $\mathcal{L} \phi(x)-q \phi(x)$ equals, for every $x > {A^*}$,
\begin{align}
\lambda \left[ \int_{-\infty}^\infty  \phi(x+z)  f(z) \diff z  -  \sum_{i=1,2} (L_{i,q} - C_{i,q})  \left( \frac {p \eta_-} {\eta_- - \xi_{i,q}} + \frac {(1-p)\eta_+} {\eta_+ + \xi_{i,q}} \right) e^{-\xi_{i,q}x}  \right], \label{difference_l_q_1}
\end{align}
and, for every $0 < x < {A^*}$,
\begin{align}
- (q + \lambda) \frac \gamma q  + \lambda \left[ \int_{-\infty}^\infty
\phi(x+z)  f(z) \diff z  +
\sum_{i=1,2} C_{i,q} \left( \frac {p \eta_-} {\eta_- - \xi_{i,q}} +
\frac {(1-p)\eta_+} {\eta_+ + \xi_{i,q}} \right) e^{-\xi_{i,q}x}
\right]  \label{difference_l_q_2}
\end{align}
by using (\ref{eq:G-expression}) and  (\ref{def:L}).

\underline{Proof of (\ref{equality_continuation})}.
We only need to show (\ref{difference_l_q_1}) equals zero.  Notice that the integral can be split into four parts, and, by using (\ref{about_C})-(\ref{about_C_2}), we have
\begin{align*}
\int_0^\infty \phi(x+z) f(z) \diff z &= \sum_{i = 1,2} (L_{i,q} - C_{i,q}) \frac {(1-p)\eta_+} {\eta_+ + \xi_{i,q}}   e^{-\xi_{i,q}x}, \\
\int_{-(x-{A^*})}^0 \phi(x+z) f(z) \diff z &= \sum_{i = 1,2} (L_{i,q} - C_{i,q}) \frac {p \eta_-} {\eta_- - \xi_{i,q}} e^{-\xi_{i,q}x} - \sum_{i = 1,2} (L_{i,q} - C_{i,q}) \frac {p \eta_-} {\eta_- - \xi_{i,q}}  e^{- \eta_- (x-{A^*}) - \xi_{i,q}{A^*} }, \\
\int^{-(x-{A^*})}_{-x} \phi(x+z) f(z) \diff z &= \frac {p \gamma} q e^{- \eta_- (x-{A^*})}  - \sum_{i = 1,2} C_{i,q} \frac {p \eta_-} {\eta_- - \xi_{i,q}} e^{- \eta_- (x-{A^*}) - \xi_{i,q} {A^*} },  \\
\int_{-\infty}^{-x}  \phi(x+z) f(z) \diff z &= p e^{-\eta_- x}.
\end{align*}
Putting altogether, (\ref{difference_l_q_1}) equals
\begin{multline*}
- \sum_{i = 1,2} L_{i,q} \frac {p \eta_-}
{\eta_- - \xi_{i,q}}  e^{- \eta_- (x-{A^*}) -  \xi_{i,q}{A^*}} + {
\frac {p\gamma} q} e^{- \eta_- (x-{A^*})} + p e^{-\eta_- x} \\ = p e^{-\eta_- x} \left[ 1 - {\frac \gamma q} \frac {\xi_{1,q} \xi_{2,q}} {(\eta_- - \xi_{1,q}) (\xi_{2,q} - \eta_-)} e^{\eta_- {A^*}} \right],
\end{multline*}
and this vanishes because of the way $A^*$ is chosen in \eqref{def_a_tilde_q}.

\underline{Proof of (\ref{inequality_stopping})}. We shall show that (\ref{difference_l_q_2}) is decreasing in $x$. Note that $\int_{-\infty}^\infty \phi(x+z)  f(z) \diff z$ can be split into four parts with
\begin{align*}
\int_{{A^*}-x}^\infty \phi(x+z)  f(z) \diff z &= (1-p) \sum_{i=1,2} (L_{i,q} - C_{i,q}) \frac {\eta_+} {\xi_{i,q} + \eta_+} e^{-\eta_+(A^*-x)  -  \xi_{i,q}{A^*}},  \\
\int_{0}^{{A^*}-x} \phi(x+z)  f(z) \diff z &= (1-p) \left[  \frac \gamma q \left( 1 - e^{-\eta_+ ({A^*} - x)} \right) - \sum_{i=1,2}  C_{i,q}  \frac {\eta_+} {\xi_{i,q} + \eta_+} \left(  e^{- x \xi_{i,q}} -  e^{-\eta_+(A^*-x)  -  \xi_{i,q}{A^*}}\right) \right], \\
\int_{-x}^0 \phi(x+z)  f(z) \diff z &= \frac {p \gamma} q - p \sum_{i=1,2}  C_{i,q} \frac {\eta_-} {\eta_- - \xi_{i,q}} e^{- \xi_{i,q} x}, \\
\int_{-\infty}^{-x} \phi(x+z)  f(z) \diff z &= p e^{-\eta_- x},
\end{align*}
and after some algebra, we have
\begin{multline*}
\int_{-\infty}^\infty \phi(x+z)  f(z) \diff z  + \sum_{i=1,2} C_{i,q} \left( \frac {p \eta_-} {\eta_- - \xi_{i,q}} + \frac {(1-p)\eta_+} {\eta_+ + \xi_{i,q}} \right) e^{- \xi_{i,q}x} \\
=  \frac \gamma q  + p e^{-\eta_- x} + (1-p) e^{\eta_+ (x- {A^*})} \left(-  \frac \gamma q +  \sum_{i=1,2} L_{i,q} \frac {\eta_+} {\xi_{i,q} + \eta_+} e^{-  \xi_{i,q}{A^*}}   \right),
\end{multline*}
which by (\ref{def:L}) equals to
\begin{align*}
 \frac \gamma q  + p e^{-\eta_- x} + (1-p) e^{-\eta_+ ({A^*}-x)} \frac \gamma {\xi_{2,q}-\xi_{1,q}} \frac { \xi_{1,q} \xi_{2,q}} q \left(\frac 1 {\xi_{2,q} + \eta_+} - \frac 1 {\xi_{1,q} + \eta_+} \right).
\end{align*}
Hence we see that  (\ref{difference_l_q_2}) or $\mathcal{L} \phi(x) - q \phi(x)$ equals
\begin{align*}
-\gamma + \lambda \left[  p e^{-\eta_- x} - (1-p) e^{- \eta_+ ({A^*}-x)} \frac { \xi_{1,q} \xi_{2,q}} q \frac \gamma {(\xi_{2,q} + \eta_+) (\xi_{1,q} + \eta_+)} \right]
\end{align*}
and is therefore decreasing in $x$ on $(0,A^*)$.  We now only need to show that $\lim_{x \uparrow A^*} \left( \mathcal{L} \phi(x) - q \phi(x) \right) > 0$.

For every $x > A^*$,
\begin{align*}
\delta''(x) = \frac \gamma q \frac {\xi_{1,q} \xi_{2,q}}{\xi_{2,q} - \xi_{1,q}}  \left[  \xi_{1,q} e^{-\xi_{1,q} (x - {A^*}) }  -  {\xi_{2,q}}  e^{-\xi_{2,q} (x - {A^*}) } \right],
\end{align*}
and hence, after taking $x \downarrow A^*$, \begin{align*}
\delta''({A^*+}) =  - \frac \gamma q \xi_{1,q} \xi_{2,q} < 0.
\end{align*}
Consequently, by the continuous and smooth fit conditions $\delta(A^*+) = \delta'(A^*+) = 0$ and the definition of $\mathcal{L}$ in (\ref{generator_double_exponential}), we have
\begin{align*}
 \lim_{x \uparrow {A^*}} \left( \mathcal{L} \phi(x) - q \phi(x) \right) > \lim_{x \downarrow {A^*}} \left( \mathcal{L} \phi(x) - q \phi(x) \right) = 0,
\end{align*}
as desired.

(ii) Suppose $A^* = 0$. By (\ref{v_negative_case}), we have
\begin{align*}
\phi(x)  = C \left( e^{-\xi_{1,q} x} - e^{-\xi_{2,q} x} \right), \quad x > 0
\end{align*}
where
\begin{align*}
C := \frac {(\eta_- - \xi_{1,q})(\xi_{2,q}-\eta_-)} {\eta_- (\xi_{2,q} - \xi_{1,q})}.
\end{align*}
By  Lemma \ref{lemma_inequality_generator} above, we have $\mathcal{L} \phi(x)-q \phi(x)$ equals, for every $x > 0$,
\begin{align*}
\lambda \left[ \int_{-\infty}^\infty  \phi(x+z)  f(z) \diff z  -  \sum_{i=1,2} C \left( \frac {p \eta_-} {\eta_- - \xi_{i,q}} + \frac {(1-p)\eta_+} {\eta_+ + \xi_{i,q}} \right) e^{-\xi_{i,q}x}  \right],
\end{align*}
and this vanishes after some algebra.
\end{proof}

\begin{proof}[Proof of Lemma \ref{lemma_verification} when $q= 0$]  We extend the results above to the case $q=0$ and $\overline{u} < 0$.  Solely in this proof, let us emphasize the dependence on $q$ and use $A^*_q$, $\phi_q(\cdot)$, $G_q(\cdot)$, $\delta_q(\cdot)$, and $\delta_{A,q}(\cdot)$ with a specified discount rate $q \geq 0$.  We shall show that $\mathcal{L} \phi_0 (x) = \lim_{q \rightarrow 0}\mathcal{L} \phi_q(x)$ for all $x > 0$.  First notice that $A^*_q \rightarrow A^*_0$ as $q \rightarrow
0$. 

Clearly, $G_q(x) \rightarrow G_0(x)$ by the monotone convergence theorem.  Furthermore, we have
\begin{align*}
G_q'(x) &= \frac \gamma {q \eta_-} \xi_{1,q} \xi_{2,q} \sum_{i = 1,2} l_{i,q} e^{-\xi_{i,q}x} \xrightarrow{q \downarrow 0} \frac \gamma {|\overline{u}|} \left[ 1 + \frac {\xi_{2,0} - \eta_-} {\eta_-} e^{- \xi_{2,0}x}\right] = G_0'(x), \\
G_q''(x) &= -\frac \gamma {q \eta_-} \xi_{1,q} \xi_{2,q} \sum_{i = 1,2}\xi_{i,q} l_{i,q} e^{-\xi_{i,q}x} \xrightarrow{q \downarrow 0} - \frac \gamma {|\overline{u}|} \left[\frac {(\xi_{2,0} - \eta_-) \xi_{2,0}} {\eta_-} e^{- \xi_{2,0}x}\right] = G_0''(x).
\end{align*}

Fix $x > A_0^*$.  We suppose $A_0^* = 0$ and focus on $q \in [0,q_0]$ with $q_0 > 0$ sufficiently small such that $A_q^* = 0$ for all $0 \leq q \leq q_0$. We have, by applying the monotone convergence theorem on \eqref{v_negative_case}, $\phi_q(x) \rightarrow \phi_0(x)$ as $q \rightarrow 0$.  Furthermore,
\begin{align*}
\phi'_q(x) =  \frac {(\eta_- - \xi_{1,q})(\xi_{2,q}-\eta_-)} {\eta_- (\xi_{2,q} - \xi_{1,q})} \left(-\xi_{1,q} e^{-\xi_{1,q} x} + \xi_{2,q} e^{-\xi_{2,q} x} \right) \xrightarrow{q \downarrow 0}  \xi_{2,0}e^{-\xi_{2,0} x} = \phi_0'(x), \\
\phi_q''(x) =  \frac {(\eta_- - \xi_{1,q})(\xi_{2,q}-\eta_-)} {\eta_- (\xi_{2,q} - \xi_{1,q})} \left(\xi_{1,q}^2 e^{-\xi_{1,q} x} - \xi_{2,q}^2 e^{-\xi_{2,q} x} \right) \xrightarrow{q \downarrow 0}  -\xi_{2,0}^2 e^{-\xi_{2,0} x} = \phi_0''(x).
\end{align*}
Suppose $A_0^* > 0$ and focus on $q \in [0,q_0]$ with $q_0 > 0$ sufficiently small such that $x > A_q^*$ for all $0 \leq q \leq q_0$.  Note that
\begin{align*}
|\delta_q(x) - \delta_0(x)| \leq |\delta_{q}(x) - \delta_{A_q^*,0}(x)| + |\delta_{A_q^*,0}(x) - \delta_{0}(x)|
\end{align*}
where on the right-hand side the former vanishes as $q \rightarrow 0$ by the monotone convergence theorem in view of \eqref{diff_v_g} and the latter vanishes because $A_q^* \rightarrow A_0^*$; hence $\delta_q(x) \rightarrow \delta_0(x)$ as $q \rightarrow 0$. Furthermore, 
\begin{align*}
\delta_q'(x) &= \frac \gamma q \frac {\xi_{1,q} \xi_{2,q}} {\xi_{2,q}- \xi_{1,q}} \left[ - e^{- \xi_{1,q}(x -A^*_q)} +  e^{- \xi_{2,q}(x-A^*_q)}\right] \xrightarrow{q \downarrow 0} -\frac \gamma {|\overline{u}|} \left[ 1 -  e^{- \xi_{2,0}(x-A^*_0)} \right] = \delta_0'(x), \\
\delta_q''(x) &= \frac \gamma q \frac {\xi_{1,q} \xi_{2,q}} {\xi_{2,q}- \xi_{1,q}} \left[ \xi_{1,q} e^{- \xi_{1,q}(x-A^*_q)} - \xi_{2,q}  e^{- \xi_{2,q}(x-A^*_q)}\right] \xrightarrow{q \downarrow 0} - \frac \gamma {|\overline{u}|} \xi_{2,0}  e^{- \xi_{2,0}(x-A^*_0)} = \delta_0''(x).
\end{align*}


In summary, we have   $\phi_q(x) \rightarrow \phi_0(x)$, $\phi_q'(x) \rightarrow \phi_0'(x)$ and $\phi_q''(x) \rightarrow \phi_0''(x)$ as $q \rightarrow 0$ for every $x > 0$.  Moreover, by Remark \ref{remark:phi_bounded}-(3), via the dominated convergence theorem,
\begin{align*}
\int_{-\infty}^\infty \phi_0(x+z) f(z) \diff z = \lim_{q \rightarrow \infty}\int_{-\infty}^\infty \phi_q(x+z) f(z) \diff z, \quad x \geq 0.
\end{align*}
Consequently, $\lim_{q \rightarrow 0} \left( \mathcal{L} \phi_q(x) - q \phi_q(x) \right) = \mathcal{L} \phi_0(x)$.  This together with the result for $q > 0$ shows the claim.
\end{proof}  

\subsection{Proof of Lemma \ref{lemma_measure}}
Because $h$ is continuous on $(0,\infty)$, it is Borel measurable.  Hence there exists a converging sequence of simple functions $(h^{(n)})_{n\in \N}$ increasing to $h$ in the form
\begin{align*}
h^{(n)} (y) := \sum_{i=1}^{l(n)} b_i^{(n)} 1_{\left\{ y \in
B_{n,i}\right\}}, \quad n \geq 1,
\end{align*}
for some $l: \mathbb{N} \rightarrow \mathbb{N}$, $\{ b_i^{(n)}; n \geq 1, 0 \leq i \leq l(n) \}$ and Borel measurable sets $\left\{B_{n,i}; n \geq 1, 0 \leq i \leq l(n)
\right\}$. See page 99 of Cinlar and Vanderbei \cite{Cinlar_Vanderbei}.

Then the right-hand side of (\ref{eq:M-integral}) is, by the monotone convergence theorem,
\begin{multline*}
\int_{\R} M^{(A,q)}(\omega,\diff y) h(y)  =  \int_{\R} M^{(A,q)}(\omega,\diff y) \lim_{n \rightarrow \infty} \sum_{i=1}^{l(n)} b_i^{(n)} 1_{\left\{ y \in B_{n,i}\right\}} \\
= \lim_{n \rightarrow \infty} \sum_{i=1}^{l(n)} b_i^{(n)} \int_{\R}
M^{(A,q)}(\omega,\diff y)  1_{\left\{ y \in B_{n,i}\right\}} = \lim_{n
\rightarrow \infty} \sum_{i=1}^{l(n)} b_i^{(n)} M^{(A,q)}(\omega,
B_{n,i}).
\end{multline*}
This is indeed equal to the left-hand side of (\ref{eq:M-integral}) because, by the monotone
convergence theorem,
\begin{multline*}
\int_0^{\tau_A(\omega)} e^{-qt} h(X_t(\omega))\diff t
= \int_0^{\tau_A(\omega)} e^{-qt} \lim_{n \rightarrow \infty} \sum_{i=1}^{l(n)} b_i^{(n)} 1_{\left\{ X_t(\omega) \in B_{n,i}\right\}}\diff t \\
= \lim_{n \rightarrow \infty}\sum_{i=1}^{l(n)} b_i^{(n)}
\int_0^{\tau_A(\omega)} e^{-qt}  1_{\left\{ X_t(\omega) \in
B_{n,i}\right\}}\diff t = \lim_{n \rightarrow \infty} \sum_{i=1}^{l(n)}
b_i^{(n)} M^{(A,q)}(\omega, B_{n,i}),
\end{multline*}
as desired.

\subsection{Proof of Lemma \ref{violation_risk_spectrally_negative}}
Let  $N(\cdot, \cdot)$ be the Poisson random measure for $-X$ and $\underline{X}_t := \min_{0 \leq s \leq t} X_s$ for all $t \geq 0$. By
the compensation formula  (see, e.g., Theorem 4.4 in Kyprianou
\cite{Kyprianou_2006}), we have
\begin{align*}
R^{(q)}_x (\tau_A)
&= \E^x \left[ \int_0^\infty  \int_0^\infty N(\diff t, \diff u) e^{ -q t} 1_{\{X_{t-} - u \leq 0, \,  \underline{X}_{t-} > A \}}\right] \\
&= \E^x \left[ \int_0^\infty  e^{ -q t} \diff t \int_0^\infty \Pi (\diff u) 1_{\{ X_{t-} - u \leq 0, \,  \underline{X}_{t-} > A \}}\right] \\ &= \int_0^\infty \Pi (\diff u) \int_0^\infty \diff t \left[ e^{ -q t} \p^x \{X_{t-} \leq u,  \underline{X}_{t-} > A \} \right] \\
&= \int_0^\infty \Pi (\diff u) \int_0^\infty \diff t \left[ e^{ -q t} \p^x \{X_{t-} \leq u, \tau_A \geq t \} \right].
\end{align*}
 By using the $q$-resolvent kernel that appeared in Lemma \ref{lem:3}, we have
for $u
> A$
\begin{align*}
\int_0^\infty \diff t \left[ e^{ -q t} \p^x \{X_{t-} \leq u, \tau_A \geq t \} \right] &= \int_A^{u} \diff y \left[ e^{-\zeta_q (y-A)} W^{(q)} (x-A) -  W^{(q)} (x-y)\right] \\
&= \int_0^{u-A} \diff z \left[ e^{-\zeta_q z} W^{(q)} (x-A) -  W^{(q)} (x-z-A)\right] \\
&=  \frac 1 {\zeta_q} W^{(q)}(x-A)  \left( 1 - e^{-\zeta_q(u-A)}\right) - \int_0^{u-A} \diff z W^{(q)}(x-z-A),
\end{align*}
and it is zero on $0 \leq u \leq A$.
Substituting this, we have
\begin{align*}
R^{(q)}_x (\tau_A) =
\int_A^\infty \Pi (\diff u)  \left\{ \frac 1 {\zeta_q} W^{(q)}(x-A) \left( 1 -  e^{-\zeta_q (u-A)} \right) - \int_0^{u-A}  \diff z  W^{(q)} (x-A-z) \right\}.
\end{align*}
By \eqref{eq:at-zero}, we have
\begin{align*}
R^{(q)}_x (\tau_A)
&=
\frac 1 {\zeta_q} W^{(q)}(x-A) \int_A^\infty \Pi (\diff u) \left( 1 -  e^{-\zeta_q (u-A)} \right) \\ &\qquad - \int_A^x \Pi (\diff u)  \int_0^{u-A}  \diff z   W^{(q)} (x-A-z)  - \int_x^\infty \Pi (\diff u)  \int_0^{x-A}  \diff z   W^{(q)} (x-A-z) \\
&=
\frac 1 {\zeta_q} W^{(q)}(x-A) \int_A^\infty \Pi (\diff u) \left( 1 -  e^{-\zeta_q (u-A)} \right)  \\ &\qquad - \frac 1 q \int_A^x \Pi (\diff u)   \left( Z^{(q)}(x-A) - Z^{(q)} (x-u) \right)   - \frac 1 q \int_x^\infty \Pi (\diff u)   \left( Z^{(q)}(x-A) - 1 \right) \\
&=
\frac 1 {\zeta_q} W^{(q)}(x-A) \int_A^\infty \Pi (\diff u) \left( 1 -  e^{-\zeta_q (u-A)} \right) - \frac 1 q \int_A^\infty \Pi (\diff u)  \left( Z^{(q)}(x-A) - Z^{(q)} (x-u) \right),
\end{align*}
as desired.

\subsection{Proof of Lemma \ref{lemma_difference_spectrally_negative}}
Fix $0 < A < x$.  We have
\begin{align*}
\frac \partial {\partial A} R_x^{(q)}(\tau_A) &= \left( W^{(q)}(x-A) - \frac 1 {\zeta_q} W^{(q)'} (x-A) \right)  \int_{A}^\infty \Pi(\diff u) \left( 1 - e^{-\zeta_q (u-A)}\right) \\ &= - \frac 1 {\zeta_q}  W_{\zeta_q}'(x-A) e^{\zeta_q (x-A)}  \int_{A}^\infty \Pi(\diff u) \left( 1 - e^{-\zeta_q (u-A)}\right),
\end{align*}
and because we can write
\begin{align*}
\E^{x}\left[ \int_0^{\tau_{A}} e^{-qt} h(X_t) \diff t \right] = e^{\zeta_q x} W_{\zeta_q} (x-A) \int_A^\infty e^{-\zeta_q y} h(y) \diff y - \int_A^x W^{(q)}(x-y) h(y) \diff y,
\end{align*}
we have
\begin{align*}
&\frac {\partial} {\partial A} \E^{x}\left[ \int_0^{\tau_{A}} e^{-qt} h(X_t) \diff t \right] \\ &= - W_{\zeta_q}'(x-A) e^{\zeta_qx} \int_A^\infty e^{-\zeta_qy}   h(y) \diff y - W_{\zeta_q}(x-A) e^{\zeta_q (x-A)} h(A) + W^{(q)} (x-A) h(A) \\
&= - W_{\zeta_q}'(x-A) e^{\zeta_q (x-A)} \int_0^\infty e^{-\zeta_qy}   h(y+A) \diff y.
\end{align*}
Summing up these, we obtain
\begin{align*}
\frac {\partial} {\partial A} \delta_A(x) = - W_{\zeta_q}'(x-A) e^{\zeta_q (x-A)} \Phi(A).
\end{align*}
Here, because $W_{\zeta_q}'(x-A)>0$ and $\Phi(A)$ is decreasing in $A$ and attains zero at $A^*$, we have
\begin{align}
\frac {\partial} {\partial A} \delta_A(x)  > 0 \Longleftrightarrow \Phi(A) < 0 \Longleftrightarrow A > A^*. \label{equivalence_derivative_a}
\end{align}

Now suppose $A^* > 0$, we have
\begin{align*}
\phi(x) = G(x) + \delta_{A^*} (x) \leq G(x) + \lim_{A \uparrow x} \delta_A(x) \leq G(x), \quad x > A^*
\end{align*}
where the last inequality holds because continuous fit holds
everywhere for the unbounded variation case and because $\lim_{A
\uparrow x}\delta_A(x) = W^{(q)}(0) \Phi(x) < 0$ by \eqref{equivalence_derivative_a} for the bounded
variation case (by noting that $x > A^*$). The case with $A^*=0$
holds in the same way by the definition that $\phi(x) =
\lim_{\varepsilon \downarrow 0} \phi_\varepsilon(x)$. Finally,
because $\phi=G$ on $(-\infty,A^*]$, the proof is complete.

\subsection{Proof of Lemma \ref{generator_smaller_than_zero_spectrally_negative}}
(1) When $x>A^*$, $\phi$ is defined in \eqref{phi_spectrally_negative_positive}.
Let $\widetilde{\phi}$ be defined such that $\widetilde{\phi}(x) =
\phi(x)$ for all $x > 0$ and $\widetilde{\phi}(x)=0$ for all $x \leq
0$.  We obtain
\begin{align*}
\mathcal{L} \widetilde{\phi}(x) - q \widetilde{\phi}(x) = \frac 1 q
\left( \int_x^\infty \Pi(\diff u) \mathcal{L} Z^{(q)} (x-u) - q
\int_x^\infty \Pi(\diff u) Z^{(q)} (x-u) \right)  =
-\Pi(x,\infty).
\end{align*}
Here the first equality holds by \eqref{scale_generator_zero} and because the operator $\mathcal{L}$ can go into the integrals thanks to the fact that $Z^{(q)}$ is $C^1$ everywhere and
$C^2$ on $\R \backslash \{0\}$ for the unbounded variation case and
it is $C^0$ everywhere and $C^1$ on $\R \backslash \{0\}$ for the
bounded variation case, and also to the fact that $x > A^*$.  The second equality holds by \eqref{eq:at-zero}. Because
$\widetilde{\phi}(x) = \phi(x)$ and $\mathcal{L}\phi (x) - \mathcal{L}
\widetilde{\phi}(x) = \Pi(x,\infty)$ for every $x > 0$, we have the claim.

(2) Suppose $0 < x < A^*$. We can write
\begin{align}
\phi(x) = \gamma W^{(q)}(x) \int_0^\infty e^{-\zeta_q y} h(y) \diff y  - \gamma   \int_0^x W^{(q)}(x-y) h(y) \diff y  + L(x), \quad x \in (-\infty, A^*) \backslash \{0\} \label{spec_negative_phi_stopping}
\end{align}
where $L(x) = 1_{\{x\leq 0 \}}$ for every $x \in \R$.  After applying $(\mathcal{L}-q)$, the first term vanishes thanks to \eqref{scale_generator_zero}.  For the second
term, by integration by parts, 
\begin{align*}
q \int_0^x W^{(q)}(x-y) h(y) \diff y &= \left[ h(y) (Z^{(q)}(x)-Z^{(q)}(x-y))\right]_{y=0}^{y=x}  - \int_0^x h'(y) (Z^{(q)}(x)-Z^{(q)}(x-y)) \diff y  \\
&=   h(x)  (Z^{(q)}(x)-1)   - \int_0^x h'(y) (Z^{(q)}(x)-Z^{(q)}(x-y)) \diff y  \\
&=   h(x)  (Z^{(q)}(x)-1)   - (h(x)-h(0))  {Z^{(q)}(x)} + \int_0^x h'(y)  {Z^{(q)}(x-y)} \diff y  \\
&=   - {h(x)}    +h(0)  {Z^{(q)}(x)} + \int_0^x h'(y) {Z^{(q)}(x-y)} \diff y  \\
&=  -  {h(M)}   + h(0) {Z^{(q)}(x)} +  \int_0^{M} h'(y)  {Z^{(q)}(x-y)} \diff y,
\end{align*}
for any $M > x$ where the last equality holds because $Z^{(q)}(x) = 1$ on
$(-\infty,0]$. 
The operator $(\mathcal{L}-q)$ can again go into the integral thanks to the smoothness of $Z^{(q)}$ as discussed in (1) and we obtain 
\begin{multline*}
(\mathcal{L}-q) \left( \int_0^x W^{(q)}(x-y) h(y) \diff y \right) = \frac 1 q (\mathcal{L}-q) \left( - {h(M)} +  \int_0^{M} h'(y) Z^{(q)}(x-y) \diff y \right)\\  = h(M) + \frac 1 q \int_0^{M} h'(y) (\mathcal{L}-q)  Z^{(q)}(x-y) \diff y =  h(M) - \int_x^{M} h'(y) \diff y =  h(x),
\end{multline*}
where the second to last equality holds by \eqref{scale_generator_zero}.
For the last term of \eqref{spec_negative_phi_stopping}, we have
\begin{align*}
(\mathcal{L}-q) L(x) = \int_0^\infty \left( L(x-u) - L(x) \right) \Pi(\diff u)=  \Pi(x,\infty).
\end{align*}

Putting altogether, we have noting that $x < A^*$ and $h$ is increasing,
\begin{align*}
(\mathcal{L}-q) \phi(x) &= \Pi(x,\infty) - \gamma h(x) \\ &\geq \int_{A^*}^\infty \Pi(\diff u) \left( 1 - e^{-\zeta_q (u-A^*)}\right) - \gamma h(x) \\
&= \int_{A^*}^\infty \Pi(\diff u) \left( 1 - e^{-\zeta_q (u-A^*)}\right) - \gamma \zeta_q \int_0^\infty e^{-\zeta_q y} h(x) \diff y \\ &\geq \int_{A^*}^\infty \Pi(\diff u) \left( 1 - e^{-\zeta_q (u-A^*)}\right) - \gamma \zeta_q \int_0^\infty e^{-\zeta_q y} h(A^*+y) \diff y,
\end{align*}
which is zero because $A^*$ satisfies \eqref{condition_spectrally_negative}.  This completes the proof.

\subsection{Proof of Proposition \ref{proposition_optimality_spectrally_negative}} \label{proof_proposition_optimality_spectrally_negative}
Due to the discontinuity of the value function at zero, we need to proceed carefully.  By \eqref{assumption_G} and Lemma \ref{lemma_difference_spectrally_negative}, we must have $1 = \phi(0) > \phi(0+)$.

We first construct a sequence of functions $\phi_n(\cdot)$ such that
(1) it is $C^2$ (resp.\ $C^1$) everywhere except at $A^*$ when $\sigma > 0$ ($\sigma = 0$), (2) $\phi_n(x) = \phi(x)$ on $x \in
(0,\infty)$ and (3) $\phi_n(x) \uparrow \phi(x)$ pointwise for every
fixed $x \in (-\infty,0)$ (with $\lim_{n \rightarrow \infty}\phi_n(0) = \phi(0+) < \phi(0)$).  It
can be shown along the same line as Remark \ref{remark:phi_bounded}-(3)
that $\phi(\cdot)$ is uniformly bounded.  Hence, we can choose so
that $\phi_n$ is also uniformly bounded for every $n \geq 1$. 

Because
$\phi'(x)=\phi'_n(x)$ and $\phi''(x)=\phi''_n(x)$ on $(0,\infty)
\backslash \{A^*\}$, we have $(\mathcal{L}-q) (\phi_n - \phi)(x) \leq 0$
for every fixed $x \in
(0,\infty) \backslash \{A^*\}$. 
Furthermore, by Lemma \ref{generator_smaller_than_zero_spectrally_negative}
\begin{align}
\E^x \left[ \int_0^{\tau} e^{-qs} ((\mathcal{L}-q) \phi_n(X_{s-}))
\diff s\right] \geq \E^x \left[ \int_0^{\tau} e^{-qs}
((\mathcal{L}-q) (\phi_n-\phi) (X_{s-})) \diff s\right]  > -\infty,
\quad \tau \in \S. \label{upper_bound_generator_v_n}
\end{align}
Here, the last lower bound is obtained because
\begin{align*}
\E^x \left[ \int_0^{\tau} e^{-qs} ((\mathcal{L}-q)
(\phi_n-\phi) (X_{s-})) \diff s\right]  \geq -K  \E^x \left[
\int_0^{\theta} e^{-qs} \Pi(X_{s-},\infty) \diff s\right]
\end{align*}
where $K < \infty$ is the maximum difference between $\phi$ and
$\phi_n$. Using $N$ as the Poisson random measure for $-X$ and $\underline{X}$ as the running infimum of $X$ as in the proof of Lemma \ref{violation_risk_spectrally_negative}, we have
by the compensation formula
\begin{align*}
\E^x \left[ \int_0^{\theta} e^{-qs} \Pi(X_{s-},\infty) \diff s\right] &=
\E^x \left[ \int_0^\infty \int_0^\infty e^{-qs} 1_{\{\theta \geq s,
\, u > X_{s-}\}} \Pi(\diff u) \diff s \right] \\ 
&= \E^x \left[ \int_0^\infty \int_0^\infty e^{-qs} 1_{\{\underline{X}_{s-} > 0,
\, u > X_{s-}\}} \Pi(\diff u) \diff s \right] \\ &= \E^x \left[
\int_0^\infty \int_0^\infty e^{-qs} 1_{\{\underline{X}_{s-} > 0, \, u >
X_{s-}\}} N (\diff s, \diff u) \right] \\ &= \E^x \left[ e^{-q
\theta} 1_{\{ X_\theta < 0, \, \theta < \infty \}}\right] < 1.
\end{align*}
By \eqref{upper_bound_generator_v_n}, we have uniformly in $n$
\begin{align}
\begin{split}
&\E^x \left[ \int_0^{\tau} e^{-qs}
|(\mathcal{L}-q) (\phi_n-\phi) (X_{s-})| \diff s\right] < \infty, \\
&\int_0^{\tau} e^{-qs}
|(\mathcal{L}-q) (\phi_n-\phi) (X_{s-})| \diff s < \infty, \quad \p^x-a.s.
\end{split} \label{dominate_two}
\end{align}
We remark here for the proof of Proposition \ref{proposition_optimality_case_1} that, in the double exponential case with $q=0$, there also exists a finite bound because the \lev measure is a finite measure and $\E^x \theta < \infty$ by assumption.

Notice that, although $\phi_n$ is not $C^2$ (resp.\ $C^1$) at $A^*$ for the case $\sigma > 0$ (the case of bounded variation), the Lebesgue measure of the
set where $\phi_n$ at which $X=A^*$ is zero and hence
$\phi_n''(A^*)$ ($\phi_n'(A^*)$) can be chosen arbitrarily; see also Theorem 2.1 of \cite{sulem}. By applying Ito's formula to
$\left\{ e^{-q {(t \wedge \theta)}} \phi_n(X_{t \wedge \theta }); t
\geq 0 \right\}$, we see that
\begin{align}
\left\{ e^{-q {(t \wedge \theta)}} \phi_n(X_{t \wedge \theta }) - \int_0^{t \wedge \theta } e^{-qs}\left( (\mathcal{L} - q) \phi_n (X_{s-})  \right) \diff s; \quad t \geq 0 \right\} \label{local_martingale}
\end{align}
is a local martingale.  Suppose $\left\{\sigma_k; k \geq 1 \right\}$ is the corresponding localizing sequence, namely,
\begin{align*}
\E^x \left[ e^{-q {(t \wedge \theta \wedge \sigma_k)}} \phi_n(X_{t \wedge \theta \wedge \sigma_k}) \right] = \phi_n(x) + \E^x \left[  \int_0^{t \wedge \theta \wedge \sigma_k} e^{-qs}\left( (\mathcal{L} - q) \phi_n (X_{s-})  \right) \diff s \right], \quad k \geq 1.
\end{align*}
Now by applying the dominated convergence theorem on the left-hand
side and Fatou's lemma on the right-hand side via
\eqref{upper_bound_generator_v_n}, we obtain
\begin{align*}
\E^x \left[ e^{-q {(t \wedge \theta)}} \phi_n(X_{t \wedge \theta}) \right] \geq \phi_n(x) + \E^x \left[  \int_0^{t \wedge \theta } e^{-qs}\left( (\mathcal{L} - q) \phi_n (X_{s-})  \right) \diff s \right].
\end{align*}
Hence  \eqref{local_martingale} is in fact a submartingale.

Now fix $\tau \in \S$. By the optional sampling theorem, 
for any $M \geq 0$, 
\begin{multline*}
\E^x \left[ e^{-q {(\tau \wedge M)}} \phi_n(X_{\tau \wedge M}) \right] \geq \phi_n(x) + \E^x \left[  \int_0^{\tau \wedge M} e^{-qs}\left( (\mathcal{L} - q) \phi_n (X_{s-})  \right) \diff s \right] \\
= \phi_n(x) + \E^x \left[  \int_0^{\tau \wedge M} e^{-qs}\left( (\mathcal{L} - q) \phi (X_{s-})  \right) \diff s \right] + \E^x \left[  \int_0^{\tau \wedge M} e^{-qs}\left( (\mathcal{L} - q) (\phi_n-\phi) (X_{s-})  \right) \diff s \right]
\end{multline*}
where the last equality holds because the expectation can be split
by \eqref{upper_bound_generator_v_n}. Applying the dominated
convergence theorem on the left-hand side and the monotone
convergence theorem on the right-hand side (here the integrands in
the two expectations are positive and negative, respectively) along
with Lemma \ref{generator_smaller_than_zero_spectrally_negative}, we
obtain
\begin{align}
\E^x \left[ e^{-q \tau} \phi_n(X_{\tau}) 1_{\{\tau < \infty \}}\right] \geq \phi_n(x) + \E^x \left[ \int_0^{\tau} e^{-qs} ((\mathcal{L}-q) (\phi_n -\phi)(X_{s-})) \diff s\right]. \label{supermtg_proof}
\end{align}
For the left-hand side, the dominated convergence theorem implies 
\begin{multline}
\lim_{n \rightarrow \infty} \E^x \left[ e^{-q \tau} \phi_n(X_\tau) 1_{\{\tau < \infty \}} \right] =  \E^x \left[ e^{-q \tau} \lim_{n \rightarrow \infty} \phi_n(X_\tau) 1_{\{\tau < \infty \}} \right]  \\ = \E^x \left[ e^{-q \tau} (\phi(X_{\tau }) 1_{\{X_\tau \neq 0\}} + \phi(0+) 1_{\{X_\tau = 0\}}) 1_{\{\tau < \infty \}} \right] \leq \E^x \left[ e^{-q \tau} \phi(X_{\tau }) 1_{\{\tau < \infty \}} \right],
\end{multline}
 which holds by equality when $\sigma = 0$ because $X$ creeps the level zero only when $\sigma > 0$.  For the right-hand side, by \eqref{dominate_two}, 
\begin{align}
\lim_{n \rightarrow \infty} \E^x \left[ \int_0^{\tau} e^{-qs} ((\mathcal{L}-q) (\phi_n -\phi)(X_{s-})) \diff s\right] =  \E^x \left[ \int_0^{\tau} e^{-qs} \lim_{n \rightarrow \infty} ((\mathcal{L}-q) (\phi_n -\phi)(X_{s-})) \diff s\right]. \label{expectation_limit_second}
\end{align}
Here, for every $\p^x$-a.e.\ $\omega \in \Omega$, because $X_{s-}(\omega) > 0$ for Lebesque-a.e.\ $s$ on $(0,\tau(\omega))$
\begin{align*}
\lim_{n \rightarrow \infty} ((\mathcal{L}-q) (\phi_n -\phi)(X_{s-}) (\omega)) = \int_{X_{s-}(\omega)}^\infty \Pi(\diff u) \lim_{n \rightarrow \infty} (\phi_n -\phi)(X_{s-}(\omega)-u) = 0.
\end{align*}
Hence \eqref{expectation_limit_second} vanishes.

Therefore, by taking $n \rightarrow \infty$ on both sides of \eqref{supermtg_proof} (note $\phi(x) = \phi_n(x)$ for any $x > 0$), we have
\begin{align*}
\phi(x)\leq \E^x \left[ e^{-q \tau} \phi(X_{\tau}) 1_{\{\tau < \infty \}} \right] \leq  \E^x \left[ e^{-q \tau}G(X_{\tau}) 1_{\{\tau < \infty \}} \right], \quad \tau \in \S,
\end{align*}
where the last inequality follows from Lemma \ref{lemma_difference_spectrally_negative}.
Finally, the stopping time $\tau_{A^*}$ attains the value function $\phi$ when $A^* > 0$ while $\tau_\varepsilon$ and $\phi_\varepsilon$ approximate $\phi$ by taking $\varepsilon$ sufficiently small when $A^*=0$.  This completes the proof.

\bibliographystyle{abbrv}
\bibliographystyle{apalike}

\bibliographystyle{agsm}
\small{\bibliography{alarmbib}}

\end{document}